\setlength{\topmargin}{.5in} \setlength{\textheight}{217mm}
\setlength{\textwidth}{143 mm} \evensidemargin 0.3in
\oddsidemargin 0.3in \topmargin -.2in \setcounter{page}{1}
\documentclass[11pt,reqno]{amsart}

\usepackage{amsfonts}
\usepackage[sans]{dsfont}

\usepackage{amsmath,amssymb,amsthm}
\usepackage[margin=1.0in]{geometry}
\usepackage{enumerate}
%\marginparwidth 90pt

%%%%%%%%%%%%%%%%%%%%%%%%%%%%%%%%%%%%%%%%%%%%%%%%%%%%%%%%%%%%%%%%%%%%%%%%%%%%%%%%%%%%%%%%%%%%%

\theoremstyle{plain}
\newtheorem{theorem}{Theorem}
\newtheorem{proposition}[theorem]{Proposition}
\newtheorem{lemma}[theorem]{Lemma}
\newtheorem{corollary}[theorem]{Corollary}

\theoremstyle{definition}
\newtheorem{definition}[theorem]{Definition}

\numberwithin{theorem}{section}
\numberwithin{equation}{section}

%%%%%%%%%%%%%%%%%%%%%%%%%%%%%%%%%%%%%%%%%%%%%%%%%%%%%%%%%%%%%%

\newcommand{\R}{\mathbb{R}}
\newcommand{\C}{\mathbb{C}}

\newcommand{\D}{\mathcal{D}}

\def\N{{\mathbb N}}
\newcommand{\CC}{\mathbb {C}}

\newcommand{\cD}{\mathcal{D}}

\newcommand{\cF}{\mathcal{F}}

\newcommand{\cH}{\mathcal{H}}

         %%%%%%%%%%%%%%%%%%%%%%%%
         %                      %
         % Calligraphic Letters %
         %                      %
         %%%%%%%%%%%%%%%%%%%%%%%%

\def\CO{{\mathcal O}}

\newcommand{\eps}{{\varepsilon}}
\newcommand{\ve}{{\varepsilon}}

\newcommand{\e}{{\epsilon}}
\newcommand{\al}{{\alpha}}
\newcommand{\g}{{\gamma}}
\newcommand{\G}{{\Gamma}}
\newcommand{\del}{{\delta}}
\newcommand{\lam}{{\lambda}}           % Frequently used abbreviations   %

\newcommand{\Om}{\Omega}                %%%%%%%%%%%%%%%%%%%%%%%%%%%%%%%%%%%
\newcommand{\om}{\omega}
\newcommand{\s}{{\sigma}}

\newcommand{\fh}{\mathfrak{h}}

\def\CO{{\mathcal O}}
\renewcommand{\d}{\mathrm{d}}

\def\<{\langle}
\def\>{\rangle}

\newcommand{\im}{\operatorname{Im}}
\newcommand{\re}{\operatorname{Re}}
\newcommand{\supp}{\operatorname{supp}}
\newcommand{\ddim}{\operatorname{d\:\!Im}}
\newcommand{\ddre}{\operatorname{d\:\!Re}}

\newcommand{\ran}{\rangle}
\newcommand{\lan}{\langle}
\newcommand{\ra}{\rightarrow}

\newcommand{\Ran}{\operatorname{Ran}}
\newcommand{\dist}{\operatorname{dist}}
\newcommand{\p}{{\partial}}

\newcommand{\diag}{\operatorname{diag}}
%%%%%%%%%%%%%%%%%%%%%%%%%%%%%%%

%
\newcommand{\bfone}{{\bf 1}}
\newcommand{\one}{\mathbf{1}}

\newcommand{\slim}{\mathop{\text{\rm{s-lim}}}}

\newcommand{\DETAILS}[1]{}
%                             %%%%%%%%%%%%%%%%%%%%%%%%%%%%%%%%%%%%
     % Comma and Period to be used to   %
    % end displayed equations.         %
%
\newcommand{\at}{{p}}
\newcommand{\hp}{H_{p}}
\newcommand{\chp}{\mathcal{H}_{p}}

%\font\tenmsb=msbm10 \font\sevenmsb=msbm7 \font\fivemsb=msbm5
%\newfam\msbfam \textfont\msbfam=\tenmsb \scriptfont\msbfam=\sevenmsb
%\scriptscriptfont\msbfam=\fivemsb \def\Bbb#1{{\fam\msbfam #1}}
%\font\teneufm=eufm10 \font\seveneufm=eufm7 \font\fiveeufm=eufm5
%\newfam\eufmfam
%\textfont\eufmfam=\teneufm \scriptfont\eufmfam=\seveneufm
%\scriptscriptfont\eufmfam=\fiveeufm
%\def\frak#1{{\fam\eufmfam\relax#1}}
%\let\goth\frak

\newcounter{foo}
\Roman{foo}
\setcounter{foo}{0}

\def\bigno{\bigskip \noindent}

%%%%%%%%%%%%%%%%%%%%%%%%%%%%%%%%%%%%%%%%%%%%%%%%%%%%%%%%%%%%%%%%

\begin{document}

\title{On quantum Huygens principle and Rayleigh scattering}

\author[J. Faupin]{J{\'e}r{\'e}my Faupin}
\address[J. Faupin]{Institut de Math{\'e}matiques de Bordeaux \\
UMR-CNRS 5251, Universit{\'e} de Bordeaux 1, %\\351 cours de la lib{\'e}ration,
33405 Talence Cedex, France}
\email{jeremy.faupin@math.u-bordeaux1.fr}
\author[I. M. Sigal]{Israel Michael Sigal} %\footnote{ Supported by NSERC Grant No. NA7901.}
\address[I. M. Sigal]{Department of Mathematics \\
University of Toronto, %\\40 St. George Street, Bahen Centre,
Toronto, ON M5S 2E4, Canada}
\email{im.sigal@utoronto.ca}

%\date{February 20, 2012}

\begin{abstract}
We prove several minimal photon velocity estimates below the ionization threshold for a particle system coupled to the quantized electromagnetic or phonon field. Using some of these results, we prove the asymptotic completeness (for the Rayleigh scattering) on the states for which the expectation of the photon number is uniformly bounded.
\end{abstract}

\maketitle

\section{ Introduction}\label{sec:intro}

In this paper we study the long-time dynamics of a non-relativistic particle system coupled to the quantized electromagnetic or phonon field. For energies below the ionization threshold, we prove several lower bounds on the growth of the distance of the escaping photons to the particle system. (Here and in what follows we use the term photon for both photon and phonon.) Using some of these results, we prove the asymptotic completeness (for the Rayleigh scattering) on the states for which the expectation of the photon number is bounded uniformly in time.

\medskip

\noindent \textbf{Model.} The state space for our model is given by ${\cH}:=\chp\otimes \cF$ and the dynamics is generated by the Hamiltonian
\begin{equation}\label{H}
H=\hp + H_f + I(g),
\end{equation}
acting on it. Here $\cH_{p}$ is the particle state  space,  $\cF$ is the bosonic Fock space, $\cF\equiv\G(\fh):=\oplus_0^\infty \otimes_s^n \fh$, based on the one-photon space $\fh :=L^2(\R^3)$, $\hp$ is a self-adjoint particle system Hamiltonian, acting on $\cH_{p}$, and $H_f:=\d \Gamma( \om )$ is the photon Hamiltonian, acting on $\cF$, where $\om= \omega(k)$ is the photon dispersion law ($k$ is the photon wave vector) and $\d\G(b)$ denotes the lifting of a one-photon operator $b$ to the photon Fock space,
\begin{equation}\label{dG}
\d \Gamma( b )_{\vert \otimes_s^n \fh} = \sum_{j=1}^{n} \underbrace{1 \otimes \cdots \otimes 1 }_{j-1} \otimes b \otimes \underbrace{1 \otimes \cdots \otimes 1 }_{n - j}\ .
\end{equation}
Here $\otimes_s^n$ stands for the symmetrized tensor product of $n$ factors (for $n=0$, $\fh$ is replaced by $\C$ and $\d \Gamma ( b ) _{ \vert \C } = 0$). The operator $I(g)$ acts on ${\cH}$ and represents an interaction energy, labeled by  a coupling family $g(k)$ of operators acting on the particle space $\chp$.

For \textit{photons}  $ \omega(k)=|k|$, for \textit{acoustic phonons},  $ \omega(k)\asymp |k|$ for small  $|k|$ and  $ c\le \omega(k)\le c^{-1}$, for some $ c>0$, away from $0$, while for \textit{optical phonons},  $c\le \omega(k)\le c^{-1}$, for some $ c>0$, for all  $k$. To fix ideas we consider below only the most difficult case of $ \omega(k)=|k|$. (For photons, to accommodate their polarizations, the one-boson space $L^2(\R^3)$ should be replaced by $L^2(\R^3 ; \C^2)$, but the resulting modifications are trivial, see e.g. \cite{GuSig, Sig2}.) In the simplest case of linear coupling (the dipole approximation in QED or the phonon models), $I(g)$ is given by
\begin{equation}\label{I}
I(g):= \int (g^* (k)\otimes a(k)+g (k)\otimes a^*(k))dk,
\end{equation}
with $a^*(k)$ and  $a(k)$, the creation and annihilation operators, acting on $\cF$ (see Supplement II for definitions).

A primary model for the particle system to have in mind is an electron in a vacuum or in a solid in an external potential $V$. In this case, $\hp:=\e(p) + V(x),\ p=-i\nabla_x,$ with $\e(p)$ being the standard non-relativistic kinetic energy, $\e(p) = |p|^2\equiv-\Delta_x$ (the Nelson model), or the electron dispersion law in a crystal lattice (a standard model in solid state physics), acting on $\chp:=L^2(\R^3)$, and the coupling family is given by
 \begin{equation}\label{gx}
g (k) = |k|^\mu \xi (k) e^{ikx},
\end{equation}
where $\xi (k)$ is the ultraviolet cut-off.  For phonons,   $\mu=1/2$. To have a self-adjoint operator $\hp$ we assume that $V$ is a Kato potential. A key fact here is that there is a spectral point $\Sigma\in \s(H)$, called the ionization threshold, s.t.  below $\Sigma$, the particle system is well localized:
 \begin{equation}\label{exp-bnd}
\|e^{\del |x|}f(H)\|\lesssim 1 ,
\end{equation}
for any $0\le \del< \dist(\supp f,\Sigma)$ and any   $f \in \mathrm{C}_0^\infty( (-\infty,\Sigma))$, i.e. states  decay exponentially in the particle coordinates $x$ (\cite{Gr, BFS1, BFS2}). This can be easily upgraded to an $N-$body system (e.g. an atom or a molecule, see e.g. \cite{GuSig, Sig2}). Another example -- the \textit{spin-boson model} -- will be defined below.

Finally,  the above can be extended to the standard model of non-relativistic quantum electrodynamics in which particles are minimally coupled to the quantized electromagnetic field, which leads to $I (g)$ being quadratic in the creation and annihilation operators $a^\#(k)$.

\medskip

\noindent \textbf{Problem.} In all above cases,  the Hamiltonian $H$ is self-adjoint and generates the dynamics through the Schr\"odinger equation,
\begin{equation}\label{SE}
 i\p_t\psi_t =H\psi_t.
 \end{equation}
As initial conditions, $\psi_0$, we consider states below the ionization threshold,  $\Sigma$, defined in \eqref{Sigma}, i.e. $\psi_0$ in the range of the spectral projection $E_\Delta(H),\ \Delta:=(-\infty,\Sigma)$. In other words, we are interested in processes, like emission and absorption of radiation, or scattering of photons on an electron bound by an external potential (created e.g. by an infinitely heavy nucleus or impurity of a crystal lattice),  in which the particle system (say, an atom or a molecule) is not being ionized.

Denote by $\Phi_j$ and $E_j$ the eigenfunctions and the corresponding eigenvalues of the hamiltonian $H$, below $\Sigma$, i.e. $E_j < \Sigma$. The following are the key characteristics of evolution of a physical system, in progressive order  the refined information they provide and in our context:
\begin{itemize}

\item
\textit{Local decay} stating that some photons are bound to the particle system while others (if any) escape to infinity, i.e. the probability that they occupy any bounded region of the physical space tends to zero, as $t \rightarrow \infty$.

\item \textit{Minimal photon velocity bound}  with speed $\mathrm{c}$ stating that,  as $t\ra \infty$, with probability $\ra 1$, the photons are  either bound to the particle system or depart from it with the distance $\ge c' t$, for any $c'< \mathrm{c}$.

\noindent Similarly, if the probability that at least one photon is at the distance $\ge c'' t$, $c''> \mathrm{c},$ from the particle system vanishes, as $t\ra \infty$, we say that the evolution satisfies the \textit{maximal photon velocity bound}  with speed $\mathrm{c}$.

\item  \textit{Asymptotic completeness} on the interval $(-\infty, \Sigma)$ stating that, for any $\psi_0\in  \Ran\chi_{(-\infty, \Sigma)}(H),$  and any $\e>0$, there are photon wave functions $f_{j\e}\in \cF$, with a finite number of photons, s.t.  the solution, $\psi_t= e^{ - i t H }\psi_0$, of the Schr\"odinger equation, \eqref{SE}, satisfies
\begin{align}\label{AC}
\limsup_{t\ra \infty}\|e^{ - i t H }\psi_0 -\sum_j e^{ - i E_{j} t} \Phi_{j}
\otimes_s e^{ - i H_f t}f_{j\e}\|\le \e.
\end{align}
(It will be shown in the text that $\Phi_{j}\otimes_s f_{j\e}$ is  well-defined, at least for the ground state ($j=0$).) In other words, for any $\e>0$ and with the probability $\ge 1-\e$, the Schr\"odinger evolution $\psi_t$  approaches asymptotically a superposition of states in which  the particle system with a photon cloud bound to it is in one of its bound states $\Phi_{j}$, with additional photons (or possibly none) escaping to infinity with the velocity of light.

\end{itemize}

The reason for $\e>0$ in \eqref{AC} is that for the state $\Phi_{j} \otimes_s f$ to be well defined, as one would expect, one would have to have a very tight control on the number of photons in $f$, i.e. the number of photons escaping the particle system. (See the remark at the end of Subsection \ref{sec:ACpf} for a more technical explanation.) For massive bosons $\e>0$ can be dropped (set to zero), as the number of photons can be bound by the energy cut-off.

We describe the photon position by  the operator $y := i \nabla_k$ on $L^2( \mathbb{R}^3)$, canonically conjugate to the photon momentum $k$ (see \cite{BoFaSig} for a discussion of the notion of  the photon position in our context). We say that the system obeys the quantum Huygens principle if the Schr\"odinger evolution, $\psi_t=e^{-i t H} \psi_0 $, obeys the estimates
\begin{align}\label{HP}
\int_1^\infty dt\  t^{-\al'}
\| \d {\Gamma} (\chi_{\frac{|y|}{ct^{\al}} =1})^{\frac{1}{2}} \psi_t \|^2
 \lesssim  \| \psi_0 \|_0^2,
\end{align}
for some norm $\| \psi_0 \|_0$, some $0< \al'\le 1$, and for any $\al>0$ and $c>0$ such that either  $\al <1$ or $\al =1$ and $c<1$. In other words there are no photons which either diffuse or propagate with speed $<1$. Here $\chi_{ \Om}$ denotes a smoothed out characteristic function of the set $\Om$, which is defined at the end of the introduction. The \textit{maximal velocity estimate}, as proven in \cite{BoFaSig}, states that, for $\mu>0$, any $ \bar{\mathrm{c}}>1$, and $\gamma < \frac{ \mu }{ 2 } \min ( \frac{ \bar{\mathrm{c}} - 1 }{ 3 \bar{\mathrm{c}} - 1 } , \frac{1}{2+\mu} )$,
\begin{equation}\label{maxvel-est}
\big\Vert \d \Gamma \big( \chi_{ \vert y \vert  \geq \bar{\mathrm{c}} t  } \big)^{\frac{1}{2}} \psi_t  \big\Vert \lesssim t^{- \gamma} \big\Vert ( \d \Gamma ( \< y \> ) + 1 )^{\frac12}  \psi_0 \big\Vert.
\end{equation}

Considerable progress has been made in understanding the asymptotic dynamics of non-relativis\-tic particle systems coupled to  quantized electromagnetic or phonon field. The local decay property was proven in \cite{BFS2, BFSS, GGM1,GGM2, FGSig1, FGSig2, BoFa, CFFS}, by  positive commutator techniques and the combination of the renormalization group  and positive commutator methods. The maximal velocity estimate was proven in \cite{BoFaSig}.

An important breakthrough was achieved recently in \cite{DRK}, where the authors proved relaxation to the ground state and uniform bounds on the number of emitted massless bosons in the spin-boson model.

In scattering theory, asymptotic completeness was proven for (a small perturbation of) a solvable model involving a harmonic oscillator (see \cite{Ar, Sp1}), and for models involving massive boson fields (\cite{DerGer2,  FrGrSchl2, FrGrSchl3, FrGrSchl4}). Moreover, \cite{Ger} obtained some important results for massless bosons. Motivated by the many-body quantum scattering, \cite{DerGer2, Ger, FrGrSchl2, FrGrSchl3, FrGrSchl4} defined main notions of the scattering theory on Fock spaces, such as wave operators, asymptotic completeness and propagation estimates.

\medskip

\noindent \textbf{Results.} Now we formulate our results. For notational simplicity we consider \eqref{H}, with the linear coupling \eqref{I}. The coupling operators $g (k)$ are assumed to satisfy
 \begin{equation}\label{g-est}
\|\eta^{|\al|}\p^\al g (k)\|_{\chp} \lesssim |k|^{\mu-|\al|} \xi (k),\quad |\al| \le 2,
\end{equation}
where $\xi (k)$ is the ultra-violet cut-off (a smooth function decaying sufficiently rapidly at infinity) and $\eta$ is an estimating operator on the particle space $\chp$ (a bounded, positive operator with unbounded inverse), satisfying
\begin{equation}\label{eta-bnd}
\|\eta^{-n}f(H)\|\lesssim 1 ,
\end{equation}
for any $n=1, 2$ and  $f \in \mathrm{C}_0^\infty( (-\infty,\Sigma))$. For the particle model discussed in the paragraph containing \eqref{gx}, \eqref{g-est} holds with $\eta = \lan x\ran^{-1}$, where $\lan x \ran = (1+|x|^2)^{1/2}$,  and  the ionization threshold,  $\Sigma$, for which \eqref{eta-bnd} is true, is given by
\begin{equation}\label{Sigma}
 \Sigma : = \lim_{R \to \infty} \inf_{\varphi \in D_{R}} \< \varphi , H \varphi \> ,
\end{equation}
where the infimum is taken over $D_{R} = \{ \varphi \in \D ( H ) |\ \varphi ( x ) = 0 \text{ if }\vert x \vert < R, \Vert \varphi \Vert = 1 \}$ (see \cite{Gr}; $\Sigma$ is close to $\inf\s_{\textrm{ess}}(\hp)$).
 For the spin-boson model defined below, $\eta= \one $.

Below, we assume $\mu> - 1/2$ or $\mu>0$. To apply our techniques to minimally coupled particle systems, where $\mu= - 1/2$, one would have to perform first the generalized Pauli-Fierz transform of \cite{Sig1}, as it is done in \cite{BoFaSig} (see also \cite{GuSig, Sig2}), which brings it to $\mu= 1/2$.

 It is known (see \cite{BFS2, GLL}) that the operator $H$ has \textit{the unique ground state} (denoted here as $\Phi_{\textrm{gs}}$) and that generically (e.g. under the Fermi Golden Rule condition) it has no eigenvalues in the interval $(E_{\textrm{gs}},\Sigma)$, where $E_{\textrm{gs}}$ is the ground state energy (see \cite{BFSS}). \textit{We assume} that  this is exactly the case:
 \begin{equation}\label{EVcond}
\mbox{{\it Fermi's Golden Rule} (\cite{BFS1,BFS2}) holds.}
 \end{equation}
Treatment of the (exceptional) situation when such eigenvalues do occur requires, within our approach, proving a delicate estimate $\| P_\Omega f(H) \| \lesssim \lan g\ran$, where $P_\Omega$ denotes the projection onto $\cH_p \otimes \Omega$ ($\Omega := 1 \oplus 0 \oplus \dots$ is the vacuum in $\cF$) and $f \in \mathrm{C}_0^\infty( ( E_{ \mathrm{gs} } , \Sigma ) \setminus \s_p(H) )$, uniformly in $\dist (\supp f, \s_p(H))$.

In what follows we let $\psi_t$ denote the Schr\"odinger evolution, $\psi_t=e^{-i t H} \psi_0 $, i.e. the solution of Schr\"odinger  equation \eqref{SE}, with an initial condition $ \psi_0$, satisfying $ \psi_0=f( H )\psi_0$, with  $f \in \mathrm{C}_0^\infty( (-\infty,\Sigma))$.

For $A\ge -C$, we denote $\| \psi_0 \|_A:=(\| \psi_0 \|^2+\| (A+C)^{\frac{1}{2}}\psi_0 \|^2)^{1/2}$. We define $\nu(\rho)\ge 0$ by the inequality
 \begin{equation}\label{dGk-bnd}
\langle \psi_t , \d \Gamma( \om^{\rho} ) \psi_t \rangle \lesssim  t^{\nu(\rho)}\| \psi_0 \|^2_\rho,
\end{equation}
where $\|\psi\|_\rho^2=\| \psi \|_H^2 + \| \psi \|_{\d\G(\omega^{\rho})}^2 $. It was shown in  \cite{BoFaSig} (see \eqref{lm-bnd} of Appendix \ref{sec:low-mom}) that, for any $-1 \le \rho \le 1$, the inequality \eqref{dGk-bnd} holds for the the exponent $\nu(\rho)=\frac{1-\rho}{2+\mu}$ (this generalizes an earlier bound due to \cite{Ger}). Also, the bound
\begin{align}\label{Hf-H-bnd}
 \|  \psi_t \|_{H_f} \lesssim \| \psi_0 \|_H
\end{align}
shows that \eqref{dGk-bnd} holds for $\rho=1$ with $\nu(1)= 0$.
 With $\nu(\del)$ defined by \eqref{dGk-bnd}, we prove the following two results.

\begin{theorem}[Quantum Huygens principle]\label{thm:mve1y}
Assume \eqref{g-est} with $\mu > -1/2$ and \eqref{eta-bnd}. Let either $\beta <1$, or $\beta =1$ and $c<1$. Assume
\begin{equation}\label{beta-cond_0}
\beta > \max \Big (  \frac{5}{6} + \frac{\nu(-1)-\nu(0)}{6} , \frac{1}{2} + \frac{1}{ 2 ( \frac32 + \mu ) }  \Big ).
\end{equation}
Then for any initial condition $\psi_0 \in f (H) D(\d\G( \omega^{-1} )^{1/2})$, for some $f \in \mathrm{C}_0^\infty( (-\infty, \Sigma) )$, the Schr\"odinger evolution, $\psi_t$, satisfies, for any $a>1$, the following estimate
\begin{align}\label{mve1y_0}
\int_1^\infty dt\  t^{-\beta-a\nu(0)} \| \d {\Gamma} ( \chi_{\frac{\mid y\mid}{ct^{\beta}} =1})^{\frac{1}{2}} \psi_t \|^2
 \lesssim   \| \psi_0 \|_{-1}^2.
\end{align}
\end{theorem}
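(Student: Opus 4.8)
The plan is to prove \eqref{mve1y_0} by the method of time-dependent propagation observables. I would look for a family of self-adjoint operators $\Phi_t$ on $\cH$, nonnegative and uniformly bounded along the evolution in the sense $0\le\langle\psi_t,\Phi_t\psi_t\rangle\lesssim\|\psi_0\|_{-1}^2$, whose Heisenberg derivative $D\Phi_t:=\partial_t\Phi_t+i[H,\Phi_t]$ satisfies a lower bound of the form
\[
D\Phi_t\ \gtrsim\ t^{-\beta-a\nu(0)}\,\d\Gamma\!\big(\chi_{\frac{|y|}{ct^\beta}=1}\big)\ -\ \Rem_t,\qquad \int_1^\infty\big|\langle\psi_t,\Rem_t\psi_t\rangle\big|\,dt\ \lesssim\ \|\psi_0\|_{-1}^2 .
\]
Integrating the identity $\tfrac{d}{dt}\langle\psi_t,\Phi_t\psi_t\rangle=\langle\psi_t,D\Phi_t\psi_t\rangle$ over $t\in[1,\infty)$ and using $\Phi_t\ge0$ (so that the endpoint term at $t=+\infty$ has the right sign and only $\langle\psi_1,\Phi_1\psi_1\rangle$ remains, which is controlled by $\|\psi_0\|_{-1}^2$ using \eqref{dGk-bnd} at $t=1$) then yields \eqref{mve1y_0}.

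For the principal part of $\Phi_t$ I would take a second-quantized cutoff $\d\Gamma(\phi_t)$, where $\phi_t$ is a smooth, bounded, nonnegative, nonincreasing function of the single variable $s:=|y|/(ct^\beta)$, equal to a positive constant for $s$ near $0$, vanishing for large $s$, and normalized so that $-\phi_t'=\chi_{s=1}^2$; the soft photons will force, in addition, a lower-order $\omega^{-1}$-weighted correction term introduced below, which is responsible for the appearance of $\d\Gamma(\omega^{-1})$, hence of the norm $\|\cdot\|_{-1}$, in the estimate. One then decomposes $D\d\Gamma(\phi_t)=\d\Gamma(\partial_t\phi_t)+\d\Gamma(i[\omega,\phi_t])+i[I(g),\d\Gamma(\phi_t)]$. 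The first term is the good term: $\partial_t\phi_t=\tfrac{\beta s}{t}\,\chi_{s=1}^2\ge0$, and on $\supp\chi_{s=1}$ it is $\gtrsim t^{-1}\chi_{s=1}^2$, which dominates $t^{-\beta-a\nu(0)}\chi_{s=1}^2$ whenever $\beta+a\nu(0)\ge1$; where that margin is missing it is recovered, along with the small losing exponent $a\nu(0)$ ($a>1$ arbitrary), from a positive remnant left by the treatment of the second term. The third term, $i[I(g),\d\Gamma(\phi_t)]=I(-i\phi_t g)$, is a remainder: since $g(k)=|k|^\mu\xi(k)e^{ikx}$ and applying $\phi_t(|y|)$ to $e^{ikx}$ amounts, on $\Ran f(H)$, to replacing $y$ by $-x$, the exponential localization \eqref{exp-bnd} below $\Sigma$, combined with \eqref{g-est} and \eqref{eta-bnd}, gives $\|I(-i\phi_t g)f(H)\|=O(t^{-\beta'})$ for a suitable $\beta'>0$, which is time-integrable against $\psi_t$ after one further application of \eqref{dGk-bnd}.

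The main obstacle is the photon-kinetic commutator $\d\Gamma(i[\omega,\phi_t])$. A first-order commutator expansion in the conjugate pair $k,\ y=i\nabla_k$ gives, for $\omega=|k|$, the principal symbol $-\,(ct^\beta)^{-1}\chi_{s=1}^2\,(\hat k\cdot\hat y)_{\mathrm{sym}}$ plus a symbol of order $t^{-2\beta}$; the principal part has indefinite sign and size $t^{-\beta}$, which for $\beta<1$ exceeds the size $t^{-1}$ of the good term and so cannot simply be absorbed. I would remove it by adding to $\Phi_t$ a lower-order correction built from the radial photon observable $A=\tfrac12(\hat k\cdot y+y\cdot\hat k)$, which satisfies the clean identity $i[\omega,A]=\id$, cut off to $\supp\chi_{s=1}$ and weighted by $\omega^{-1}$, chosen so that its Heisenberg derivative cancels the bad principal symbol up to: (a) a nonnegative term of size $\sim t^{-\beta}\chi_{s=1}^2$ (the remnant feeding the good term above); (b) terms controlled by the maximal-velocity estimate \eqref{maxvel-est}, which confines the photons to $|y|\lesssim t$; and (c) genuine remainders whose sizes, after the $\omega$-weights are inserted, involve the growth exponents $\nu(-1),\nu(0)$ of \eqref{dGk-bnd} and the exponent $1/(\tfrac32+\mu)$ produced by the $|k|$-weights in \eqref{g-est}. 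Imposing that the time-integrals of the (c)-remainders against $\psi_t$ converge — estimated by \eqref{dGk-bnd} with $\rho=-1$ and $\rho=0$ — is precisely what forces the two lower bounds for $\beta$ in \eqref{beta-cond_0}, and this balancing of error exponents (together with the bookkeeping of the $\omega$-weights) is the most delicate part of the argument.
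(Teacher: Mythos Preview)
Your overall framework (second-quantized propagation observables, Heisenberg derivative, time-integration) is right, and you correctly isolate the main obstacle: the one-photon commutator $i[\omega,\phi_t]$ has an indefinite principal part of size $t^{-\beta}$ that swamps the $t^{-1}$ good term. But the concrete choices you make to deal with it contain a real gap.

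The key problem is your choice of a \emph{decreasing} cutoff, $\phi_t$ supported on $\{|y|\lesssim ct^\beta\}$. With that support, the interaction remainder $I(-i\phi_t g)$ does \emph{not} decay: $\phi_t g\to g$ in $L^2(\R^3;\chp)$, so the estimate \eqref{I-est} gives only an $O(1)$ bound, which is not time-integrable. Your argument that ``applying $\phi_t(|y|)$ to $e^{ikx}$ amounts to replacing $y$ by $-x$'' and then invoking \eqref{exp-bnd} is incorrect on two counts: the coupling $g(k)=|k|^\mu\xi(k)e^{ikx}$ is not a pure exponential, so $y=i\nabla_k$ also hits $|k|^\mu\xi(k)$; and even if the substitution were valid, a cutoff to \emph{small} $|x|$ gains nothing from the exponential decay in $x$. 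The same support choice also breaks your uniform bound $\langle\psi_t,\Phi_t\psi_t\rangle\lesssim\|\psi_0\|_{-1}^2$: one only has $\d\Gamma(\phi_t)\le N$, and $\langle N\rangle_{\psi_t}$ grows like $t^{\nu(0)}$ by \eqref{dGk-bnd}, which is why the paper carries the explicit prefactor $t^{-a\nu(0)}$. Finally, the corrector you propose is built from $A=\tfrac12(\hat k\cdot y+y\cdot\hat k)=b_0$, which is \emph{not} self-adjoint (the vector field $k/|k|$ is not Lipschitz at $0$); this is precisely the reason the paper introduces the regularization $b_\e$ with $\e=t^{-\kappa}$. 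Invoking the maximal-velocity estimate \eqref{maxvel-est} would also impose $\mu>0$, stronger than the hypothesis $\mu>-1/2$.

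The paper's route is structurally different. It first takes $\phi_t=t^{-a\nu(0)}\chi_{w\ge1}$ with $w=(|y|/(c't^\beta))^2$, i.e.\ a cutoff to \emph{large} $|y|$; then $\phi_t g$ \emph{does} decay (Lemma~\ref{lem:chig-est} gives $t^{-\lam'}$ with $\lam'<(\tfrac32+\mu)\beta$, producing the second term in \eqref{beta-cond_0}), and the weak-observable bound \eqref{phit-cond1} holds by design of the prefactor. The indefinite commutator is not cancelled by an additive corrector; instead one writes $dw=2b/(c't^\beta)^2-2\beta w/t$, replaces $b$ by $b_\e$ plus a remainder controlled through $\|\cdot\|_{-1}$, and splits the bad region $\{b_\e\le ct^\beta\}\cap\{|y|\ge c't^\beta\}$ at an intermediate scale $t^\gamma$, $\gamma=2\beta-1$. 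Each piece is then handled by an auxiliary (strong) propagation observable ($\phi_{t1}$ of the form $h_\beta\chi_\gamma$ and $\phi_{t2}$ of the form $f_\beta h_\gamma+h_\gamma f_\beta$), and the residual term $\tilde p_{t1}$ is absorbed using the already-established estimate \eqref{mve1} for $b_\e$ (Theorem~\ref{thm:mve1}). The condition \eqref{beta-cond_0} emerges from the simultaneous constraints $4\beta-3>\kappa>2-2\beta+\nu(-1)-\nu(0)$ needed for the remainders at both scales to be integrable. So the missing ingredient in your plan is this two-scale bootstrap via $b_\e$, which replaces the single-step ``add a corrector'' strategy.
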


\medskip

To formulate our next result we let $\G(\chi)$ be the lifting of a one-photon operator $\chi$ (e.g. a smoothed out characteristic function of $y$) to the photon Fock space, defined by
\begin{equation}\label{def:Gamma}
\Gamma( \chi )  =  \oplus_{n=0}^\infty (\otimes^n \chi) ,
\end{equation}
(so that $\Gamma( e^b )=e^{\d\Gamma( b )}$), and then to the space of the total system. We have
\begin{theorem}[Weak minimal photon escape velocity estimate]\label{thm:mve2y}
Assume \eqref{g-est} with  $\mu > -1/2$, \eqref{eta-bnd} and \eqref{EVcond}. Let the norm $\lan g \ran := \sum_{ |\alpha| \le 2} \| \eta^{| \alpha | } \p^{\alpha} g \|_{L^2(\R^3, \chp)}$ of the coupling function $g$
be sufficiently small and $\nu(-1) < \alpha < 1- \nu(0)$. Then for any initial condition $\psi_0 \in f (H) D(\d\G(\lan y\ran))$, for some $f \in \mathrm{C}_0^\infty( ( E_{\mathrm{gs}}, \Sigma) )$, the Schr\"odinger evolution, $\psi_t$, satisfies the estimate
\begin{equation}\label{mve2y}
\| \Gamma (\chi_{|y| \leq c' t^\alpha} ) \psi_t \|\ \lesssim t^{-\gamma}\| \psi_0 \|_{\d \Gamma ( \langle y \rangle)^2 }\ ,
\end{equation}
where $\gamma < \frac12 \min ( 1-\alpha - \nu(0) , \frac12 (\alpha - \nu(0)-\nu(-1)) )$.
\end{theorem}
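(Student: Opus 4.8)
\emph{Proof strategy.} I would establish \eqref{mve2y} by a propagation (positive commutator) argument for the \emph{product} second quantization $\G(\chi)$, carried out with an auxiliary polynomial time weight and driven by the quantum Huygens principle of Theorem~\ref{thm:mve1y}. First, the reduction. Fix a radially nonincreasing $\chi\in C_0^\infty(\R^3)$ with $\chi\equiv 1$ on $\{|y|\le\tfrac12\}$ and $\supp\chi\subset\{|y|\le 1\}$, and put $\chi_t:=\chi(y/(c''t^\alpha))$ with $c''>c'$. Since $0\le\chi_{|y|\le c't^\alpha}\le\chi_t\le\one$, operator monotonicity and multiplicativity of $\G$ (so $\G(\chi_t)^2=\G(\chi_t^2)$) reduce the claim to
\begin{equation*}
\langle\psi_t,\G(\chi_t)\psi_t\rangle\ \lesssim\ t^{-2\gamma}\,\|\psi_0\|^2_{\d\G(\langle y\rangle)^2}.
\end{equation*}
By \eqref{EVcond} there are no eigenvalues of $H$ in $(E_{\mathrm{gs}},\Sigma)$, so $\psi_0\in f(H)\cH$ with $\supp f\subset(E_{\mathrm{gs}},\Sigma)$ lies in the continuous spectral subspace and the a priori inputs \eqref{dGk-bnd}, \eqref{Hf-H-bnd}, \eqref{maxvel-est}, the exponential localization \eqref{exp-bnd}, and Theorem~\ref{thm:mve1y} are all available for $\psi_t$.

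\emph{Heisenberg derivative.} Using $\partial_t\G(\chi_t)=\d\G(\chi_t;\partial_t\chi_t)$, $i[H_f,\G(\chi_t)]=\d\G(\chi_t;i[\om,\chi_t])$ (where $\d\G(\chi;b):=\oplus_n\sum_{j=1}^n\chi\otimes\cdots\otimes b_j\otimes\cdots\otimes\chi$) and $[\hp,\G(\chi_t)]=0$, one has $D_H\G(\chi_t)=\d\G(\chi_t;m_t)+i[I(g),\G(\chi_t)]$, with $D_H:=\partial_t+i[H,\cdot]$ and $m_t:=\partial_t\chi_t+i[\om,\chi_t]$. A commutator expansion using $[\om,y]=i\nabla\om(k)=i\hat k$ gives
\begin{equation*}
m_t\ =\ \frac{t^{-\alpha}}{c''}\,\widetilde\chi'\!\Big(\tfrac{|y|}{c''t^\alpha}\Big)\,\hat k\cdot\hat y\ +\ O\big(t^{-\min(2\alpha,1)}\big),
\end{equation*}
so $m_t$ is supported in the annulus $A_t:=\{|y|\asymp c''t^\alpha\}$ and $\|m_t\|\lesssim t^{-\alpha}$ on the one-photon space. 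For the interaction term, the pull-through relations $\G(\chi_t)a^\#(h)=a^\#(\chi_t h)\G(\chi_t)$ show that $i[I(g),\G(\chi_t)]$ is, up to $\G(\chi_t)$-factors, an interaction smeared against $(\one-\chi_t)g$; since $g$ carries the ultraviolet cut-off \eqref{g-est} and, below $\Sigma$, the emitted photon is localized in $y$ near the exponentially confined particle coordinate $x$ (cf.\ \eqref{exp-bnd}), $(\one-\chi_t)g$ is $O(t^{-N})$ for every $N$, and together with the smallness of $\langle g\rangle$ and \eqref{dGk-bnd} this makes $|\langle\psi_t,i[I(g),\G(\chi_t)]\psi_t\rangle|\lesssim\langle g\rangle\,t^{-2}$, harmlessly integrable.

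\emph{Sign of the free part and the rate.} The content lies in $\d\G(\chi_t;m_t)$. Below $\Sigma$ and for small $\langle g\rangle$, a photon reaching $A_t$ at time $t$ was emitted near $y\approx 0$ and has travelled essentially freely, hence with radial velocity $\hat k\cdot\hat y$ bounded below; making this precise with the maximal velocity bound \eqref{maxvel-est} (to discard $|y|\gtrsim\bar c\,t$, which would require re-scattering and is $O(\langle g\rangle)$) and the low-momentum bound \eqref{dGk-bnd} (to control the soft part $\om<\om_0$, whose number grows like $t^{\nu(-1)}$ and contributes $\lesssim\om_0 t^{\nu(-1)-\alpha}$), one obtains, on the relevant sector, $m_t\le -\delta\,c''^{-1}t^{-\alpha}\mathbf 1_{A_t}+(\text{controlled errors})$, whence $\d\G(\chi_t;m_t)\le -\delta'\,t^{-\alpha}\,\d\G(\chi_t;\mathbf 1_{A_t})+(\text{errors})$. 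Since each such photon crosses the annulus of width $\asymp t^\alpha$ at unit speed, integrating over time windows of length $\asymp t^\alpha$ and feeding the flux through $|y|=c''t^\alpha$ into \eqref{mve1y_0} of Theorem~\ref{thm:mve1y} (used with $\beta=\alpha$, $c=c''$) produces an iterative inequality of the form $\langle\psi_{t+Ct^\alpha},\G(\chi_{t+Ct^\alpha})\psi_{t+Ct^\alpha}\rangle\le(1-c_1(t))\langle\psi_t,\G(\chi_t)\psi_t\rangle+(\text{integrable})$, in which the depletion rate $c_1(t)$ is degraded by the photon-number growth $\nu(0)$ and the soft-photon optimization of $\om_0$. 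Unwinding this discrete Gronwall inequality from $t$ to $2t$ yields $\langle\psi_t,\G(\chi_t)\psi_t\rangle\lesssim t^{-2\gamma}$, the first entry of the $\min$, $1-\alpha-\nu(0)$, coming from balancing the window width $t^\alpha$ against $\nu(0)$, and the second, $\tfrac12(\alpha-\nu(0)-\nu(-1))$, from the soft-photon contribution; the hypothesis $\nu(-1)<\alpha<1-\nu(0)$ is exactly what makes both entries positive, and the final power of $\d\G(\langle y\rangle)$ on the right records the regularity needed for the commutator expansions and for matching the input norm $\|\psi_0\|_{-1}$ of Theorem~\ref{thm:mve1y}.

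\emph{Main obstacle.} The principal difficulty is twofold. First, unlike in a pure positive-commutator scheme there is no globally monotone observable here: $\hat k\cdot\hat y$ on $\supp m_t$ is a priori of indefinite sign, and one must show, uniformly in $t$ and on the whole energy window $\supp f\subset(E_{\mathrm{gs}},\Sigma)$, that the photons populating the comovingly shrinking annulus $A_t$ genuinely move outward — this is where the interplay of \eqref{exp-bnd}, smallness of $\langle g\rangle$, the maximal velocity estimate \eqref{maxvel-est}, and the low-momentum bounds is essential, and where the soft photons force the second entry of the $\min$. Second, converting the resulting near-monotonicity of $\langle\psi_t,\G(\chi_t)\psi_t\rangle$ into the explicit polynomial rate requires setting up the windowed iteration so that at each step the multiplicative gain $c_1(t)$ strictly dominates the additive error and so that the losses accumulated over the $\asymp t^{1-\alpha}$ steps between $t$ and $2t$ telescope to precisely the stated $\gamma$; this bookkeeping of exponents, and the careful tracking of the $\d\G(\langle y\rangle)^2$-weighted norm through all estimates, is the technical heart of the argument.
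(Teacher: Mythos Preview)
Your approach has a genuine gap at its core. You claim that on the annulus $A_t=\{|y|\asymp c''t^\alpha\}$ the one-photon symbol satisfies $\hat k\cdot\hat y\ge\delta>0$ (photons are outgoing), and that this follows from the maximal velocity bound \eqref{maxvel-est}, the low-momentum estimates \eqref{dGk-bnd}, and the particle localization \eqref{exp-bnd}. None of these inputs yields an outgoing condition: \eqref{maxvel-est} bounds $|y|$ from above by $\bar c t$, \eqref{dGk-bnd} controls only photon number weighted by powers of $\omega$, and \eqref{exp-bnd} concerns the particle coordinate $x$, not the sign of $\hat k\cdot\hat y$. A photon in $A_t$ can perfectly well have $\hat k\cdot\hat y<0$; establishing the opposite is essentially the content of the theorem, and your ``iterative Gronwall'' step rests on an unproven positive-commutator sign. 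Moreover, you invoke Theorem~\ref{thm:mve1y} with $\beta=\alpha$, but that theorem requires $\beta>\tfrac56+\cdots$, while the present $\alpha$ may be as small as $\nu(-1)+0$; so the input is not available in the stated range.

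The paper sidesteps the sign problem entirely by replacing $|y|$ with the radial observable $b_\e=\tfrac12(\theta_\e\nabla\omega\cdot y+\mathrm{h.c.})$. The key algebraic fact is $i[\omega,b_\e]=\theta_\e$, a \emph{nonnegative multiplication operator}, so that $B_\e:=\d\Gamma(b_\e)$ satisfies $DB_\e=\d\Gamma(\theta_\e)+\text{(interaction)}\approx N$ with no indefinite $\hat k\cdot\hat y$ to worry about. One then runs a genuine positive-commutator argument for $\Phi_t=-t^\rho\varphi(B_\e/ct)$, using $N\ge\one-P_\Omega$ and the \emph{local decay} $\|P_\Omega e^{-itH}f(H)u\|\lesssim t^{-s}$ (a Mourre-type input from \cite{FGSig2,BFSS}, where \eqref{EVcond} enters) to obtain $\|\chi_{B_\e\le ct}\psi_t\|\lesssim t^{-\delta'}$. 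This is upgraded to the product observable via $\Gamma(\chi_{b_\e\le c't^\alpha})\le\chi_{B_\e\le ct^\nu}+\chi_{N\ge c''t^{1-\alpha}}$, and finally transferred from $b_\e$ to $|y|$ by the localization estimate $\chi_{b_\e\ge c't^\alpha}\chi_{|y|\le ct^\alpha}=\CO(t^{-(\alpha-\kappa)})$. The local decay for $P_\Omega$ is the missing mechanism in your outline; it is what actually forces the decay rate and is the reason the Fermi Golden Rule hypothesis appears.
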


\medskip

\noindent \textbf{Remarks.}

1) The estimate \eqref{mve1y_0} is sharp if $\nu(0)=0$.  Assuming this and taking  $\nu(-1) = ( 3/2 + \mu )^{-1}$ (see \eqref{lm-bnd'}), the condition \eqref{beta-cond_0} on $\beta$ in Theorem \ref{thm:mve1y} becomes $\beta > \frac{5}{6}+\frac{1}{6(3/2+\mu)}$, and the condition on $\alpha$ in Theorem \ref{thm:mve2y}, $( 3/2 + \mu )^{-1} < \alpha < 1$.

2) The estimate \eqref{mve2y} states that, as $t\ra \infty$, with probability $\ra 1$, either all photons are attached to the particle system in the combined ground state, or at least one photon departs the particle system with the distance growing at least as $\CO(t^\al)$. (\eqref{mve2y} for $\mu \ge 1/2$, some $\alpha > 0$ and $\psi_0 \in E_\Delta(H)$, with $ \Delta \subset ( E_{\mathrm{gs}} , e_1 - \CO(\lan g \ran) )$ and $e_1$ the first excited eigenvalue of $H_p$, can be derived directly from \cite{BoFa, BoFaSig}.)

3) With some more work, one can remove Assumption \eqref{EVcond} and relax the condition on $\psi_0$ in Theorem \ref{thm:mve2y} to the natural one:  $\psi_0 \in P_\Sigma D(\d\G(\lan y\ran))$, where $P_\Sigma$ is the spectral projection onto the orthogonal complement of the eigenfunctions of  $H$ with the eigenvalues in the interval $ (-\infty, \Sigma)$.

\medskip

Let $N:=\d \Gamma( \one )$ be the photon (or phonon) number operator. Our next result is

\begin{theorem}[Asymptotic Completeness]\label{thm:ac}
Assume \eqref{g-est} with $\mu > 0$, \eqref{eta-bnd} and \eqref{EVcond}. Let the norm $\lan g \ran := \sum_{ |\alpha| \le 2} \| \eta^{| \alpha | } \p^{\alpha} g \|_{L^2(\R^3, \chp)}$ of the coupling function $g$ be sufficiently small. Suppose that
\begin{equation}\label{npb-unif}
\| N^{\frac12} \psi_t \| \lesssim \| N^{\frac12} \psi_0 \| + \| \psi_0 \| ,
\end{equation}
uniformly in $t \in [0 , \infty)$, for any $\psi_0 \in D( N^{1/2} )$. Then the asymptotic completeness holds on $\mathrm{Ran} \, E_{(-\infty , \Sigma) }(H)$.
\end{theorem}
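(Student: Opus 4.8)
The plan is to establish asymptotic completeness via a Dereziński--Gérard-type scheme for Rayleigh scattering, feeding in the minimal and maximal photon velocity estimates of the present paper — Theorems~\ref{thm:mve1y}, \ref{thm:mve2y} and \eqref{maxvel-est} — as the dynamical input, and using the uniform number bound \eqref{npb-unif} to keep the number of asymptotic photons under control. First some reductions: by \eqref{npb-unif} and density it suffices to prove \eqref{AC} for $\psi_0\in\Ran E_\Delta(H)\cap D(N^{1/2})$, $\Delta=(-\infty,\Sigma)$, since approximating a general $\psi_0$ to within $\e/2$ by such a vector gives $\sup_t\|N^{1/2}\psi_t\|<\infty$ by \eqref{npb-unif}, and a further truncation in $N$ and in $\d\Gamma(\langle y\rangle)$ (costing $O(\e)$) lets us moreover assume $\sup_t\langle\psi_t,N\psi_t\rangle<\infty$ and $\|\psi_0\|_{\d\Gamma(\langle y\rangle)^2}<\infty$, so that Theorem~\ref{thm:mve2y} applies. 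Under \eqref{EVcond} the only eigenvalue of $H$ below $\Sigma$ is $E_{\mathrm{gs}}$, so the sum over $j$ in \eqref{AC} reduces to the single term $j=\mathrm{gs}$.

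Second, I would construct the asymptotic annihilation and creation operators. For $h\in C_0^\infty(\R^3\setminus\{0\})$ put $h_t:=e^{-i\omega t}h$ and $a^{\#}_t(h):=e^{iHt}a^{\#}(h_t)e^{-iHt}$; the uniform bound \eqref{npb-unif} keeps $\psi_t\in D(N^{1/2})$, so $a^{\#}_t(h)\psi_0$ is well defined and bounded uniformly in $t$. The operators $a^{\#}_+(h)\psi:=\slim_{t\to\infty}a^{\#}_t(h)\psi$ then exist on $\Ran E_\Delta(H)\cap D(N^{1/2})$ by Cook's argument: the $H_f$-part of $i[H,a^{\#}(h_t)]$ cancels $a^{\#}(\partial_t h_t)$, leaving $\partial_t\big(a^{\#}_t(h)\psi_t\big)=\pm i\,e^{iHt}\big(G^{\#}(h_t)\otimes 1\big)e^{-iHt}\psi_0$ with $G(h_t):=\int g^{*}(k)h_t(k)\,dk$ a bounded function of the particle position; two integrations by parts in $k$ against the non-stationary phase $\omega(k)t\mp k\cdot x$, the exponential particle localization \eqref{exp-bnd} below $\Sigma$ (recall $\psi_0=f(H)\psi_0$), and \eqref{g-est} for $|\al|\le2$ give $\|(G(h_t)\otimes 1)f(H)\|=O(t^{-2})$, which is integrable. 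One then verifies the canonical commutation relations for the $a^{\#}_+(h)$, the intertwining $e^{-iHt}a^{*}_+(h)=a^{*}_+(h_t)e^{-iHt}$, the bound $\|N^{1/2}a^{*}_+(h)\psi\|\lesssim\|h\|\,\|(N+1)^{1/2}\psi\|$, and the vacuum relation $a_+(h)\Phi_{\mathrm{gs}}=0$ (the last because $e^{-iHt}\Phi_{\mathrm{gs}}=e^{-iE_{\mathrm{gs}}t}\Phi_{\mathrm{gs}}$ while $a(h_t)\Phi_{\mathrm{gs}}\to0$ by Riemann--Lebesgue against the one-photon component of $\Phi_{\mathrm{gs}}$, which lies in $\fh$ since $\mu>0$).

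Third — the core of the argument — define the wave operator $\Omega_+:\cF\to\cH$ on the algebraic Fock space by $\Omega_+ a^{*}(h_1)\cdots a^{*}(h_n)\Omega:=a^{*}_+(h_1)\cdots a^{*}_+(h_n)\Phi_{\mathrm{gs}}$; the CCR together with $a_+(h)\Phi_{\mathrm{gs}}=0$ make $\Omega_+$ an isometry, and the intertwining relations give $e^{-iHt}\Omega_+=e^{-iE_{\mathrm{gs}}t}\,\Omega_+ e^{-iH_f t}$. Asymptotic completeness amounts to $\Ran\Omega_+\supseteq\Ran E_\Delta(H)$, which I would prove in two steps. (a) \emph{The asymptotic vacuum is $\C\Phi_{\mathrm{gs}}$}: if $\psi_0\in\Ran E_\Delta(H)$ is orthogonal to $\Phi_{\mathrm{gs}}$ then $\psi_0\in\Ran E_{(E_{\mathrm{gs}},\Sigma)}(H)$ and Theorem~\ref{thm:mve2y} forces $\|\Gamma(\chi_{|y|\le c't^{\alpha}})\psi_t\|\to0$, i.e. every photon eventually leaves $|y|\le c't^{\alpha}$; the maximal velocity bound \eqref{maxvel-est} confines the escaping photons to $|y|\le\bar{\mathrm c}t$ and the quantum Huygens estimate \eqref{mve1y_0} keeps them off the intermediate scales $t^{\beta}$, $\beta<1$, so that they propagate ballistically and the interaction acting on them is asymptotically negligible, yielding $\sum_i\|a_+(h_i)\psi_0\|^2=\lim_t\langle\psi_t,N^{\mathrm{ext}}_t\psi_t\rangle$ for a suitable family $\{h_i\}$, where $N^{\mathrm{ext}}_t$ counts photons with $|y|\gtrsim t$; this limit is positive unless the part of $\psi_t$ supported near the particle exhausts $\psi_0$, and a RAGE-type argument together with \eqref{EVcond} identifies that part with a multiple of $\Phi_{\mathrm{gs}}$. (b) \emph{Induction on the number of asymptotic photons}: with a Fock-space partition of unity lifting a pair $(j_0,j_\infty)$, $j_0^2+j_\infty^2=1$, $j_0$ supported in $|y|\lesssim t/2$ and $j_\infty$ in $|y|\gtrsim t/2$, one deflates $\psi_t$ into a near piece (which, by (a) and \eqref{EVcond}, converges into $\Ran\Omega_+$ with no further photons) and a far piece carrying one photon that is recovered by an asymptotic creation operator applied to a vector with one fewer photon; \eqref{npb-unif} guarantees the induction closes. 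I expect step (b) to be the main obstacle: making the deflation/partition-of-unity scheme converge for \emph{massless} bosons — in particular controlling the infrared contributions and the recombination of slow photons — is precisely where $\mu>0$ (not merely $\mu>-1/2$) and the uniform number bound \eqref{npb-unif} are indispensable.

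Finally, to recast the conclusion in the form \eqref{AC}: given $\psi_0\in\Ran E_\Delta(H)\cap D(N^{1/2})$ write $\psi_0=\Omega_+\psi_+$, choose $M$ with $\|\psi_+-\chi(N\le M)\psi_+\|<\e$, and set $f_{\mathrm{gs},\e}:=\chi(N\le M)\psi_+$; since $f_{\mathrm{gs},\e}$ has finitely many photons, $\Phi_{\mathrm{gs}}\otimes_s e^{-iH_f t}f_{\mathrm{gs},\e}$ is a well-defined vector of $\cH$ — the need to truncate, hence the loss of $\e$, being exactly the mechanism flagged in the paper's remark. Then $\|e^{-iHt}\psi_0-e^{-iHt}\Omega_+ f_{\mathrm{gs},\e}\|=\|\psi_+-f_{\mathrm{gs},\e}\|<\e$, while $e^{-iHt}\Omega_+ f_{\mathrm{gs},\e}=e^{-iE_{\mathrm{gs}}t}\,\Omega_+ e^{-iH_f t}f_{\mathrm{gs},\e}$ and $\|\Omega_+\phi-\Phi_{\mathrm{gs}}\otimes_s\phi\|\to0$ as the photons of $\phi=e^{-iH_f t}f_{\mathrm{gs},\e}$ spread to $|y|\to\infty$ (the dressing corrections in $\Omega_+$ are $O(\langle g\rangle)$ and localized near the particle, hence annihilated by the free photon spreading — the minimal velocity estimate once more). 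Combining these bounds gives $\limsup_{t\to\infty}\|e^{-iHt}\psi_0-e^{-iE_{\mathrm{gs}}t}\Phi_{\mathrm{gs}}\otimes_s e^{-iH_f t}f_{\mathrm{gs},\e}\|\le\e$, which is \eqref{AC}.
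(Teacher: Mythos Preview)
Your outline captures the standard Dereziński--Gérard/Fröhlich--Griesemer--Schlein scheme, but step~(b) is where the argument is incomplete, and this is not a minor technicality: the ``deflation'' you describe only works if the map $\psi_t\mapsto\check{\Gamma}(j)\psi_t$ (splitting into near and far pieces) has a limit along the evolution, i.e.\ if the Deift--Simon wave operator
\[
W_+:=\slim_{t\to\infty}e^{i\hat H t}\check{\Gamma}(j)e^{-iHt},\qquad \hat H=H\otimes\one+\one\otimes H_f,
\]
exists on $\Ran E_{(-\infty,\Sigma)}(H)$. You assert that ``\eqref{npb-unif} guarantees the induction closes,'' but the number bound alone does not give convergence of $W(t)$; what is needed is an \emph{integrable} propagation estimate for the commutator $G_0=\hat H_f\check{\Gamma}(j)-\check{\Gamma}(j)H_f+\partial_t\check{\Gamma}(j)$. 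The paper obtains this from the sharp form \eqref{mve1'} of Theorem~\ref{thm:mve1} (with the $b_\e$-based cutoffs and the extra $\theta_\e^{1/2}\chi\cdots\chi\theta_\e^{1/2}$ structure), combined with the maximal velocity estimate \eqref{maxvel-est}; this is the content of Theorem~\ref{thm:MVEraDS}. Neither Theorem~\ref{thm:mve1y} nor Theorem~\ref{thm:mve2y} in the form you invoke them suffices for this step, and your $|y|$-based partition with $j_0$ supported in $|y|\lesssim t/2$ would not directly feed into those estimates.

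The paper's route is also structurally different from an induction: once $W_+$ exists, \emph{all} escaping photons are peeled off simultaneously via the identity $\psi_t=\check{\Gamma}(j)^*e^{-i\hat H t}W_+\psi_0+o_t(1)$, and the minimal velocity estimate \eqref{mve2} is then used exactly once, to kill the component of $W_+\psi_0$ on which $H$ acts above $E_{\mathrm{gs}}$ (your step~(a), essentially). The passage from $\check{\Gamma}(j)^*$ to the scattering map $I$ (Lemma~\ref{lem:GI-rel}) replaces your final claim that $\|\Omega_+\phi-\Phi_{\mathrm{gs}}\otimes_s\phi\|\to0$, which as stated also needs justification. In short: the missing ingredient in your plan is precisely the existence of $W_+$, and supplying it is the main analytic work of Section~\ref{sec:pf-ac}.
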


\medskip

As we see from the results above, the uniform bound, \eqref{npb-unif}, on the number of photons (or phonons) emerges as the remaining stumbling block to proving the asymptotic completeness without qualifications.

For massive bosons (e.g. optical phonons),  the inequality \eqref{npb-unif} (as well as \eqref{dGk-bnd}, with $\nu(0)=0$) is easily proven and the proof below simplifies considerably as well. In this case, the result is unconditional. It was first proven in \cite{DerGer2} for the models with  confined particles, and in \cite{FrGrSchl2} for the Rayleigh scattering.

The difficulty in proving this bound for massless particles is due to  the same infrared problem which pervades this field and which was successfully tackled in other central issues, such as the theory of ground states and resonances (see \cite{bach, Sig2} for  reviews), the local decay and the maximal velocity bound. As was mentioned above, for the spin-boson model (see below), a uniform bound, $\< \psi_t , e^{\delta N} \psi_t \> \le C (\psi_0)< \infty,\ \del>0$, on the number of photons, on a dense set of $\psi_0$'s, was recently proven in the remarkable paper \cite{DRK}, which gives substance to our  \textit{conjecture} that the bound \eqref{npb-unif} holds for a dense set of states.

\medskip

\noindent \textbf{Spin-boson model.} Another example of the particle system, and  the simplest one, is the spin-boson model, describing an idealized two-level atom, with state space $\cH_\at \ = \ \CC^2 $, the hamiltonian $H_\at \ = \ \eps \sigma^3$,
where $\sigma^1, \sigma^2, \sigma^3$ are the usual $2\times 2$ Pauli matrices, and $\eps > 0$ is an atomic energy. The coupling family is given by $g (k) = \omega^\mu \kappa (k)\sigma^+$, $\sigma^{\pm}= \frac{1}{2}\:(\sigma^1 \mp i\sigma^2)$.  In this case,  $g$ satisfies \eqref{g-est} with  $\eta=\one$. For the spin-boson model, we can take $\Sigma=\infty$.

\medskip

\noindent \textbf{Approach and organization of the paper.} In this paper, as in earlier works, we use  the method of propagation observables, originating in the many body scattering theory (\cite{SigSof2, SigSof3, HunSigSof, Graf, Yaf, Der}, see \cite{DerGer1, HuSig} for a textbook exposition and a more recent review), and extended to the non-relativistic quantum electrodynamics in \cite{ DerGer2, Ger, FrGrSchl1, FrGrSchl2, FrGrSchl3, FrGrSchl4} and to the  $P(\varphi)_2$ quantum field theory, in \cite{ DerGer3}. We formalize this method in the next section.

After that we prove key  propagation estimates in Sections \ref{sec:pf-mve1} and \ref{sec:pf-mve2}. Instead of $|y|$, these estimates involve the operator $b_\e$ defined as $b_\e :=\frac{1}{2}(v(k) \cdot y + y \cdot v(k) ),$ where $v(k) := \frac{k}{\omega+\e}$, for $\e=t^{-\kappa}$, with some $\kappa>0$. Since the vector field $v(k)$ is Lipschitz continuous and therefore generates a global flow, the operator $b_\e$ is self-adjoint. We show in Section \ref{sec:mve-y} that  these propagation estimates give the  estimates \eqref{mve1y_0} and \eqref{mve2y}.  (The operator $b_\e$ was considered in [I.M.~Sigal and A. Soffer, Unpublished, 2004], as a regularization of the non-self-adjoint operator $b_0$  used in \cite{Ger}. We could have also used  the operators $b_\e$, with $0<\e<\g_0:= {\rm dist} (\Delta, \sigma_p (H_{el}))$ constant, $b := \frac{1}{2}(\frac{k}{\omega} \cdot y + \frac{k}{\omega} \cdot y)$, or $\tilde b :=\frac{1}{2}(k \cdot y + k \cdot y)$.  Using $b_\e$ avoids some (trivial) technicalities, as compared to the other two operators. At the expense of slightly lengthier computations but gaining simpler technicalities, one can also modify $b_\e$ to make it bounded, by multiplying it with the cut-off function $\chi_{ \vert y \vert  \leq \bar{\mathrm{c}} t  }$, with  $ \bar{\mathrm{c}}>1$ such that the maximal velocity estimate \eqref{maxvel-est} holds, or use the smooth vector field  $v(k) :=\frac{k}{\sqrt{\omega^2+\e^2}},$ instead of $v(k) :=\frac{k}{\omega+\e}$.)

Theorem \ref{thm:ac} is proven in Section \ref{sec:pf-ac}. As it is standard in the scattering theory, to prove the asymptotic completeness, we establish the existence of  the Deift-Simon wave operator  $W_+$,  mapping solutions of the Schr\"odinger equation into the scattering data (see \cite{DerGer2, FrGrSchl2, Ger} and \cite{SigSof1, Graf, Yaf, Der} for earlier works). We prove the existence of $W_+$ in Subsection \ref{sec:DSOs} and then deduce from it Theorem \ref{thm:ac} in Subsection \ref{sec:ACpf}. A low momentum bound of \cite{BoFaSig} and some standard technical statements are given in Appendices \ref{sec:low-mom},  \ref{sec:tech} and \ref{sec:tech2}.

The paper is essentially self-contained. In order to make it more accessible to non-experts, we included Supplement I giving standard definitions, proof of the existence and properties of the wave operators, and Supplement II defining and discussing the creation and annihilation operators.

 \medskip

\noindent \textbf{Notations.} For functions $A$ and $B$, we will use the notation $A\lesssim B$ signifying that $A\le C B$ for some absolute (numerical) constant $0<C < \infty$.
The symbol $E_\Delta $ stands for the characteristic function of a set $\Delta$, while $\chi_{ \cdot \le 1 }$ denotes a smoothed out characteristic function of the interval $(-\infty , 1]$, that is it is in $\mathrm{C}^\infty( \mathbb{R} )$,  is non-decreasing, and $ = 1$ if $x \le 1/2$ and $= 0$ if $x \ge 1$. Moreover, $\chi_{ \cdot \ge 1 } := \one - \chi_{ \cdot \le 1 }$ and $\chi_{ \cdot = 1 }$ stands for the derivative of $\chi_{ \cdot \ge 1 }$. Given a self-adjoint operator $a$ and a real number $\al$, we write $\chi_{ a \le \al } := \chi_{ \frac{ a }{ \al } \le 1 }$, and likewise for $\chi_{ a \ge \al }$. Finally, $D(A)$ denotes the domain of an operator $A$.

\medskip

\noindent {\bf Acknowledgements.} The first author is grateful to Jean-Fran\c{c}ois Bony, J\"urg Fr\"ohlich and Christian G\'erard for useful discussions. The last author is grateful to Volker Bach, Jean-Fran\c{c}ois Bony, J\"urg Fr\"ohlich, Marcel Griesemer and Avy Soffer for many discussions and collaboration. His research was supported in part by NSERC under Grant No. NA7901.

\bigskip

\section{ Method of propagation observables}\label{sec:propobs}

Many steps of our proof use the method of propagation observables which we formalize in what follows. In this section we consider the Hamiltonian \eqref{H} and assume \eqref{g-est} and \eqref{eta-bnd}. Let $\psi_t = e^{-i t H}  \psi_0$. The method reduces propagation estimates for our system say of the form
\begin{equation}\label{weak-prop-est'}
\int_0^\infty dt \|G_t^{1/2}\psi_t\|^2 \lesssim \|\psi_0\|^2_\#,
\end{equation}
for some norm $\|\cdot\|_\# \ge \|\cdot\|$, to differential inequalities for certain families $\phi_t$ of positive, one-photon operators on the one-photon space  $L^2(\R^3)$. Let
\begin{equation*}
d \phi_t:= \partial_t \phi_t + i [ \om, \phi_t ] ,
\end{equation*}
and let $\nu (\rho)\ge 0$ be determined by the estimate \eqref{dGk-bnd}.  We isolate the following useful class of families of  positive, one-photon operators:
\begin{definition}\label{def:po-one}
A family of positive operators  $\phi_t$ on $L^2(\R^3)$ will be called a \textit{one-photon weak propagation observable}, if it has the following properties
\begin{itemize}
\item  there are $\del\ge0$ and a family $p_t$ of non-negative operators,  such that
\begin{equation}\label{phit-cond1}
\| \omega^{-\delta/2} \phi_t \omega^{-\delta/2}\|\lesssim t^{-\nu(\delta)}\  \mbox{and}\ \quad
d \phi_t \ge p_t + \sum_{\textrm{finite}}\mathrm{rem}_i ,
\end{equation}
where $\mathrm{rem}_i$ are one-photon operators satisfying
 \begin{equation}\label{rem-est}
  \| \omega^{-\rho_i/2} \, \mathrm{rem}_i \, \omega^{-\rho_i/2}\| \lesssim t^{-\lam_i} ,
\end{equation}
for some $\rho_i$ and $\lam_i$, s.t. $\lam_i > 1+\nu(\rho_i)$,
\item for some $\lam'>1 + \nu ( \delta )$ and with $\eta$ satisfying \eqref{eta-bnd},
\begin{equation}\label{phitg-est}
\big( \int \| \eta \phi_t g \|_{\chp}^2 \omega^{-\delta}  d^3 k \big)^{\frac{1}{2}}
\lesssim t^{-\lam'}.
\end{equation}
(Here $\phi_t$ acts on $g$ as a function of $k$.)
\end{itemize}

Similarly, a family of operators  $\phi_t$ on $L^2(\R^3)$ will be called a \textit{one-photon strong propagation observable}, if
\begin{equation}\label{phit-cond2}
d \phi_t \le -p_t + \sum_{\textrm{finite}}\mathrm{rem}_i ,
\end{equation}
with $p_t\ge 0$, $\mathrm{rem}_i$ are one-photon operators satisfying \eqref{rem-est} for some $\lam_i > 1+\nu(\rho_i)$, and \eqref{phitg-est} holds for some $\lam'>1+\nu(\delta)$.
\end{definition}

The following proposition reduces proving inequalities of the type of \eqref{weak-prop-est'} to showing that $\phi_t$ is a one-photon weak or strong propagation observable, i.e. to \textit{one-photon estimates} of $d \phi_t$ and $\phi_t g$.

\begin{proposition}\label{prop:prop-obs-meth}
If  $\phi_t$ is a one-photon weak (resp. strong) propagation observable, then we have either the weak estimate, \eqref{weak-prop-est'}, or the strong propagation estimate,
 \begin{align}\label{strong-prop-est''}
&\lan \psi_t, \Phi_{t} \psi_t\ran +\int_0^\infty dt \|G_t^{1/2}\psi_t\|^2\lesssim \|\psi_0\|^2_\diamondsuit+ \|\psi_0\|^2_* ,
\end{align}
with the norm $\|\psi_0\|^2_\#:=\|\psi_0\|^2_\diamondsuit+ \|\psi_0\|^2_* $, where  $\Phi_{t} :=  \d\G( \phi_t)$ and $G_t := \d \G( p_t)$, on the subspace $f( H )\cH\subset {\cH}$, with $f \in \mathrm{C}_0^\infty( (-\infty,\Sigma))$. Here $\|\psi_0\|_*:=\|\psi_0\|_\del $ and $\|\psi_0\|_\diamondsuit = \sum\|\psi_0\|_{\rho_i}$.
\end{proposition}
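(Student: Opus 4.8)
The plan is to lift the one-photon differential inequalities for $\phi_t$ to the Fock space via the standard $\d\G$-calculus and then integrate the resulting Heisenberg derivative along the evolution $\psi_t = e^{-itH}\psi_0$. First I would set $\Phi_t := \d\G(\phi_t)$ and compute the Heisenberg derivative $D\Phi_t := \partial_t \Phi_t + i[H,\Phi_t]$. Writing $H = \hp + H_f + I(g)$, the term $i[H_f,\Phi_t] = i[\d\G(\om),\d\G(\phi_t)] = \d\G(i[\om,\phi_t])$ combines with $\partial_t\Phi_t = \d\G(\partial_t\phi_t)$ to give exactly $\d\G(d\phi_t)$; the particle Hamiltonian $\hp$ commutes with $\Phi_t$ since $\phi_t$ acts only on the one-photon variable; and $i[I(g),\Phi_t]$ produces a linear term in creation/annihilation operators with kernel $\phi_t g$ (from the commutator $[\Phi_t, a^*(k)] = a^*(\phi_t k)$-type identities). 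Thus
\[
D\Phi_t = \d\G(d\phi_t) + I_{\phi}(g)_t, \qquad I_\phi(g)_t := -i\!\int\!\big((\phi_t g)^*(k)\otimes a(k) - \text{h.c.}\big)\,dk + (\text{symmetrized terms}).
\]
In the weak case I would then use \eqref{phit-cond1} to bound $\d\G(d\phi_t) \ge \d\G(p_t) + \sum_i \d\G(\mathrm{rem}_i) = G_t + \sum_i \d\G(\mathrm{rem}_i)$.

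The core estimates are then: (i) the boundary term $\langle\psi_t,\Phi_t\psi_t\rangle$ is controlled uniformly in $t$ by $\|\psi_0\|_\delta^2$ — this uses the first bound in \eqref{phit-cond1}, $\|\om^{-\delta/2}\phi_t\om^{-\delta/2}\| \lesssim t^{-\nu(\delta)}$, together with the second-quantization inequality $\pm\d\G(\phi_t) \le \|\om^{-\delta/2}\phi_t\om^{-\delta/2}\|\,\d\G(\om^\delta)$ and the moment bound \eqref{dGk-bnd}, giving $|\langle\psi_t,\Phi_t\psi_t\rangle| \lesssim t^{-\nu(\delta)}\langle\psi_t,\d\G(\om^\delta)\psi_t\rangle \lesssim \|\psi_0\|_\delta^2$ (in the weak case this term is discarded since $\phi_t\ge0$ makes it nonnegative, but one still needs it finite); (ii) each remainder term is integrable: $\pm\d\G(\mathrm{rem}_i) \le \|\om^{-\rho_i/2}\mathrm{rem}_i\om^{-\rho_i/2}\|\,\d\G(\om^{\rho_i}) \lesssim t^{-\lambda_i}\d\G(\om^{\rho_i})$, so $\int_0^\infty |\langle\psi_t,\d\G(\mathrm{rem}_i)\psi_t\rangle|\,dt \lesssim \int_0^\infty t^{-\lambda_i}t^{\nu(\rho_i)}\,dt\,\|\psi_0\|_{\rho_i}^2 < \infty$ by the assumption $\lambda_i > 1+\nu(\rho_i)$; and (iii) the interaction term is integrable: one estimates $|\langle\psi_t, I_\phi(g)_t\psi_t\rangle|$ by a Cauchy-Schwarz argument of the standard form $\|a(h)\psi\| \le \|h/\sqrt\om\|\,\|\d\G(\om)^{1/2}\psi\|$ (and its $\eta$-weighted refinement using \eqref{eta-bnd} to absorb the particle localization, exactly as the hypothesis \eqref{phitg-est} is phrased with the $\eta\phi_t g$ weighted $L^2$ norm), yielding a bound $\lesssim t^{-\lambda'}\times(\text{bounded quantities})$ with $\lambda' > 1+\nu(\delta) > 1$, hence integrable. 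Putting these together and integrating $\frac{d}{dt}\langle\psi_t,\Phi_t\psi_t\rangle = \langle\psi_t,D\Phi_t\psi_t\rangle$ from $0$ to $\infty$ gives, in the weak case, $\int_0^\infty \|G_t^{1/2}\psi_t\|^2\,dt \le \langle\psi_0,\Phi_0\psi_0\rangle + (\text{integrable remainders}) \lesssim \|\psi_0\|_\#^2$, which is \eqref{weak-prop-est'}; the strong case \eqref{phit-cond2} is identical with signs reversed, keeping the (now genuinely needed) boundary term $\langle\psi_t,\Phi_t\psi_t\rangle$ on the left to obtain \eqref{strong-prop-est''}.

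The main obstacle is the rigorous justification of the formal computation of $D\Phi_t$ and the manipulations above on a dense domain — i.e. commutator estimates, the fact that $\psi_t \in f(H)\cH$ stays in the relevant operator domains (here \eqref{eta-bnd} and the low-moment bounds from \cite{BoFaSig} quoted in the excerpt are what make $\d\G(\om^\rho)^{1/2}\psi_t$ well-defined and polynomially bounded), and controlling the interaction commutator $[I(g),\Phi_t]$, which is an unbounded-times-unbounded object. The cleanest route is to work with a regularized version (e.g. $\Phi_t$ replaced by $\Phi_t(1+\epsilon N)^{-1}$ or with $\phi_t$ cut off), derive the differential inequality there where everything is bounded, and remove the regularization at the end using monotone/dominated convergence together with the uniform-in-$\epsilon$ bounds supplied by \eqref{dGk-bnd} and the domain assumption on $\psi_0$. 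This is routine in the propagation-observable literature, so I would state it as a lemma and refer to the standard arguments (cf. \cite{DerGer2, FGSig1, BoFaSig}), rather than reproduce the approximation scheme in detail.
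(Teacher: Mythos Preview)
Your proposal is correct and follows essentially the same route as the paper: lift $\phi_t$ to $\Phi_t=\d\G(\phi_t)$, compute $D\Phi_t=\d\G(d\phi_t)-I(i\phi_t g)$ via the relations \eqref{DdG-rel}, bound the boundary term by $\phi_t\lesssim t^{-\nu(\delta)}\omega^\delta$ and \eqref{dGk-bnd}, bound each $\d\G(\mathrm{rem}_i)$ by $t^{-\lambda_i}\d\G(\omega^{\rho_i})$, and control the interaction term via the $\eta$-weighted Cauchy--Schwarz estimate \eqref{I-est} together with \eqref{eta-bnd}. The only minor imprecision is your phrase ``$t^{-\lambda'}\times(\text{bounded quantities})$'': the factor $\|\psi_t\|_{\d\G(\omega^\delta)}$ contributes $t^{\nu(\delta)/2}$, so the actual bound is $t^{-\lambda'+\nu(\delta)}$, which is exactly why the hypothesis requires $\lambda'>1+\nu(\delta)$ rather than just $\lambda'>1$.
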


Before proceeding to the proof we present some useful definitions. Consider families $\Phi_t$ of operators on $\cH$ and introduce the Heisenberg derivative
\begin{equation*}
D \Phi_t:= \partial_t \Phi_t + i \big[ H, \Phi_t \big] ,
\end{equation*}
with the property
\begin{align}\label{dt-D}
\partial_t \< \psi_t, \Phi_t \psi_t  \> =\< \psi_t, D\Phi_t  \psi_t  \>.
\end{align}
\begin{definition}\label{def:po}
A family of operators  $\Phi_t$ on a subspace $\cH_1\subset {\cH}$  will be called a (second quantized) \textit{weak propagation observable}, if for all $\psi_0\in \cH_1$, it has the following properties
\begin{itemize}
\item $\sup_t \lan \psi_t, \Phi_{t} \psi_t\ran\lesssim \|\psi_0\|_*^2$;
\item $D\Phi_t\ge G_t +\mathrm{Rem}$, where $G_t\ge 0$ and $ \int_0^\infty dt \langle \psi_t , \mathrm{Rem} \, \psi_t \rangle \lesssim \|\psi_0\|^2_\diamondsuit, $
\end{itemize}
for some norms $\|\psi_0\|_*,\ \|\cdot\|_\diamondsuit\ge \|\cdot\|$. Similarly, a family of operators  $\Phi_t$ will be called a \textit{strong propagation observable}, if it has the following properties
\begin{itemize}
\item $\Phi_t$ is a family of non-negative operators;
\item $D\Phi_t\le -G_t +\mathrm{Rem}$, where $G_t\ge 0$ and
$ \int_0^\infty dt \langle \psi_t , \mathrm{Rem} \, \psi_t \rangle \lesssim \|\psi_0\|^2_\diamondsuit, $
\end{itemize}
for some norm $\|\cdot\|_\diamondsuit\ge \|\cdot\|$.
\end{definition}
If $\Phi_t$ is a weak propagation observable, then  integrating the corresponding differential inequality sandwiched by $\psi_t$'s and using the estimate on $\lan \psi_t, \Phi_{t} \psi_t\ran$ and on the remainder $\mathrm{Rem}$, we obtain the
(weak propagation) estimate \eqref{weak-prop-est'}, with $\|\psi_0\|^2_\#:=\|\psi_0\|^2_\diamondsuit+ \|\psi_0\|^2_*$.
If $\Phi_t$ is a strong propagation observable, then the same procedure leads to the (strong propagation) estimate
\begin{align}\label{strong-prop-est'}
&\lan \psi_t, \Phi_{t} \psi_t\ran +\int_0^\infty dt \|G_t^{1/2}\psi_t\|^2\lesssim \|\psi_0\|^2_\diamondsuit+ \|\psi_0\|^2_*.
\end{align}

\medskip

\noindent \textit{Proof of Proposition \ref{prop:prop-obs-meth}}.  Let $\Phi_t := \d\G( \phi_t )$. To prove the above statement we use the relations
\begin{equation}\label{DdG-rel}
D_0 \d\G(\phi_t)=  \d\G(d \phi_t) ,\qquad i[I(g), \d\G(\phi_t)]= - I( i \phi_t g),
\end{equation}
where  $D_0 $ is the free Heisenberg derivative,
\begin{equation*}
D_0 \Phi_t:= \partial_t \Phi_t + i [ H_0, \Phi_t ],
\end{equation*}
valid for any family of one-particle operators $\phi_t$, to compute
 \begin{align}\label{DPhi}
D\Phi_{t} = \d {\Gamma} (d\phi_t) - I ( i \phi_t g).
\end{align}

Denote $\lan A\ran_\psi:=\lan\psi, A\psi \ran$. Applying the Cauchy-Schwarz inequality, we find the following version of a standard estimate
\begin{eqnarray}\label{I-est}
|\langle I (g)\rangle_{\psi} |
\leq \big( \int \| \eta g \|_{\chp}^2
\omega^{-\delta}  d^3 k \big)^{\frac{1}{2}}
\| \eta^{-1} \psi \|\|  \psi \|_{\d\Gamma( \omega^\delta )}.
\end{eqnarray}
Using that $\psi_t=f_1( H )\psi_t$, with $f_1 \in \mathrm{C}_0^\infty( (-\infty,\Sigma)),\ f_1 f=f ,$ and using  \eqref{eta-bnd}, we find $\|  \eta^{-1}  \psi_t \| $ $\lesssim \| \psi_t \|$. Taking this into account, we see that the equations \eqref{I-est}, \eqref{phitg-est} and  \eqref{Hf-H-bnd}
yield
\begin{eqnarray}\label{I-est2}
|\langle I (i \phi_t g)\rangle_{\psi_t}| \lesssim t^{-\lam' + \nu(\delta) }\| \psi_0 \|_\delta^2.
\end{eqnarray}

Next, using \eqref{rem-est},  we find $\mathrm{rem}_i \le \| \omega^{-\rho_i/2} \, \mathrm{rem}_i \, \omega^{-\rho_i/2} \| \omega^{\rho_i} \lesssim t^{-\lam_i } \omega^{\rho_i}$. This gives $\d\G(\mathrm{rem_i}) \lesssim t^{-\lam_i} \d \Gamma ( \omega^{ \rho_i} )$,
which, due to the bound \eqref{dGk-bnd}, leads to the estimate
\begin{equation}\label{Rem-est}
 \lan \d\G(\mathrm{rem_i})\ran_{\psi_t}\lesssim t^{-\lam_i+\nu(\rho_i)}\|\psi_0\|^2_{\rho_i}.
\end{equation}

In the strong case, \eqref{phit-cond2} and \eqref{DPhi} imply
 \begin{align}\label{DPhi-est-s}
D\Phi_{t} \le - \d {\Gamma} (p_t) + \sum_{\textrm{finite}} \d {\Gamma} (\mathrm{rem_i}) - I ( i \phi_t g),
\end{align}
which together with \eqref{I-est2} and \eqref{Rem-est} implies that  $\Phi_t$ is a strong propagation observable.

In the weak case, \eqref{phit-cond1} and \eqref{DPhi} imply
 \begin{align}\label{DPhi-est-w}
D\Phi_{t} \ge  \d {\Gamma} (p_t) + \sum_{\textrm{finite}} \d {\Gamma} (\mathrm{rem_i}) - I ( i \phi_t g).
\end{align}
Next, since $\phi_t \le \| \omega^{-\delta/2} \phi_t \omega^{-\delta/2} \| \omega^{\delta} \lesssim t^{-\nu(\delta)} \omega^{\delta} $, we have $\d\G(\phi_t) \lesssim t^{-\nu(\delta)} \d \Gamma ( \omega^{ \delta} )$. Using this estimate and using again the bound  \eqref{dGk-bnd}, we obtain
\begin{eqnarray}\label{Phit-est}
\lan \psi_t, \Phi_{t} \psi_t\ran\lesssim t^{-\nu(\delta)}\langle \d\G(\omega^{\delta}) \rangle_{\psi_t} \lesssim \|\psi_0\|^2_\delta.
\end{eqnarray}
Hence  $\Phi_t$ is a weak propagation observable.
\qed

\begin{proposition}\label{prop:prop-obs-meth2}
Let $\phi_t$ be a one-photon family satisfying
\begin{itemize}

\item   either, for some  $\delta\ge0$ ,
\begin{equation}\label{phit-cond1'}
\| \omega^{-\delta/2} \phi_t \omega^{-\delta/2}\|\lesssim t^{-\nu(\delta)}\  \mbox{and}\ \quad
d \phi_t \ge p_t - d\tilde \phi_t + \mathrm{rem} ,
\end{equation}
or
\begin{equation}\label{phit-cond2'}
d \phi_t \le -p_t+ d\tilde \phi_t + \sum_{\mathrm{finite}} \mathrm{rem_i} ,
\end{equation}
where $p_t\ge 0$, $\mathrm{rem}_i$ are one-photon operators satisfying \eqref{rem-est}, and $\tilde\phi_t$ is a weak propagation observable,

\item  \eqref{phitg-est} holds.

\end{itemize}
 Then, depending on whether \eqref{phit-cond1'} or \eqref{phit-cond2'} is satisfied, $\Phi_{t} :=  \d\G( \phi_t)$ is a weak, or strong, propagation observable, with the norm $\|\psi_0\|_\diamondsuit = \|\psi_0\|_\rho$, on the subspace $f( H )\cH\subset {\cH}$, with $f \in \mathrm{C}_0^\infty( (-\infty,\Sigma))$,   and therefore we have either the weak or strong propagation estimates, \eqref{weak-prop-est'} or \eqref{strong-prop-est'}, on this subspace.
\end{proposition}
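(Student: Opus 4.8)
The plan is to reduce the statement to Proposition \ref{prop:prop-obs-meth} by absorbing the term $d\tilde\phi_t$ into a redefinition of the propagation observable, using that $\tilde\phi_t$ is itself a weak propagation observable. Concretely, in the weak case \eqref{phit-cond1'}, I would set $\hat\phi_t := \phi_t + \tilde\phi_t$ and check that $\hat\phi_t$ is a one-photon weak propagation observable in the sense of Definition \ref{def:po-one}. Indeed, $d\hat\phi_t = d\phi_t + d\tilde\phi_t \ge p_t + \mathrm{rem}$ by \eqref{phit-cond1'}, so the lower bound required in \eqref{phit-cond1} holds with the same $p_t$ and the same family of remainders; and the operator-norm bound $\|\omega^{-\delta/2}\hat\phi_t\omega^{-\delta/2}\| \lesssim t^{-\nu(\delta)}$ follows from the corresponding bound for $\phi_t$ in \eqref{phit-cond1'} together with the analogous bound for $\tilde\phi_t$, which is part of its being a weak propagation observable (this is where I need $\tilde\phi_t$ to carry a matching $\delta$; if its exponent differs one takes the larger of the two $\delta$'s and uses $\nu$ monotone, or simply records the bound in both weights). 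The condition \eqref{phitg-est} for $\hat\phi_t$ follows from \eqref{phitg-est} for $\phi_t$ and the analogous decay of $\tilde\phi_t g$. Then Proposition \ref{prop:prop-obs-meth} applied to $\hat\phi_t$ gives the weak estimate \eqref{weak-prop-est'} for $\d\G(\hat\phi_t) = \d\G(\phi_t) + \d\G(\tilde\phi_t)$; since $\d\G(\tilde\phi_t)$ is itself a weak propagation observable (by Proposition \ref{prop:prop-obs-meth} applied to $\tilde\phi_t$ alone, whose $G_t$-term only helps), subtracting it off leaves the weak propagation estimate for $\d\G(\phi_t)$, with $G_t = \d\G(p_t)$ and the stated norm $\|\psi_0\|_\diamondsuit = \|\psi_0\|_\rho$.

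For the strong case \eqref{phit-cond2'} the argument is parallel but one must be slightly more careful with signs. Here I would again consider $\hat\phi_t := \phi_t + \tilde\phi_t$, but now I want $\phi_t = \hat\phi_t - \tilde\phi_t$ and $\hat\phi_t$ to satisfy the \emph{strong} one-photon inequality \eqref{phit-cond2}. From \eqref{phit-cond2'}, $d\phi_t \le -p_t + d\tilde\phi_t + \sum \mathrm{rem}_i$, i.e. $d\hat\phi_t = d\phi_t + d\tilde\phi_t \le -p_t + 2\,d\tilde\phi_t + \sum\mathrm{rem}_i$. Since $\tilde\phi_t$ is a weak propagation observable, $d\tilde\phi_t \ge \tilde p_t + \sum \widetilde{\mathrm{rem}}_j \ge \sum\widetilde{\mathrm{rem}}_j$ gives a lower bound but what is needed is an \emph{upper} bound on $d\tilde\phi_t$ to keep $d\hat\phi_t \le -p_t + (\text{remainders})$. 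The clean way around this is to instead treat $\d\G(\phi_t)$ directly at the second-quantized level: by \eqref{DPhi}, $D\d\G(\phi_t) = \d\G(d\phi_t) - I(i\phi_t g)$, and \eqref{phit-cond2'} yields $D\d\G(\phi_t) \le -\d\G(p_t) + \d\G(d\tilde\phi_t) + \sum\d\G(\mathrm{rem}_i) - I(i\phi_t g)$. Now $\d\G(d\tilde\phi_t) = D_0\d\G(\tilde\phi_t) = D\d\G(\tilde\phi_t) + I(i\tilde\phi_t g)$, so moving $\d\G(\tilde\phi_t)$ to the left we get $D\big(\d\G(\phi_t) - \d\G(\tilde\phi_t)\big) \le -\d\G(p_t) + \sum\d\G(\mathrm{rem}_i) - I(i\phi_t g) + I(i\tilde\phi_t g)$. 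The terms $\d\G(\mathrm{rem}_i)$, $I(i\phi_t g)$, $I(i\tilde\phi_t g)$ are all integrable remainders by the estimates \eqref{Rem-est}, \eqref{I-est2} (the latter applied both to $\phi_t g$ and to $\tilde\phi_t g$, using \eqref{phitg-est} for each), exactly as in the proof of Proposition \ref{prop:prop-obs-meth}. The operator $\Phi_t := \d\G(\phi_t) - \d\G(\tilde\phi_t)$ is not manifestly non-negative, so to land in Definition \ref{def:po} I would add a fixed nonnegative operator if needed, or — cleaner — observe that since $\d\G(\tilde\phi_t)$ is a \emph{weak} propagation observable, $\sup_t\langle\psi_t,\d\G(\tilde\phi_t)\psi_t\rangle \lesssim \|\psi_0\|_*^2 < \infty$, so integrating the differential inequality for $\Phi_t$ between $0$ and $T$ and using this uniform bound on $\d\G(\tilde\phi_t)$ on both ends gives $\langle\psi_T,\d\G(\phi_t)\psi_T\rangle_{t=T} + \int_0^T \|\d\G(p_t)^{1/2}\psi_t\|^2\,dt \lesssim \|\psi_0\|_\diamondsuit^2 + \|\psi_0\|_*^2$, which is precisely the strong propagation estimate \eqref{strong-prop-est'}. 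Positivity of $\phi_t$ (hence of $\d\G(\phi_t)$) is available by hypothesis in the setup where this proposition is invoked.

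The main obstacle, and the point requiring the most care, is the bookkeeping of the weights and decay exponents when combining $\phi_t$ with $\tilde\phi_t$: one must ensure that the $\delta$ appearing in the operator-norm bound \eqref{phit-cond1'} is compatible with the weight carried by $\tilde\phi_t$ as a weak propagation observable, and that the remainder exponents $\lambda_i$ (and $\lambda'$) for the combined object still beat $1 + \nu(\rho_i)$ (resp. $1 + \nu(\delta)$). Since $\tilde\phi_t$ already satisfies \eqref{rem-est} and \eqref{phitg-est} with admissible exponents by virtue of being a weak propagation observable, and since $\nu$ is the same function throughout, these conditions are inherited automatically; the only genuine check is that the norms $\|\psi_0\|_*$ and $\|\psi_0\|_\diamondsuit$ produced for $\tilde\phi_t$ are dominated by those we are allowed to use, which they are because $\|\cdot\|_\rho$ with the relevant $\rho$'s and $\|\cdot\|_\delta$ are all controlled by the norms $\|\cdot\|_\rho$ appearing in the statement (using \eqref{dGk-bnd} and monotonicity in $\rho$). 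Once this is in place the rest is the routine manipulation of Heisenberg derivatives and the Cauchy--Schwarz estimate \eqref{I-est} already carried out in the proof of Proposition \ref{prop:prop-obs-meth}.
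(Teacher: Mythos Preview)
Your proposal is essentially correct and, in the strong case, coincides with the paper's argument. The paper's proof is a one-liner: starting from the computation $D\d\G(\phi_t)=\d\G(d\phi_t)-I(i\phi_t g)$ of Proposition~\ref{prop:prop-obs-meth}, the only new term to control is $\pm\d\G(d\tilde\phi_t)$; writing $\d\G(d\tilde\phi_t)=D\tilde\Phi_t+I(i\tilde\phi_t g)$ with $\tilde\Phi_t:=\d\G(\tilde\phi_t)$ and integrating, one gets
\[
\int_0^T \langle \d\G(d\tilde\phi_t)\rangle_{\psi_t}\,dt
= \langle\tilde\Phi_T\rangle_{\psi_T}-\langle\tilde\Phi_0\rangle_{\psi_0}
+\int_0^T \langle I(i\tilde\phi_t g)\rangle_{\psi_t}\,dt,
\]
which is bounded uniformly in $T$ since $\tilde\phi_t$ is a weak one-photon propagation observable (so $\sup_t\langle\tilde\Phi_t\rangle_{\psi_t}\lesssim\|\psi_0\|_*^2$) and the $I$-term is integrable by \eqref{I-est2}. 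This is precisely what you do in your strong-case paragraph.

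Your weak-case detour through $\hat\phi_t=\phi_t+\tilde\phi_t$ is a legitimate variant, but the ``subtracting off'' step at the end is unnecessary: the weak propagation estimate \eqref{weak-prop-est'} is a statement about $G_t=\d\G(p_t)$, which is the same for $\hat\phi_t$ and $\phi_t$, so once you have it for $\hat\phi_t$ you are done. More to the point, the redefinition is not needed at all --- the same integration-by-parts argument you use in the strong case handles the weak case verbatim (just with the opposite sign on $d\tilde\phi_t$), and this is cleaner since it avoids having to match the weight $\delta$ and the decay of $\hat\phi_t g$ for the combined object. Your remark about ``$\nu$ monotone'' is not actually available from the hypotheses, so it is better to sidestep it.
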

\begin{proof} Given Proposition \ref{prop:prop-obs-meth2}  and its proof, the only term we have to control is $\d\G( d\tilde\phi_t)$. Using that $\tilde\phi_t$ is a weak propagation observable and using \eqref{dt-D}, \eqref{DPhi} and \eqref{I-est2} for $\tilde\Phi_t :=\d\G( \tilde\phi_t)$, we obtain
\begin{equation}\label{dGtildephi-est}
| \int_0^{\infty} dt \lan \tilde \Phi_t \ran_{\psi_t}| \lesssim \|\psi_0\|^2_\# ,
\end{equation}
 with $\|\psi_0\|^2_\#:=\|\psi_0\|^2_\diamondsuit+ \|\psi_0\|^2_*$ ($\|\psi_0\|_\diamondsuit$ and $ \|\psi_0\|_*$ might be different now), which leads to the desired estimates.
 \end{proof}

\bigno{\bf Remarks.}

1) Proposition \ref{prop:prop-obs-meth} reduces a proof of propagation estimates for the dynamics \eqref{SE} to estimates involving the \textit{one-photon} datum $(\om, g)$ (an `effective one-photon system'), parameterizing the hamiltonian \eqref{H}.
(The remaining datum $\hp$ does not enter our analysis explicitly, but through the bound states of $\hp$ which lead to the localization in the particle variables, \eqref{exp-bnd}).

2) The condition on the remainder in \eqref{phit-cond1} can be weakened to $\mathrm{rem}=\mathrm{rem}'+\mathrm{rem}''$, with $\mathrm{rem}'$ and $\mathrm{rem}''$ satisfying \eqref{rem-est} and
\begin{equation}\label{rem-est''}
| \mathrm{rem}'' | \lesssim \chi_{|y|\ge \bar c t},
\end{equation}
for $\bar c$ as in \eqref{maxvel-est}, respectively. Moreover, \eqref{rem-est} can be further weakened to
\begin{equation}\label{rem-est-w}
\int_0^\infty |Ê\langle \psi_t , \d\G( \mathrm{rem}_i ) \psi_t \rangle | <\infty.
\end{equation}

3) An iterated form of Proposition \ref{prop:prop-obs-meth2} is used to prove Theorem \ref{thm:mve1y}.

\bigskip

\section{The first propagation estimate}\label{sec:pf-mve1}

Let $\nu(\del)\ge 0$ be the same as in \eqref{dGk-bnd} and recall the operator $b_\e$ defined in the introduction. We write it as
\begin{align}\label{be}
b_\e :=\frac{1}{2}(\theta_\e \nabla \omega \cdot y + y \cdot \nabla \omega  \theta_\e),\ \quad \mbox{where}\ \quad \theta_\e :=\frac{\omega}{\omega_{\e}},\ \omega_{\e}:=\omega + \e,\ \e=t^{-\kappa}.
\end{align}

\begin{theorem}\label{thm:mve1}
Assume \eqref{g-est} with $\mu > -1/2$ and \eqref{eta-bnd}. Let $\nu(  - 1 ) - \nu( 0 ) < \kappa < 1$.

If  either $\beta <1$, or $\beta =1$ and $c<1$, and
\begin{equation}\label{eq:cond2beta}
\beta > \max( ( 3/2 + \mu )^{-1}, (1+\kappa)/2 , 1 - \kappa + \nu(  - 1 ) - \nu ( 0 ) ),
\end{equation}
then for any initial condition $\psi_0 \in f (H) D(\d\G( \omega^{-1} )^{1/2})$, for some $f \in \mathrm{C}_0^\infty( (-\infty, \Sigma) )$, the Schr\"odinger evolution, $\psi_t$, satisfies, for any $a>1$, the following estimates
\begin{align}\label{mve1}
\int_1^\infty dt\  t^{-\beta-a\nu(0)}
\| \d {\Gamma} (\chi_{\frac{b_\e}{ct^{\beta}} =1})^{\frac{1}{2}} \psi_t \|^2 \lesssim  \| \psi_0 \|_{-1}^2 .
\end{align}

If  $\nu(0) = 0$, $\mu>0$, and $\beta$ satisfies \eqref{eq:cond2beta} and $\beta < \frac{1}{\bar c}$, with $\bar c > 1$, then, with  the notation $\chi\equiv \chi_{(\frac{|y|}{\bar c t})^2 \le 1}$,
\begin{align}\label{mve1'}
\int_1^\infty dt\  t^{-\beta}
\| \d {\Gamma} (\theta_\e^{1/2}\chi\chi_{\frac{b_\e}{ct^{\beta}} =1}\chi\theta_\e^{1/2})^{\frac{1}{2}} \psi_t \|^2
 \lesssim  \| \psi_0 \|_{0}^2 .
\end{align}
\end{theorem}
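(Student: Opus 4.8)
The strategy is to construct, for each of the two estimates, a one-photon (weak) propagation observable $\phi_t$ built from the symbol $b_\e/(ct^\beta)$ and then invoke Proposition \ref{prop:prop-obs-meth} (for \eqref{mve1} an iterated version via Proposition \ref{prop:prop-obs-meth2}, as promised in Remark 3 of Section \ref{sec:propobs}). For \eqref{mve1}, I would take $\phi_t := \chi_{\cdot\ge 1}^2(b_\e/(ct^\beta))$ or, more conveniently, a primitive of $\chi_{\cdot=1}$, say $\phi_t = F(b_\e/(ct^\beta))$ with $F$ smooth, nondecreasing, $F'=\chi_{\cdot=1}^2\ge 0$, $F$ bounded; the $y$-weighted bound needed for \eqref{phit-cond1}, i.e. $\|\omega^{1/2}\phi_t\omega^{1/2}\|\lesssim t^{-\nu(-1)}$ coming from $\|\psi_0\|_{-1}$, is where the exponent $a\nu(0)$ and the condition $\beta>1-\kappa+\nu(-1)-\nu(0)$ will enter. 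The heart is computing $d\phi_t = \partial_t\phi_t + i[\omega,\phi_t]$. Here $\partial_t\phi_t$ produces two contributions: the explicit $t$-derivative of the prefactor $t^{-\beta}$, which is \emph{good} (it has the sign of $-\beta \chi_{\cdot=1}^2$, i.e.\ it is $\le -\,\mathrm{const}\, t^{-1}p_t$ with $p_t\sim \chi_{\cdot=1}^2$), and the derivative hitting $\e = t^{-\kappa}$ inside $\theta_\e$, which is a remainder of size $O(t^{-1-\kappa+\cdots})$ and must be absorbed using $\beta>(1+\kappa)/2$. The commutator term $i[\omega,b_\e]$: since $b_\e$ is (half of) the generator of the flow of the vector field $\nabla\omega\,\theta_\e$ on $k$-space, one has $i[\omega,b_\e] = \nabla\omega\cdot\nabla\omega\,\theta_\e + (\text{lower order}) = |\nabla\omega|^2\theta_\e + O(\e)$; for $\omega=|k|$ this is $\theta_\e + O(\e)$, which is positive and \emph{bounded}, hence $i[\omega,\phi_t] = \frac{1}{ct^\beta}F'(b_\e/(ct^\beta))\,i[\omega,b_\e] + (\text{commutator corrections})$ is $O(t^{-\beta})$ and can be split into a piece controlled by $p_t$ and remainders. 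Collecting, $d\phi_t \ge p_t + \sum \mathrm{rem}_i$ with $p_t \gtrsim t^{-\beta}\chi_{\cdot=1}^2(b_\e/(ct^\beta))$ and the $\mathrm{rem}_i$ satisfying \eqref{rem-est} precisely under \eqref{eq:cond2beta}; the low-momentum factor $\omega^{-1}$ in some remainders (from near $k=0$, where $\theta_\e$ degenerates) forces $\beta>(3/2+\mu)^{-1}$ via the bound $\nu(-1)\le(3/2+\mu)^{-1}$ of Appendix \ref{sec:low-mom}. The $g$-estimate \eqref{phitg-est} follows from \eqref{g-est} with $\mu>-1/2$ as in Section \ref{sec:propobs}. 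Feeding this $\phi_t$ into Proposition \ref{prop:prop-obs-meth} (iterated to handle the $t^{-a\nu(0)}$ weight, which is why $a>1$ rather than $a=1$) yields \eqref{mve1}.

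For the second estimate \eqref{mve1'}, when $\nu(0)=0$ and $\mu>0$, I would use the maximal velocity bound \eqref{maxvel-est}: choose the observable $\phi_t := \theta_\e^{1/2}\chi\, F(b_\e/(ct^\beta))\,\chi\,\theta_\e^{1/2}$ with $\chi = \chi_{(|y|/\bar c t)^2\le 1}$. The point of conjugating by $\chi$ is twofold: it makes $b_\e$ effectively bounded on the support of $\chi$ ($|b_\e|\lesssim |y|\lesssim \bar c t$), so all commutators become manifestly controlled; and it lets us use \eqref{maxvel-est} to discard the region $|y|\gtrsim \bar c t$ at the cost of an integrable error (requiring $\beta<1/\bar c$ so that on $\supp\chi$ the inequality $b_\e\ge ct^\beta$ is compatible). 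Now $d\phi_t$ has the same good terms as before plus terms where $d$ hits the $\chi$ factors or the $\theta_\e^{1/2}$ factors; $d\chi = \partial_t\chi + i[\omega,\chi]$ has $\partial_t\chi \le 0$ of order $t^{-1}$ (another good sign, or a controllable remainder localized in $|y|\asymp \bar c t$, of the type \eqref{rem-est''} allowed by Remark 2 of Section \ref{sec:propobs}) and $i[\omega,\chi]$ is $O(t^{-1})$ and localized likewise; $i[\omega,\theta_\e]$ is $O(\e)=O(t^{-\kappa})$. The main positive term is again $p_t \gtrsim t^{-\beta}\,\theta_\e^{1/2}\chi\,\chi_{\cdot=1}^2(b_\e/(ct^\beta))\,\chi\,\theta_\e^{1/2}$, giving \eqref{mve1'} via Proposition \ref{prop:prop-obs-meth}; here $\|\psi_0\|_0$ suffices because $\nu(0)=0$ kills the weight and $\mu>0$ gives a little room in the infrared.

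The main obstacle, in both parts, is the bookkeeping of the \emph{infrared} (small-$k$) region against the time-regularization scale $\e = t^{-\kappa}$. The operator $b_\e$ was introduced precisely to tame the non-self-adjointness of $b_0 = \frac12(k/\omega)\cdot y + \text{h.c.}$; but the price is that $\theta_\e = \omega/(\omega+\e)$ and its derivatives lose powers of $\omega$ near $k=0$, so the commutator and $\partial_t$ computations generate terms of the form $\e\cdot(\text{stuff})$ and $\omega^{-1}\cdot(\text{stuff})$, and it is the simultaneous requirement that (i) these be integrable in $t$ (via $\nu(-1)\le(3/2+\mu)^{-1}$ and the exponent arithmetic), (ii) $\e$-derivative terms be integrable (via $\beta>(1+\kappa)/2$), and (iii) the $\kappa$-window $\nu(-1)-\nu(0)<\kappa<1$ be nonempty, that produces exactly the three-way max in \eqref{eq:cond2beta}. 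Getting every remainder into the form \eqref{rem-est} (or the weakened forms of Remark 2) with the right exponent $\lambda_i > 1+\nu(\rho_i)$, uniformly in $\e = t^{-\kappa}$, is the delicate part; the rest is the now-standard propagation-observable machinery of Section \ref{sec:propobs}.
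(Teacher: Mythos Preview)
Your overall strategy --- a one-photon propagation observable built from $\chi_{v\ge 1}$ with $v=b_\e/(ct^\beta)$, fed into Proposition~\ref{prop:prop-obs-meth} --- is exactly what the paper does. But two points of your plan are off and, as written, would not close.

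First, the weight $t^{-a\nu(0)}$ is not produced by iteration; it is built directly into the observable. The paper takes $\phi_t := t^{-a\nu(0)}\chi_{v\ge 1}$. The prefactor is what makes the first condition in \eqref{phit-cond1} hold with $\delta=0$: one needs $\|\phi_t\|\lesssim t^{-\nu(0)}$, and your bounded $F(v)$ alone is only $O(1)$, so \eqref{phit-cond1} fails whenever $\nu(0)>0$. The $t$-derivative of the prefactor contributes the remainder $\mathrm{rem}_2=-a\nu(0)\,t^{-1}\phi_t$, whose exponent is $\lambda_2=1+a\nu(0)$; the integrability condition $\lambda_2>1+\nu(0)$ is precisely $a>1$. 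Remark~3 of Section~\ref{sec:propobs}, which you cite, refers to Theorem~\ref{thm:mve1y}, not to the present theorem; no iteration is used here.

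Second, the sign analysis is inverted. This is a \emph{weak} observable, so one wants $d\phi_t\ge p_t+\mathrm{rem}$ with $p_t\ge 0$. The dominant positive contribution comes from the commutator: $i[\omega,b_\e]=\theta_\e$ exactly (no $O(\e)$ correction), giving $\frac{\theta_\e}{ct^\beta}\chi'_\beta$. The $\partial_t$ acting on $t^{-\beta}$ inside $v$ gives the \emph{smaller} negative piece $-\frac{\beta}{t}\chi'_\beta$; their sum $p_t=(\frac{1}{ct^\beta}-\frac{\beta}{t})\chi'_\beta\ge 0$ under the hypothesis $\beta<1$ (or $\beta=1$, $c<1$). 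Treating the $-\beta/t$ piece as the ``good'' term and the commutator as something to be ``controlled'' cannot work, since $t^{-\beta}\gg t^{-1}$. Relatedly, the condition $\beta>(3/2+\mu)^{-1}$ enters through the coupling estimate \eqref{phitg-est} (Lemma~\ref{lem:chig-est}), not through infrared remainders; the infrared remainder $\mathrm{rem}_4=\omega_\e^{-1/2}\chi'_\beta\omega_\e^{-1/2}\cdot O(t^{-\beta-\kappa-a\nu(0)})$ (with $\rho_4=-1$) is what forces $\beta>1-\kappa+\nu(-1)-\nu(0)$, while $\mathrm{rem}_1=O(t^{-2\beta+\kappa-a\nu(0)})$ from the second-order commutator expansion forces $\beta>(1+\kappa)/2$.

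For \eqref{mve1'} your plan is close to the paper's, with two small corrections. The paper's observable is $\phi_t=\chi\,\chi_\beta\,\chi$ --- the $\theta_\e^{1/2}$ conjugation appears only in the resulting $p_t$, not in $\phi_t$ itself. And the cross-terms $(d\chi)\chi_\beta\chi+\mathrm{h.c.}$ are not discarded as remainders via \eqref{rem-est''}; rather, since $\chi'\le 0$ and $dw\le(\tfrac{1}{\bar c}-1)\,t^{-1}<0$ on $\supp\chi'$, they contribute a second nonnegative term $\tilde p_t$, which is simply kept on the good side.
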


\begin{proof}
We will use the method of propagation observables outlined in Section \ref{sec:propobs}. We consider the one-parameter family of one-photon operators
\begin{align}\label{mv-obs}
\phi_t:= t^{-a \nu(0) } \chi_{v \geq 1},\quad v := \frac{b_\e}{c t^{\beta}} ,
\end{align}
where $a>1$. To show that $\phi_{t}$ is a weak one-photon propagation observable, we obtain differential inequalities for  $\phi_t$.  We use the notation $\chi_\beta\equiv \chi_{v \geq 1}$.
To compute $d \phi_t$,  we use the expansion
\begin{align}\label{dphit-exp}
d  \phi_t=  t^{-a\nu(0)}(d v) \chi_\beta' +  \sum_{i=1}^2 \mathrm{rem}_i  , \quad \mathrm{rem}_1:= t^{-a\nu(0)}[d \chi_\beta -  (d v) \chi_\beta'],\quad \mathrm{rem}_2:= -a\nu(0)t^{-1}\phi_{t}.
\end{align}
Using the definitions in \eqref{be}, we compute
\begin{align}\label{dv}
d v &= \frac{\theta_\e}{c t^{\beta}} -\frac{\beta b_\e}{c t^{\beta+1}} + \frac{1}{c t^{\beta}}\p_t b_\e.
\end{align}

Next,  we have $\p_t b_\e=\frac{\kappa}{2 t^{1+\kappa}} (\omega_{\e}^{-1} \theta_\e \nabla \om \cdot y+ {\hbox{ h.c.}})$ on $D( b_\e )$, which, due to the relation $\frac{1}{2} (\omega_{\e}^{-1} \theta_\e \nabla \om \cdot y+ {\hbox{ h.c.}}) = \omega_{\e}^{-1/2} b_\e \omega_{\e}^{-1/2}$, becomes
\begin{align}\label{dtb2}
\p_t b_\e=\frac{\kappa}{ t^{1+\kappa}} \omega_{\e}^{-1/2} b_\e \omega_{\e}^{-1/2}.
\end{align}
Using that (see Lemma \ref{lm:comm_est} of Appendix \ref{sec:tech})
\begin{align*}
\omega_{\e}^{-1/2} b_\e \omega_{\e}^{-1/2} \chi'_\beta = \omega_{\e}^{-1/2} b_\e \chi'_\beta \omega_{\e}^{-1/2} + \CO( t^\kappa ),
\end{align*}
and that $b_\e \ge 0$ on $\supp \chi'_\beta$, we obtain
\begin{align}\label{dtb3}
\frac{1}{c t^{\beta}}\p_t b_\e \chi_\beta' \ge -\frac{\mathrm{const}}{ t^{1 + \beta-\kappa}}.
\end{align}
The relations \eqref{dphit-exp}--\eqref{dtb3}, together with $\frac{ b_\e }{ c t^\beta } \chi'_\beta \le \chi'_\beta$,   imply
\begin{align}\label{dphi-est2}
&d  \phi_t\ge   (\frac{\theta_\e}{c t^{\beta}}- \frac{\beta}{t}) \chi'_\beta + \sum_{i=1}^3 \mathrm{rem}_i ,
\end{align}
where $\mathrm{rem}_1$ and $\mathrm{rem}_2$ are given in \eqref{dphit-exp} and
\begin{align}\label{rem2}
&\mathrm{rem}_3 =\CO( t^{-1- \beta+ \kappa- a \nu(0) }).
\end{align}
This,  together with $\theta_\e =1-\frac{t^{-\kappa}}{\omega_\e}$ and $\omega_\e^{-1} \chi'_\beta = \omega_\e^{-1/2} \chi'_\beta \omega_\e^{-1/2} + \CO( t^{-\beta + \kappa} )$ (see again Lemma \ref{lm:comm_est} of Appendix \ref{sec:tech}), implies
\begin{align}\label{dphi-est3}
&d  \phi_t\ge (\frac{1}{c t^{\beta}}- \frac{\beta}{t}) \chi'_\beta + \sum_{i=1}^4 \mathrm{rem}_i ,\ \quad \mathrm{rem}_4:=\frac{1}{c t^{\beta+\kappa+ a \nu(0) }} \omega_\e^{-1/2} \chi'_\beta \omega_\e^{-1/2}.
\end{align}

We have $\| \phi_{t} \|\le t^{-a\nu(0)}$ and therefore, due to \eqref{dGk-bnd},  the first estimate in \eqref{phit-cond1}  holds.  If either $\beta <1$ (and $t$ sufficiently large), or $\beta =1$ and $c<1$, then $p_t := (\frac{1}{c t^{\beta}}- \frac{\beta}{t})$ is non-negative, which implies the second estimate in  \eqref{phit-cond1}.
Thus \eqref{phit-cond1} holds. By the definition \eqref{dphit-exp} and Corollary \ref{cor:rem1-est} of Appendix \ref{sec:tech} for $i=1$, and by an explicit form  for $i=2, 3, 4$, we have the estimates
\begin{equation}\label{remi-est}
  \| \omega^{-\rho_i/2} \, \mathrm{rem}_i \, \omega^{-\rho_i/2}\|\lesssim t^{-\lam_i},
\end{equation}
$i=1, 2, 3,4$, with $\rho_1 = \rho_2 = \rho_3 = 0$, $ \rho_4 = -1$, $\lam_1=2\beta - \kappa +a\nu(0)$, $\lam_2=1+a\nu(0)$, $\lam_3 = 1 + \beta - \kappa + a \nu(0)$, and $\lam_4= \beta + \kappa +a\nu(0)$. We remark here that the $i=2$ term is absent if $\nu(0)=0$.

The relation \eqref{remi-est} together with the \textit{assumption} $\kappa\le 1$ implies \eqref{rem-est} with $\rho =\rho_i$ and  $\lam =\lam_i$, for $\mathrm{rem}=\mathrm{rem}_i$, provided $\lam_i> 1+ \nu (\rho_i)$.

Finally, \eqref{phitg-est} with $\lam' < a \nu( 0 ) + (\frac{3}{2} + \mu)\beta$, holds, by \cite[Lemma 3.1]{BoFaSig}, with $b_\e$ instead of $|y|$ (See Lemma \ref{lem:chig-est} in Appendix \ref{sec:tech} of the present paper.). Hence  $\phi_t$ is a  weak one-photon propagation observable, provided $2\beta> 1 + \kappa + \nu (0)-a\nu(0)$, $ \beta - \kappa > \nu (0) - a\nu(0)$, $\beta + \kappa >1 + \nu( -1 ) - a \nu(0)$, and $(\frac{3}{2} + \mu )\beta>1$. Therefore, by Proposition \ref{prop:prop-obs-meth} and under the conditions on the parameters above,
\begin{align}\label{est1}
\int_1^\infty dt\  t^{-\beta-a\nu(0)}\| d {\Gamma} ( \chi'_\beta )^{\frac{1}{2}} \psi_t \|^2 \lesssim  \| \psi_0 \|_{-1}^2 .
\end{align}
This, due to the definition of $\chi'_\beta$, implies  the estimate \eqref{mve1}.

\smallskip

 We  now prove \eqref{mve1'}. We use again the notation $\chi_\beta\equiv \chi_{v \geq 1}$, where $v :=\frac{b_\e}{c t^{\beta}}$, and we denote $w:=(\frac{|y|}{\bar c t})^2$. We consider the one-parameter family of one-photon operators
\begin{align}\label{mv-obs_b}
\phi_t:=\chi\chi_\beta\chi,
\end{align}
and show that $\phi_t$ is a weak one-photon propagation observable.  We have  $\|\phi_t \|\le 1$ and therefore, due to \eqref{dGk-bnd} and  the assumption $\nu(0)=0$, the first estimate in \eqref{phit-cond1}  holds. Now, we show the second estimate in  \eqref{phit-cond1}.
To compute $d \phi_t$,  we use the expansion
\begin{align}\label{dphit-exp_2}
d  \phi_t= \chi (d v) \chi_\beta'\chi&+\chi' (d w) \chi_\beta\chi +\chi\chi_\beta (d w)\chi'  + \sum_{i=1,2} \mathrm{rem}_i ,
\end{align}
where
\begin{align}\label{rem1_2}
\mathrm{rem}_1& :=\chi (d \chi_\beta -  (d v) \chi_\beta')\chi ,\ \quad
 \mathrm{rem}_2:= (d \chi - ( d w ) \chi' ) \chi_\beta \chi + \mathrm{h.c.} .
\end{align}

As in \eqref{dv}--\eqref{dtb3}, we have
\begin{align}\label{dv4}
\chi (d v) \chi_\beta'\chi \ge \chi ( \frac{\theta_\e}{c t^{\beta}} -\frac{\beta b_\e}{c t^{\beta+1}} ) \chi_\beta' \chi + \mathrm{rem}_3 ,
\end{align}
where $\mathrm{rem}_3 = \CO( t^{-1-\beta+\kappa} )$. We consider the term $ -\frac{\beta b_\e}{c t^{\beta+1}}$ in \eqref{dv4}. Since $b_\e = \theta_\e^{1/2} b \theta_\e^{1/2}$, we obtain, using in particular Lemma \ref{lm:comm_est} of Appendix \ref{sec:tech}, that
\begin{align*}
\chi b_\e \chi'_\beta \chi &= \chi (\chi'_\beta)^{1/2} \theta_\e^{1/2} b \theta_\e^{1/2} (\chi'_\beta)^{1/2} \chi \notag \\
&= \theta_\e^{1/2} (\chi'_\beta)^{1/2} \chi b \chi (\chi'_\beta)^{1/2} \theta_\e^{1/2}  + \CO( t^\kappa ),
\end{align*}
and the maximal velocity cut-off gives $\chi b \chi \le \bar c t$. Thus, commuting again $\chi$ through $\theta_\e^{1/2}$ and  $(\chi_\beta')^{1/2}$, we obtain
\begin{align}\label{dv-part-est_3_2}
- \chi \frac{\beta b_\e}{c t^{\beta+1}} \chi_\beta'\chi \ge - \frac{\beta \bar c}{c t^{\beta}}  \chi \theta_\e^{1/2} \chi_\beta' \theta_\e^{1/2}\chi + \CO(\frac{1}{ t^{1 + \beta-\kappa}}).
\end{align}
Proceeding in the same way for the term $\frac{\theta_\e}{c t^{\beta}}$ in \eqref{dv4} gives
\begin{align}\label{dv-part-est}
\chi(\frac{\theta_\e}{c t^{\beta}} -\frac{\beta b_\e}{c t^{\beta+1}}) \chi_\beta'\chi \ge \frac{1-\beta \bar c}{c t^{\beta}}  \chi\theta_\e^{1/2} \chi_\beta' \theta_\e^{1/2}\chi + \CO(\frac{1}{ t^{2\beta-\kappa}}).
\end{align}

Next, we compute $d w =2(\frac{b}{(\bar c t)^2}  - (\frac{|y|}{\bar c t})^2\frac{1}{t})$, where, recall, $b = \frac12 ( \nabla \omega \cdot y + \mathrm{h.c.})$. By Lemma \ref{lm:comm_est} of Appendix \ref{sec:tech}, we have
\begin{align}\label{dphit-exp_3_2}
\chi' (d w) \chi_\beta\chi +\chi\chi_\beta (d w) \chi'  = - 2 (\chi_\beta)^{1/2} (-\chi' \chi)^{1/2} (d w) (-\chi' \chi)^{1/2} (\chi_\beta)^{1/2} + \CO( \frac{1}{ t^{1+\beta-\kappa} } ) .
\end{align}
Using that $d w\le (\frac{1}{\bar c }-1)\frac{1}{t} $ on the support of $\chi'$ and that $\chi'\le 0$ and $\bar c>1$, we obtain
\begin{align}\label{dchi-part-est_2}
(-\chi' \chi)^{1/2} (d w) (-\chi' \chi)^{1/2} \ge (1  - \frac{1}{\bar c })\frac{1}{t}(-\chi' \chi).
\end{align}
The relations \eqref{dphit-exp_2}, \eqref{dv4}, \eqref{dphit-exp_3_2} and \eqref{dchi-part-est_2} imply
\begin{align}\label{dphit-exp2}
d  \phi_t &\ge p_t +\tilde p_t- \sum_{i=1, 2, 3,4} \mathrm{rem}_i,
\end{align}
where $\mathrm{rem}_4= \CO(\frac{1}{ t^{2\beta-\kappa}})$ and
\begin{align}\label{p-rem_2}
&p_t := \frac{1-\beta \bar c}{c t^{\beta}} \theta_\e^{1/2} \chi\chi_\beta' \chi\theta_\e^{1/2},\\&
\tilde p_t := (1  - \frac{1}{\bar c })\frac{1}{t} \chi_\beta^{1/2}(-\chi')\chi\chi_\beta^{1/2}.
\end{align}

The terms $p_t$ and $\tilde p_t$ are non-negative, provided $\beta <1/\bar c$ and $\bar c>1$. Together with the assumption $\nu(0)$, this implies \eqref{phit-cond1}.  Next, we claim the estimates
\begin{equation}\label{remi-est_2}
  \|  \mathrm{rem}_i \|\lesssim t^{-\lam},
\end{equation}
$i=1, 2, 3,4$, with  $\lam=2\beta-\kappa$. Indeed, the definition \eqref{rem1_2} and Corollary \ref{cor:rem1-est} of Appendix \ref{sec:tech} imply \eqref{remi-est_2} for $i=1$. The estimate  for $i= 3,4$ are obvious. To estimate $\mathrm{rem}_2$, we write
\begin{align*}
(d \chi-  ( d w ) \chi' ) \chi_\beta \chi = (d \chi - ( d w ) \chi' ) \frac{ b_\e }{ c t^\beta } \tilde \chi_\beta \chi ,
\end{align*}
where $\tilde \chi_\beta = ( \frac{ b_\e }{Êc t^\beta } )^{-1} \chi_\beta$, and $b_\e = \theta_\e b + i \e \omega_\e^{-2}$. Using that, by Lemma \ref{a65} of Appendix \ref{sec:tech},
\begin{equation*}
\big \|Êd \chi-  ( d w ) \chi' \| \lesssim t^{-1},
\end{equation*}
and commuting $b$ through $\tilde \chi_\beta$ gives
\begin{align}\label{eq:F1}
(d \chi- (  d w ) \chi' ) \chi_\beta \chi = \frac{1}{ c t^\beta } (d \chi - ( d w ) \chi' ) \theta_\e \tilde \chi_\beta b \chi + \CO( \frac{1}{ t^{1+\beta-\kappa} } ).
\end{align}
By Lemma \ref{a65}, we also have
\begin{equation*}
\big \|Ê(d \chi-  ( d w ) \chi' ) \omega \| \lesssim t^{-2}.
\end{equation*}
Combining this with \eqref{eq:F1} and the estimates $\omega_\e^{-1} = \CO( t^\kappa )$ and $b \chi = \CO( t )$, we obtain
\begin{align}\label{eq:F2}
(d \chi- ( d w ) \chi' ) \chi_\beta \chi =  \CO( \frac{1}{ t^{1+\beta-\kappa} } ) ,
\end{align}
and hence the estimate for $i=2$ follows.

The relation \eqref{remi-est_2} implies \eqref{rem-est} with $\lam = 2\beta - \kappa$, for $\mathrm{rem}=\mathrm{rem}_i$, provided $2\beta-\kappa > 1$. Finally, as above, \eqref{phitg-est} holds with $\lam' < a \nu( 0 ) + (\frac{3}{2} + \mu)\beta$ by Lemma \ref{lem:chig-est} of Appendix \ref{sec:tech}. This yields \eqref{mve1'}.
\end{proof}

\bigskip

\section{The second propagation estimate}\label{sec:pf-mve2}

We introduce the norm $\lan g \ran := \sum_{ |\alpha| \le 2} \| \eta^{| \alpha | } \p^{\alpha} g \|_{L^2(\R^3, \chp)}$,
for the coupling function $g$.

\begin{theorem}\label{thm:mve2}
Assume \eqref{g-est} with  $\mu > -1/2$, \eqref{eta-bnd} and \eqref{EVcond}. Let $\langle g \rangle$ be sufficiently small, $\nu(-1) < \kappa < 1$, and $0 < \alpha < 1$.
Let $f \in \mathrm{C}_0^\infty( (E_{ \mathrm{gs} } , \Sigma) )$ and $\psi_0 \in \cD := f (H) D(\d\G(\lan y\ran) )$.
Then the Schr\"odinger evolution, $\psi_t$, satisfies the estimate
\begin{equation}\label{mve2}
\| \Gamma (\chi_{b_\e \leq c' t^\alpha} )^{\frac{1}{2}} \psi_t \|\ \lesssim t^{-\del} \| \psi_0 \|_{\d \Gamma ( \langle y \rangle)^2 }\ ,\end{equation}
for $0\le \del < \frac12 \min(\kappa - \nu(-1) , 1 - \kappa , 1-\alpha - \nu(0) )$ and any $c'>0$.
\end{theorem}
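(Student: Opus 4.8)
The plan is to apply the second-quantized ($\Gamma$) form of the method of propagation observables of Section~\ref{sec:propobs} to a one-photon observable built from $b_\e$, extracting the polynomial rate $t^{-\del}$ from a Gronwall/iteration argument that uses $\alpha<1$.

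The core is a one-photon differential inequality. I would work with $\phi_t:=F\!\left(\tfrac{b_\e}{c't^\alpha}\right)$, where $F:\R\to[0,1]$ is smooth, non-increasing, $F\equiv0$ on $[1,\infty)$, and chosen so that $-F'$ controls $F$ on the relevant part of its support (so that the damping acts on $\phi_t$, not merely on a thin shell); as in the proof of Theorem~\ref{thm:mve1} one also sandwiches $\phi_t$ between the maximal velocity cut-offs $\chi_{|y|\le\bar c t}$ with $\bar c>1$ as in \eqref{maxvel-est}, which keeps commutators bounded and accounts for the extra power of $\d\Gamma(\langle y\rangle)$ in the final norm. Using $i[\omega,b_\e]=\theta_\e$ (since $i[\omega,y]=\nabla\omega$ and $|\nabla\omega|\equiv1$), the formula $\partial_t b_\e=\kappa t^{-1-\kappa}\omega_\e^{-1/2}b_\e\omega_\e^{-1/2}$ from \eqref{dtb2}, the chain rule, and the commutator lemmas of Appendix~\ref{sec:tech} to move $F'$, $\theta_\e$, $\omega_\e^{-1/2}$ past one another, one obtains
\begin{align*}
d\phi_t=F'\!\left(\tfrac{b_\e}{c't^\alpha}\right)\Big(\tfrac{\theta_\e}{c't^\alpha}-\tfrac{\alpha b_\e}{c't^{\alpha+1}}+\tfrac{1}{c't^\alpha}\,\partial_t b_\e\Big)+\sum_{\mathrm{finite}}\mathrm{rem}_i .
\end{align*}
On $\mathrm{supp}\,F'$ one has $b_\e\le c't^\alpha$, so $\tfrac{\alpha b_\e}{c't^{\alpha+1}}=\CO(t^{-1})$ while $\tfrac{\theta_\e}{c't^\alpha}\asymp t^{-\alpha}$ away from the infrared region $\omega\le t^{-\kappa}$ (there $\theta_\e$ and $b_\e$ are both small, so it only feeds the remainders); since $\alpha<1$ the $t^{-\alpha}$ term wins and, as $F'\le0$,
\begin{align*}
d\phi_t\le-\frac{c_0}{t^\alpha}\,\phi_t+\sum_{\mathrm{finite}}\mathrm{rem}_i .
\end{align*}
The $\mathrm{rem}_i$ obey \eqref{rem-est} with exponents that, after dividing by $t^{-\alpha}$ and inserting \eqref{dGk-bnd} with $\rho=0,-1$, produce exactly the three conditions $2\del<\kappa-\nu(-1)$ (the $\omega_\e^{-1}$, $\rho=-1$ remainder coming from $\theta_\e=1-t^{-\kappa}\omega_\e^{-1}$), $2\del<1-\kappa$ (the $\partial_t b_\e$ remainder), and $2\del<1-\alpha-\nu(0)$ (the dilation remainder with the $t^{\nu(0)}$ growth from \eqref{dGk-bnd}); the assumption $\nu(-1)<\kappa<1$ makes these admissible, and the commutator and interaction remainders are not binding.

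Next I would second-quantize. By the $\Gamma$-analogue of \eqref{DdG-rel}, $D\Gamma(\phi_t)=\d\Gamma(d\phi_t,\phi_t)+i[I(g),\Gamma(\phi_t)]$, where $\d\Gamma(a,\phi)$ acts on $\otimes_s^n\fh$ as $\sum_{j=1}^n\phi\otimes\cdots\otimes a\otimes\cdots\otimes\phi$ ($a$ in the $j$-th slot) and $i[I(g),\Gamma(\phi_t)]$ is an interaction-type term carrying the coupling $(1-\phi_t)g$. Since $(1-\phi_t)g$ is supported in $\{b_\e\gtrsim t^\alpha\}$ and $g$ decays rapidly in $b_\e$ (Lemma~\ref{lem:chig-est}), this term is $\CO(t^{-M})$ for all $M$ and harmless; and $d\phi_t\le-\tfrac{c_0}{t^\alpha}\phi_t+\sum\mathrm{rem}_i$ with $\phi_t\ge0$ gives $\d\Gamma(d\phi_t,\phi_t)\le-\tfrac{c_0}{t^\alpha}N\Gamma(\phi_t)+\sum\d\Gamma(\mathrm{rem}_i,\phi_t)$, with $N$ the photon number. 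Using $N\Gamma(\phi_t)\ge\Gamma(\phi_t)-P_\Omega$ and feeding this into Proposition~\ref{prop:prop-obs-meth} (strong case, with \eqref{I-est2} and \eqref{Rem-est} for the interaction and remainder contributions), one arrives at a differential inequality of the form $\partial_t\langle\Gamma(\phi_t)\rangle_{\psi_t}\le-\tfrac{c_0}{t^\alpha}\langle\Gamma(\phi_t)\rangle_{\psi_t}+\tfrac{c_0}{t^\alpha}\|P_\Omega\psi_t\|^2+\CO(t^{-1-2\del'})\|\psi_0\|^2_{\d\Gamma(\langle y\rangle)^2}$ for a suitable $\del'>\del$. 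Provided $\|P_\Omega\psi_t\|\lesssim t^{-\del}$, the term $\tfrac{c_0}{t^\alpha}\|P_\Omega\psi_t\|^2\lesssim t^{-\alpha-2\del}$ is an admissible remainder, and since $\alpha<1$ a Gronwall estimate (equivalently, rerunning the computation with the weight $t^{2\del}$ in the spirit of the iterated use of Proposition~\ref{prop:prop-obs-meth2}) yields $\langle\Gamma(\phi_t)\rangle_{\psi_t}\lesssim t^{-2\del}\|\psi_0\|^2_{\d\Gamma(\langle y\rangle)^2}$ for $\del$ in the stated range. Since $\chi_{b_\e\le(c'/2)t^\alpha}\lesssim\phi_t$, taking square roots in $\langle\Gamma(\phi_t)\rangle_{\psi_t}=\|\Gamma(\phi_t)^{1/2}\psi_t\|^2$ and relabelling $c'$ gives the claim.

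The argument thus hinges on the remaining input $\|P_\Omega\psi_t\|\lesssim t^{-\del}$, which is where \eqref{EVcond} and the smallness of $\langle g\rangle$ enter --- and which I expect to be the main obstacle. Under \eqref{EVcond} the range of $f(H)$ carries no eigenvalues, so the local decay estimate for $E_\Delta(H)$, $\Delta\subset(E_{\mathrm{gs}},\Sigma)$, is available; combined with the bound $\|P_\Omega f(H)\|\lesssim\langle g\rangle$ (valid because $f$ is supported away from $\sigma_p(H)$) and with the differential inequality for $\langle P_\Omega\rangle_{\psi_t}$, whose only off-diagonal part is $P_\Omega I(g)P_\Omega^\perp=\CO(\langle g\rangle)$, one closes a small coupled system and obtains the required bound. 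I also expect the control of the region $\{b_\e\le0\}$ (where $F$ no longer damps directly, but $i[\omega,b_\e]=\theta_\e$ makes $b_\e$ grow essentially linearly, so the region is transient, the leftover being bounded through \eqref{maxvel-est} and the $\d\Gamma(\langle y\rangle)^2$ tail of the initial data) and the careful propagation of the infrared regularisation $\e=t^{-\kappa}$ (factors $\omega_\e^{-1}\sim t^{\kappa}$) through every remainder to require some care.
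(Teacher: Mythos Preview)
Your core one-photon inequality $d\phi_t\le -\tfrac{c_0}{t^\alpha}\phi_t+\sum\mathrm{rem}_i$ cannot hold for any cutoff $F:\R\to[0,1]$ that you later want to dominate $\chi_{s\le 1/2}$. Indeed, up to remainders $d\phi_t$ equals $F'(b_\e/(c't^\alpha))\cdot\theta_\e/(c't^\alpha)$ and is therefore supported on the shell $\supp F'$; wherever $F$ is constant (in particular on $\{b_\e\le (c'/2)t^\alpha\}$, which is exactly what you need for the final comparison $\chi_{b_\e\le(c'/2)t^\alpha}\lesssim\phi_t$) one has $d\phi_t\approx 0$ but $\phi_t\ge\mathrm{const}>0$, violating the inequality. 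Put differently, $-F'\ge c_0 F$ forces $(\ln F)'\le -c_0$, so $F$ decays exponentially and cannot be bounded below on a half-line. Consequently the step $\d\Gamma(d\phi_t,\phi_t)\le -\tfrac{c_0}{t^\alpha}N\Gamma(\phi_t)+\dots$, which produces the crucial $N$-factor, is unfounded; without it your $\Gamma$-observable yields only the \emph{weak} (time-integrated) shell estimate of Theorem~\ref{thm:mve1}, not pointwise-in-$t$ decay of $\langle\Gamma(\phi_t)\rangle_{\psi_t}$.

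The paper avoids this obstruction by a genuinely different mechanism: instead of a $\Gamma$-lift of a one-photon cutoff it works directly with the second-quantized operator $B_\e=\d\Gamma(b_\e)$ and the observable $\Phi_t=-t^\rho\varphi(B_\e/(ct))$ with $\varphi(\lambda)=(\lambda-2)\chi_{\lambda\le1}$, at scale $t$ (not $t^\alpha$). The key is $DB_\e=N_\e+\tilde I$ with $N_\e=\d\Gamma(\theta_\e)\ge N-\e\,\d\Gamma(\omega_\e^{-1})$ and $N\ge\one-P_\Omega$; this furnishes a uniform positive lower bound over the \emph{entire} region $\{B_\e\le ct\}$, not just a thin shell, and is exactly what your scheme tried (and failed) to manufacture at the one-photon level. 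The vacuum defect $P_\Omega$ is controlled by the local-decay bound of Lemma~\ref{lm:unif_dist}, proved via the Mourre/renormalization-group input of \cite{FGSig2,BFSS}; your identification of this as the place where \eqref{EVcond} and the smallness of $\langle g\rangle$ enter is correct. The $\Gamma$-estimate \eqref{mve2} is then obtained only at the very end, from the $B_\e$-estimate, via the operator inequality $\Gamma(\chi_{b_\e\le c't^\alpha})\le\chi_{B_\e\le c'Nt^\alpha}$ together with a photon-number splitting $\chi_{N\le c''t^\gamma}+\chi_{N\ge c''t^\gamma}$ (taking $\gamma=1-\alpha$) and \eqref{dGk-bnd}.
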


We define $B_\e: = \d \Gamma (b_\e)$.  As is \cite[Proposition B.3 and Remark B.4]{BoFaSig}, one verifies that $\cD \subset D( \d \Gamma( \langle y \rangle ) ) \subset D( B_\e )$. The proof of Theorem \ref{thm:mve2} is based on the following result  (cf. \cite{SigSof2, HunSigSof}).
\begin{proposition}\label{prop:B-prop-est-t}
Under the conditions of Theorem \ref{thm:mve2}, the evolution  $\psi_t = e^{-iHt} \psi_0$ obeys
\begin{equation}\label{B-prop-est}
\| \chi_{B_\e \le c t } \psi_t \|\ \lesssim t^{-\del'} \| \psi_0\|_{\d \Gamma ( \langle y \rangle)^2},
\end{equation}
where $\del':=\frac12 \min( \frac{1-C\lan g\ran}{c} -1 - \kappa, 1 - \kappa , \kappa - \nu(-1))$.
\end{proposition}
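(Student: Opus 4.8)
\textbf{Proof plan for Proposition \ref{prop:B-prop-est-t}.}
The plan is to run the method of propagation observables from Section \ref{sec:propobs} at the level of the second-quantized operator $B_\e = \d\Gamma(b_\e)$, using a propagation observable built from $\chi_{B_\e \le ct}$ (or rather its square root), together with an auxiliary term coming from the first propagation estimate, Theorem \ref{thm:mve1}, which will provide the positive commutator. First I would introduce the ``position'' variable for $B_\e$ normalized by $t$, say $u := B_\e/(ct)$, and consider $\Phi_t := f(H) \chi_{u \le 1} f(H)$ (or a suitable smoothed version), aiming to show it is, up to remainders, a strong propagation observable so that Proposition \ref{prop:prop-obs-meth}-type reasoning applies. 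Computing the Heisenberg derivative $D\Phi_t$, the main term is $\chi'_{u\le1}$ times $d u = \tfrac{1}{ct} D B_\e - \tfrac{1}{ct} u$; the point is that $D B_\e = \d\Gamma(d b_\e) - I(i b_\e g)$, and by the computation already carried out in Theorem \ref{thm:mve1} (see \eqref{dphi-est3}) one has $d b_\e \ge \theta_\e - (\text{small})$, hence $D B_\e \gtrsim (1 - C\lan g\ran) N_{\mathrm{loc}} + (\text{rem})$ in an appropriate sense, where the $C\lan g\ran$ deficit comes from bounding the interaction term $I(i b_\e g)$ using \eqref{I-est}, \eqref{g-est} and the smallness of $\lan g\ran$. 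This yields $d u \ge \big(\tfrac{1-C\lan g\ran}{c} - 1\big)\tfrac{1}{t}$ on $\supp\chi'_{u\le1}$, so choosing $c$ small enough that $\tfrac{1-C\lan g\ran}{c} > 1$ makes the commutator term have a definite (negative, since $\chi'\le 0$) sign, giving a strong propagation estimate.

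The key steps, in order, would be: (i) set up $\Phi_t$ and expand $D\Phi_t$ into the main commutator term plus remainders, controlling the remainders $\mathrm{Rem}_i$ via the one-photon estimates \eqref{rem-est}-type bounds and the low-momentum bound \eqref{dGk-bnd} with the exponents $\nu(0), \nu(-1)$, exactly as remainders were handled in the proof of Theorem \ref{thm:mve1}; (ii) bound the interaction contribution $|\lan I(i b_\e g)\ran_{\psi_t}|$ using $\lan g\ran$ small and the localization \eqref{eta-bnd}, absorbing the $C\lan g\ran$ loss into the main positive term; (iii) use that $\psi_0 \in \cD = f(H) D(\d\Gamma(\lan y\rangle))$, together with $\cD \subset D(\d\Gamma(\lan y\rangle)) \subset D(B_\e)$ and the finiteness of $\lan\psi_0, \Phi_0\psi_0\rangle$ — here the norm $\|\psi_0\|_{\d\Gamma(\lan y\rangle)^2}$ enters because the observable $\chi_{u\le1}$ must be compared, at $t=1$, with $B_\e$, and controlling $B_\e^2$ on $\psi_0$ needs two powers of $\d\Gamma(\lan y\rangle)$; (iv) integrate the differential inequality $\partial_t\lan\Phi_t\rangle_{\psi_t} = \lan D\Phi_t\rangle_{\psi_t} \le -(\text{positive}) + \mathrm{Rem}$ and read off that $\lan\Phi_t\rangle_{\psi_t}$ decays, with rate governed by the worst remainder exponent, giving the stated $\del' = \tfrac12\min(\tfrac{1-C\lan g\ran}{c} - 1 - \kappa,\ 1-\kappa,\ \kappa - \nu(-1))$.

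The accounting of the three competing exponents deserves a word: the term $\tfrac{1-C\lan g\ran}{c}-1-\kappa$ measures how much the positive commutator beats the $t^{-1-\kappa}$-type corrections coming from $\p_t b_\e = \kappa t^{-1-\kappa}\omega_\e^{-1/2}b_\e\omega_\e^{-1/2}$ (cf. \eqref{dtb2}); the term $1-\kappa$ is the cost of the $\e = t^{-\kappa}$ regularization appearing in $\theta_\e = 1 - t^{-\kappa}/\omega_\e$; and $\kappa - \nu(-1)$ comes from the $\omega_\e^{-1}$ (equivalently $\rho = -1$) remainders combined with the low-momentum bound \eqref{dGk-bnd}, which is exactly why one needs $\nu(-1) < \kappa$. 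Halving occurs because \eqref{B-prop-est} is a statement about $\|\chi_{B_\e\le ct}\psi_t\|$, i.e. the square root of $\lan\Phi_t\rangle_{\psi_t}$.

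\textbf{Main obstacle.} The delicate point is that $B_\e$ is an \emph{unbounded} operator whose second quantization does not commute nicely with cutoffs, so the expansion of $D\chi_{B_\e\le ct}$ into a leading commutator plus genuinely small remainders requires careful almost-analytic functional calculus (Helffer–Sjöstrand) together with the one-photon commutator lemmas of Appendix \ref{sec:tech} lifted to Fock space; in particular one must show that replacing $\chi_{B_\e\le ct}$ by $\chi'_{B_\e\le ct}$ sandwiched appropriately produces only errors compatible with the claimed $\del'$, and that the interaction term $I(i b_\e g)$ — which involves $b_\e$ acting on the form factor $g$ — is controlled uniformly, using \eqref{g-est} for $|\alpha|\le 2$ and the fact that $b_\e g$ still lies in $L^2$ with a bound in terms of $\lan g\rangle$. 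This is the step where the smallness of $\lan g\rangle$ and the precise structure of $b_\e$ (Lipschitz vector field $v(k) = k/(\omega+\e)$, hence self-adjointness and good commutator properties) are genuinely used.
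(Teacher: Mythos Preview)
Your overall strategy (second-quantized propagation observable built from $\chi_{B_\e/(ct)\le 1}$, Heisenberg derivative, integrate) is the right framework, but there is a genuine gap at the heart of the positivity argument.

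You write that $DB_\e \gtrsim (1-C\lan g\ran)N_{\mathrm{loc}}$ and then claim ``this yields $du \ge \big(\tfrac{1-C\lan g\ran}{c}-1\big)\tfrac1t$ on $\supp\chi'_{u\le1}$''. This step is false as stated: even granting $DB_\e \ge (1-C\lan g\ran)N$ (up to remainders), one has $du = \tfrac{1}{ct}DB_\e - \tfrac{B_\e}{ct^2} \ge \tfrac{(1-C\lan g\ran)N}{ct} - \tfrac{1}{t}$ on $\supp\chi'$, and to reach your conclusion you would need $N\ge \one$. But $N$ vanishes on the vacuum sector, and the vacuum satisfies $B_\e=0$, hence lies squarely inside $\{B_\e\le ct\}$. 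Nothing in your outline excludes the possibility that $\psi_t$ develops a large component on $\cH_p\otimes\Om$; in that case your lower bound on $du$ collapses.

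The paper closes exactly this gap with an additional, essential ingredient: one uses $N\ge \one-P_\Omega$ and then invokes a \emph{local decay} estimate (Lemma \ref{lm:unif_dist}) to show $\lan P_\Omega\ran_{\psi_t}\lesssim t^{-2s}\|\lan\tilde B\ran\psi_0\|^2$. This is where the Fermi Golden Rule hypothesis \eqref{EVcond}, the assumption $\supp f\subset(E_{\mathrm{gs}},\Sigma)$, and the Mourre-type results of \cite{FGSig2,BFSS} enter. It is also the true source of the norm $\|\psi_0\|_{\d\Gamma(\lan y\ran)^2}$ on the right-hand side (via $\|\lan\tilde B\ran\psi_0\|$), not merely the initial-time control of $B_\e$ you suggest in (iii). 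Your invocation of Theorem \ref{thm:mve1} as the source of positivity is a red herring: that theorem is not used here. Without the local-decay step, your argument cannot produce any decay at all.

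A secondary point: the paper's observable is $\Phi_t=-t^\rho\varphi(B_\e/(ct))$ with $\varphi(x)=(x-2)\chi_{x\le1}$, and the prefactor $t^\rho$ is what converts the integrated inequality into the pointwise decay $t^\rho\lan\chi\ran_{\psi_t}\le\lan\Phi_t\ran_{\psi_t}\lesssim\cdots$; with plain $\chi_{u\le1}$ you would obtain an integrated propagation estimate, not \eqref{B-prop-est} directly.
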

\noindent \textbf{Remark.} The constant $C$ is independent of $\gamma_0 := \mathrm{dist} (E_{Ê\mathrm{gs} } , \supp f )$ (but the implicit constant appearing in the right hand side of \eqref{B-prop-est} does depend on $\gamma_0$).
\begin{proof}
Let $\e>0$ be a constant. Let $\rho<\min(\frac{1-C\lan g\ran}{c} -1 , 1)$ where $C>0$ is a positive constant defined below. Consider the propagation observable
\begin{equation*}
\Phi_t:= - t^{\rho} \varphi \bigg( \frac{ B_\e}{ct} \bigg),
\end{equation*}
where $\varphi \big( \frac{ B_\e}{ct} \big):= \big(\frac{B_\e }{c t}-2 \big) \chi_{B_\e \leq c t}$. Note that $\varphi\le 0$, but $\varphi'\ge 0$.  Let $\varphi' =\varphi_1^2$. The relations below are understood in the sense of quadratic forms on $\cD$. The IMS formula gives
\begin{equation}\label{DPhi-deco}
D\Phi_t =M+R ,
\end{equation}
where $M:=- t^{\rho} \varphi_1 D ((c t)^{-1} B_\e) \varphi_1 -\rho  t^{-1+\rho}\varphi$ and
\begin{equation}\label{R}
R := \frac{1}{c t^{1-\rho}}[[ B_1, \varphi_1 ], \varphi_1 ]+ t^{\rho}\big([H, \varphi]-\frac{1}{2ct}(\varphi' B_1 +B_1\varphi')\big),
\end{equation}
where $B_1 := i[H, B_\e]$. First, we compute the main term,  $M$, in \eqref{DPhi-deco}.  We leave out a standard proof of $f(H) \in \mathrm{C}^1 ( B_\e )$ (see e.g. \cite[Theorem 8]{FGSig1}) and standard domain questions such as that $ \cD \subset D(B_\e)$. We have
\begin{equation}\label{D}
D \bigg( \frac{B_\e}{c t}\bigg) = \frac{1}{c t} DB_\e - \frac{1}{t} \frac{ B_\e}{c t}.
\end{equation}

The computations below are understood  in the sense of quadratic forms on $ \cD$.
Since $DB_\e=i [H_f,B_\e]=N_\e$, where $N_\e:= \d \Gamma (\theta_\e)$, we have
\begin{equation}\label{DB}
 DB_\e = N_\e +\tilde{I},
\end{equation}
where $\tilde{I} :=i [I(g), B_\e]$. To estimate the operator $N_\e$ from below, we use that  $\theta_\e =1-\frac{\e}{\omega_\e},$ to obtain
\begin{align}\label{Ne-lwbnd}
N_\e \ge  N-\e \d \Gamma (\omega_\e^{-1}) .
\end{align}

Next, we estimate the term $\varphi_1 \d \Gamma (\omega_\e^{-1})\varphi_1$. Using
\begin{align*}
[ \d \Gamma ( \omega_\e^{-1} ) , i ( \frac{ B_\e }{ ct } - z )^{-1} ] = -(ct)^{-1} ( \frac{ B_\e }{ ct } - z )^{-1} \d \Gamma ( \theta_\e \omega_\e^{-2} ) ( \frac{ B_\e }{ ct } - z)^{-1},
\end{align*}
we obtain that
\begin{align*}
&\| [ \d \Gamma ( \omega_\e^{-1} ) , ( \frac{ B_\e }{ ct } - z )^{-1} ] ( N + 1 )^{-1} \| \lesssim t^{-1} \e^{-2} | \mathrm{Im} z |^{-2} ,
\end{align*}
and hence, since $B_\e$ commutes with $N$, the Helffer-Sj{\"o}strand formula shows that
\begin{align*}
\| [ \d \Gamma ( \omega_\e^{-1} ) , \varphi_1 ] ( N + 1 )^{-1} \| \lesssim t^{-1} \e^{-2}.
\end{align*}
Since, in addition, $\| \d \Gamma( \omega_\e^{-1} ) u \| \le \| \d \Gamma( \omega^{-1} ) u \|$, we deduce that
\begin{align*}
\| \d \Gamma ( \omega_\e^{-1} ) \varphi_1 ( \d \Gamma ( \omega^{-1} ) + t^{-1} \e^{-2} (N + 1) )^{-1} \| \lesssim 1,
\end{align*}
and therefore, by interpolation and  \eqref{dGk-bnd}, we arrive at
\begin{align}\label{dGom-est}
\lan \varphi_1 \d \Gamma (\omega_\e^{-1}) \varphi_1 \ran_{\psi_t} \lesssim t^{\nu(-1)} \| \psi_0 \|_{-1}^2 + t^{-1+\nu(0)} \e^{-2} \| \psi_0 \|_0^2.
\end{align}

By the condition $\mu > -1/2$ and \eqref{I-est} (with $\delta = 0$), we have $ \tilde{I}  \ge - C \lan g\ran( N + \eta^{-2} + 1)  $. Combining this with the definition of $M$, \eqref{eta-bnd}, \eqref{D},  \eqref{DB},  \eqref{Ne-lwbnd} and \eqref{dGom-est}, we obtain
\begin{align}\label{M-est1}
\langle M  \rangle_{\psi_t} \le & -\frac{1}{c t^{1-\rho}}\langle \varphi_1 [(1-C\lan g\ran) N - t^{-1}B_\e -C\lan g\ran]\varphi_1+c\rho \varphi \rangle_{\psi_t} \notag\\
&+\frac{C}{ t^{1-\rho}}(\e t^{\nu(-1)}\|  \psi_0\|_{-1}^2+t^{-1 + \nu(0)}\e^{-1} \| \psi_0\|_0^2).
\end{align}

Let $\Om:=1\oplus 0 \oplus \dots$ be the vacuum in $\cF$ and  $P_\Omega$, the orthogonal projection on the subspace $\chp\otimes \Om$, $P_\Omega \Psi:= \lan \Om, \Psi\ran_\cF \otimes \Om$. We now use the following
\begin{lemma}\label{lm:unif_dist}
Assume \eqref{g-est} with $\mu > -1/2$, \eqref{eta-bnd} and \eqref{EVcond}. Let $\lan g \ran$ be sufficiently small and $f \in \mathrm{C}_0^\infty( ( E_{ \mathrm{gs} } , \Sigma ) )$. Then
\begin{align}\label{eq:unif_dist}
\| P_\Omega e^{-itH} f(H) u  \| \lesssim  t^{-s} \| \lan \tilde B \ran u \|, \quad s < 1/2 ,
\end{align}
where $\tilde B = \d \Gamma( \tilde b )$ with, recall, $\tilde b = \frac{1}{2} ( k \cdot y + y \cdot k )$.
\end{lemma}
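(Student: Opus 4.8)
\textbf{Proof proposal for Lemma \ref{lm:unif_dist}.}

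The plan is to derive the decay of $\| P_\Omega e^{-itH} f(H) u \|$ from a propagation estimate for $\tilde B = \d\Gamma(\tilde b)$ combined with the fact that, on $\Ran P_\Omega$, the dynamics is essentially governed by the particle Hamiltonian, for which the Fermi Golden Rule assumption \eqref{EVcond} forbids embedded eigenvalues in $(E_{\mathrm{gs}},\Sigma)$ and hence yields local decay. The first step is to observe that $P_\Omega$ projects onto $\chp\otimes\Omega$, so $\| P_\Omega e^{-itH} f(H) u\| = \| \lan\Omega, e^{-itH} f(H) u\ran_\cF\|_{\chp}$; the point is that a vector of the form $\chi(H)\varphi$ with $\varphi\in\chp\otimes\Omega$ has vanishing projection onto $\Ran P_\Omega$ when $\chi$ is supported away from $\s_p(H)$, up to an $\CO(\lan g\ran)$ error coming from the interaction $I(g)$ which moves one photon in and out of the vacuum. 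Thus $P_\Omega f(H)$ is, modulo a factor $\lan g\ran$, a projection onto the eigenspaces of $H$ sitting in the vacuum sector, and under \eqref{EVcond} the only such eigenvalue below $\Sigma$ is $E_{\mathrm{gs}}$ — but $f$ is supported in the open interval $(E_{\mathrm{gs}},\Sigma)$, so $P_\Omega f(H) = \CO(\lan g\ran)$ as an operator, with the relevant estimate being precisely the bound $\|P_\Omega f(H)\|\lesssim \lan g\ran$ alluded to in the discussion following \eqref{EVcond}.

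Next I would set up a positive-commutator / propagation-observable argument with conjugate operator $\tilde B$. One computes $i[H,\tilde B] = i[H_f,\tilde B] + i[I(g),\tilde B] = \d\Gamma(i[\omega,\tilde b]) + \CO(\lan g\ran(N+\eta^{-2}+1))$, and since $\tilde b = \frac12(k\cdot y + y\cdot k)$ one has $i[\omega,\tilde b] = k\cdot\nabla\omega = |k| = \omega$ for $\omega(k)=|k|$, so the free part of the commutator is $\d\Gamma(\omega) = H_f \ge 0$, and on the spectral subspace $f(H)\cH$ it is bounded below by $cN$ for some $c>0$ away from the vacuum sector (this is where one uses that $f$ avoids $E_{\mathrm{gs}}$ together with the smallness of $\lan g\ran$, as in the derivation of \eqref{M-est1}). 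This gives a propagation estimate of Mourre type showing $\int_1^\infty t^{-1}\| \chi_{\tilde B \le c''t}\, (N+1)^{1/2}\psi_t\|^2\,dt \lesssim \|\psi_0\|_{\d\Gamma(\lan y\ran)}^2$, or the corresponding RAGE-type consequence, controlling the part of $\psi_t$ with $\tilde B \lesssim t$. Interpolating such an estimate — or rather iterating it once in the style of \cite{SigSof2,HunSigSof} to upgrade the time-averaged bound to a pointwise one — yields a bound of the form $\|\chi_{\tilde B\le ct}\,(N+1)^{-1/2}P_\Omega\psi_t\|$ times the appropriate norm; but on $\Ran P_\Omega$ the factor $(N+1)^{-1/2}$ is harmless since $P_\Omega$ kills all photons.

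The key mechanism is then: $P_\Omega e^{-itH}f(H)u = P_\Omega \chi_{\tilde B \le ct} e^{-itH}f(H)u + P_\Omega \chi_{\tilde B \ge ct} e^{-itH}f(H)u$. For the first term, commute $P_\Omega$ through: since $P_\Omega\,\d\Gamma(\tilde b) = 0\cdot P_\Omega$ (the vacuum has $\tilde B$-eigenvalue $0$), $P_\Omega\chi_{\tilde B\le ct} = P_\Omega$, so this term is just $P_\Omega e^{-itH}f(H)u$ again — not directly useful — and instead one must use the positive commutator to show that the \emph{full} vector $\psi_t$ has its mass either in the large-$\tilde B$ region or in $\Ran P_\Omega$, and that $P_\Omega f(H)$ contributes only $\CO(\lan g\ran)$. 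The cleaner route, which I would actually follow, is to run the propagation-observable machinery with the observable $\Phi_t = -t^\rho\varphi(\tilde B/(ct))$ exactly as in the proof of Proposition \ref{prop:B-prop-est-t}, obtaining $\|\chi_{\tilde B\le ct}\psi_t\| \lesssim t^{-s'}\|\psi_0\|_{\d\Gamma(\lan y\ran)^2}$ for $s' < 1/2$; and then bound $P_\Omega e^{-itH}f(H)u$ by noting that the estimate $\|P_\Omega\chi_{\tilde B \ge ct}\|$ is \emph{not} what controls it — rather, one writes $\psi_t = \chi_{\tilde B\le ct}\psi_t + \chi_{\tilde B\ge ct}\psi_t$, applies $P_\Omega$, uses that $P_\Omega\chi_{\tilde B \ge ct}$ is $\CO(\lan g\ran)$ small plus a term decaying via the maximal velocity estimate \eqref{maxvel-est} (the vacuum cannot have large $\tilde B$-support, so $\chi_{\tilde B\ge ct}$ applied to a state near $\Ran P_\Omega$ is small), and uses the Proposition-\ref{prop:B-prop-est-t}-type bound on $\chi_{\tilde B\le ct}\psi_t$ for the rest.

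\textbf{Main obstacle.} The crux — and the step I expect to be hardest — is controlling $P_\Omega$ against the interaction: making precise the claim that $P_\Omega f(H) = \CO(\lan g\ran)$ uniformly, and more delicately that $P_\Omega e^{-itH}f(H)$ inherits decay rather than merely smallness. This is exactly the $\|P_\Omega f(H)\|\lesssim\lan g\ran$ estimate flagged as "delicate" in the remark after \eqref{EVcond}, and under the Fermi Golden Rule hypothesis \eqref{EVcond} with $f$ supported strictly above $E_{\mathrm{gs}}$ it should follow from second-order perturbation theory (the Feshbach–Schur map applied to the vacuum sector): the effective operator $P_\Omega(H - z)P_\Omega$ has no spectrum in $\supp f$ for small $\lan g\ran$, so $P_\Omega f(H)P_\Omega$ is $\CO(\lan g\ran^2)$ and the off-diagonal $P_\Omega f(H)(1-P_\Omega)$ is $\CO(\lan g\ran)$. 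Feeding this, together with the propagation estimate for $\tilde B$ and the bound \eqref{dGk-bnd} on low photon momenta, into the above decomposition gives the rate $s<1/2$; the loss from $1/2$ is the usual one in non-smooth functional calculus / interpolation and is consistent with the exponents appearing in Proposition \ref{prop:B-prop-est-t}.
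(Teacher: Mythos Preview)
Your approach has a genuine gap and is, in fact, circular. The propagation-observable argument you propose --- running $\Phi_t = -t^\rho\varphi(\tilde B/(ct))$ as in Proposition~\ref{prop:B-prop-est-t} to get $\|\chi_{\tilde B\le ct}\psi_t\|\lesssim t^{-s'}$ --- cannot close without already controlling the vacuum contribution: in the proof of Proposition~\ref{prop:B-prop-est-t} the positivity $N_\e\ge N\ge \one - P_\Omega$ is used, and the residual $\langle\varphi_1 P_\Omega\varphi_1\rangle_{\psi_t}=\langle P_\Omega\rangle_{\psi_t}$ is bounded precisely by Lemma~\ref{lm:unif_dist} (see \eqref{varphi1P0}). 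So you are invoking the very estimate you are trying to prove. Your decomposition $\psi_t=\chi_{\tilde B\le ct}\psi_t+\chi_{\tilde B\ge ct}\psi_t$ is also empty, as you half-notice: since $\tilde B P_\Omega=0$ and $\tilde B$ is self-adjoint, $P_\Omega$ commutes with functions of $\tilde B$, hence $P_\Omega\chi_{\tilde B\ge ct}=0$ for large $t$ and $P_\Omega\chi_{\tilde B\le ct}\psi_t=P_\Omega\psi_t$ exactly. Finally, the bound $\|P_\Omega f(H)\|\lesssim\lan g\ran$ that you flag as the main obstacle yields only smallness in the coupling, not decay in $t$; it cannot produce the $t^{-s}$ rate.

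The paper's proof is much shorter and uses a different, simple mechanism you are missing: since $\tilde B\Omega=0$, one has $P_\Omega=P_\Omega\lan\tilde B\ran^{-1}$, so $\|P_\Omega e^{-itH}f(H)u\|\le\|\lan\tilde B\ran^{-1}e^{-itH}f(H)u\|$. The right-hand side is then controlled by \emph{existing} local-decay / minimal-velocity results from \cite{FGSig2} (near the bottom of the spectrum) and from \cite{BFSS} combined with the abstract propagation estimate of \cite{HunSigSof} (near the excited thresholds $e_j$, where a modified conjugate operator $\tilde B_j=\tilde B+M_j$ with bounded $M_j$ satisfies a Mourre estimate). This gives $\|\lan\tilde B\ran^{-s}e^{-itH}\tilde\chi_j(H)\lan\tilde B\ran^{-s}\|\lesssim t^{-s}$ directly, with no need to reprove a Mourre estimate or to analyse $P_\Omega f(H)$ perturbatively.
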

\begin{proof}
We use the local decay properties established in \cite{FGSig2} and \cite{BFSS}. Let $c_j := (e_j + e_{j+1})/2$ and $\delta_j := e_{j+1} - e_j$. We decompose the support of $f$ into different regions, writing
\begin{align}\label{eq:decom-supp-f}
f(H) = f(H) \chi_{ H \le c_0 } + \sum_{ \mathrm{finite} } f(H) \chi_j(H) ,
\end{align}
where $\chi_j(H)$ denotes a smoothed out characteristic function of the interval $[ c_j - \delta_j / 4 , c_{j+1} + \delta_{j+1}/4 ] $. Using $P_\Omega = P_\Omega \lan \tilde B \ran$, and \cite{FGSig2}, we obtain
\begin{align}\label{eq:unif_dist_1}
& \| P_\Omega e^{-itH} f(H) \chi_{ H \le c_0 } u \| = \| \lan \tilde B \ran^{-1} e^{-itH} f(H) \chi_{ H \le c_0 } u \| \lesssim  t^{-s} \| \lan \tilde B \ran u \|,
\end{align}
for $s < 1/2$.

 To estimate $\| P_\Omega e^{-itH} f(H) \chi_j(H) u \|$, we let $\tilde \chi_j(H) := f(H) \chi_j(H)$. In \cite{BFSS}, assuming \eqref{EVcond}, a conjugate operator $\tilde B_j$ is constructed in such a way that the commutators $[\tilde \chi_j(H) , \tilde B_j ]$ and $[[ \tilde \chi_j(H) , \tilde B_j ] , \tilde B_j ]$ are bounded. Moreover, the Mourre estimate
\begin{align*}
\tilde \chi_j(H) [ H , i \tilde B_j ] \tilde \chi_j(H) \geq m_0 \tilde \chi_j(H)^2,
\end{align*}
holds for some positive constant $m_0$. By an abstract result of \cite{HunSigSof}, this implies
\begin{equation*}
\big \| \langle \tilde B_j \rangle^{- s} e^{-i t H} \tilde \chi_j (H) \langle \tilde B_j \rangle^{-s} \big \| \lesssim  t^{- s} ,
\end{equation*}
for $s < 1$. Since the operator $\tilde B_j$ is of the form $\tilde B_j = \tilde B + M_j$, where $M_j$ is a bounded operator, it then follows that
\begin{equation*}
\big \| \langle \tilde B \rangle^{- s} e^{-i t H} \tilde \chi_j (H) \langle \tilde B \rangle^{-s} \big \| \lesssim t^{- s} ,
\end{equation*}
and hence, using again that $P_\Omega \lan \tilde B \ran = P_\Omega$, we obtain
\begin{align}\label{eq:unif_dist_2}
& \| P_\Omega e^{-itH} \tilde \chi_j(H) u \| = \| \lan \tilde B \ran^{-1} e^{-itH} \tilde \chi_j(H) u \| \lesssim  t^{-s} \|  \lan \tilde B \ran u \|.
\end{align}
Equations \eqref{eq:decom-supp-f}, \eqref{eq:unif_dist_1} and \eqref{eq:unif_dist_2} give \eqref{eq:unif_dist}.
\end{proof}

 Together with $\varphi_1P_\Omega=P_\Omega$, the estimate \eqref{eq:unif_dist} gives
\begin{align}\label{varphi1P0}
\langle \varphi_1 P_\Omega \varphi_1  \rangle_{\psi_t} =\langle  P_\Omega  \rangle_{\psi_t} \lesssim t^{-2s} \| \langle \tilde B \rangle \psi_0 \|^2 \lesssim t^{-2s} \|  \psi_0 \|^2_{ \tilde B^2 }.
\end{align}
Combining this with $N \ge \one - P_\Omega$ and \eqref{M-est1}, we obtain
\begin{align}\label{M-est12}
\langle M  \rangle_{\psi_t} \le& -\frac{1}{c t^{1-\rho}}\langle \varphi_1 [ 1 - t^{-1}B_\e -C\lan g\ran]\varphi_1+c\rho \varphi \rangle_{\psi_t} \notag\\
&+\frac{C}{ t^{1-\rho}}(\e t^{\nu(-1)}\|  \psi_0\|_{-1}^2+t^{-1 + \nu(0)}\e^{-1} \| \psi_0\|_0^2 + t^{-2s} \|  \psi_0 \|^2_{ \tilde B^2 }Ê).
\end{align}

Now, using the definition $\varphi \big( \frac{ B_\e}{ct} \big):= \big(\frac{B_\e }{c t}-2 \big) \chi_{B_\e \leq c t}$, we compute
\begin{align}\label{comp1}
 \frac{ B_\e}{ c t}\varphi'+ \rho  (-\varphi)&=  \frac{ B_\e}{ c t}(\chi +(\frac{ B_\e}{ c t}-2)\chi')- \rho  (\frac{ B_\e}{ c t}-2)\chi \notag\\
 &=  ((1-\rho)\frac{ B_\e}{ c t}+2\rho)\chi + \frac{ B_\e}{ c t} (\frac{ B_\e}{ c t}-2)\chi'.
\end{align}
Next, using that $\frac{ B_\e}{ct} \chi \le \chi$,  $\frac{ B_\e}{ct} ( \frac{ B_\e}{ c t} - 2 ) \chi'  \le ( \frac{ B_\e}{ c t} - 2 )\chi'$, we find furthermore
\begin{equation}\label{comp2}
 \frac{ B_\e}{ c t}\varphi'+ \rho  (-\varphi) \le (1+\rho)\chi + ( \frac{ B_\e}{ c t} - 2 ) \chi' = \rho\chi +\varphi' \le (1+\rho)\varphi'.
\end{equation}
This, together with \eqref{M-est12}, with $\varphi_1^2=\varphi' $, gives
\begin{align}\label{M-est2}
\lan M  \ran_{\psi_t} \leq& -\big[\frac{\sigma}{c}-1-\rho\big]\frac{1}{t^{1-\rho}}
\lan \varphi'  \ran_{\psi_t} \notag\\
& + \frac{C}{ t^{1-\rho}}(\e t^{\nu(-1)}\|  \psi_0\|_{-1}^2+t^{-1+\nu(0)}\e^{-1} \| \psi_0\|_0^2 + t^{-2s} \|  \psi_0 \|^2_{\d \Gamma( \lan y \ran)^2 } ),
\end{align}
where $\s:=1-C\lan g\ran $.

Next,  we show that the remainder, $R$, in \eqref{DPhi-deco} is bounded as
\begin{equation}\label{R-est}
\|(1+ \eta^{-2} + N)^{-1/2} R (1+ \eta^{-2} + N)^{-1/2}\|\lesssim t^{-2} \epsilon^{-1} .
\end{equation}
Indeed, proceeding as in the proof of Lemma \ref{lem:rem1-est}, using the Helffer-Sj{\"o}strand formula, one verifies that
\begin{align}
& \|(1+ \eta^{-2} + N)^{-1/2} R (1+ \eta^{-2} + N)^{-1/2}\| \notag \\
&\lesssim t^{-2}\| (1+ \eta^{-2} + N)^{-1/2} B_2  (1+ \eta^{-2} + N)^{-1/2}\| , \label{R-est1}
\end{align}
where $B_2:=[ B_\e, [B_\e , H]]$. Now, an elementary computation (see \eqref{DdG-rel}) gives  $B_2=\d \Gamma (\e \theta_\e \omega_\e^{-2})  + I(b_\e^2 g).$ Using  $\e \theta_\e \omega_\e^{-2}  \leq \epsilon^{-1}$ and $\|I( \eta b_\e^2 g) (1+N)^{-1/2} \| \lesssim \| \eta b_\e^2 g \| \lesssim \e^{-1}$ since $\mu > -1/2$, we obtain
\begin{equation}\label{B2-est}
\| (1+ \eta^{-2} + N)^{-1/2} B_2  (1+ \eta^{-2} + N)^{-1/2}\| \lesssim t^{-2} \e^{-1} ,
\end{equation}
which together with \eqref{R-est1} implies \eqref{R-est}. Together with Equations \eqref{DPhi-deco} and  \eqref{M-est2} and the fact that $\| \eta^{-2} f(H) \| \lesssim 1$,  this implies
\begin{align}
\lan D \Phi_t  \ran_{\psi_t}  \le & -(\frac{\sigma}{c} -1-\rho)t^{-1+\rho}\lan \varphi' \ran_{\psi_t} \notag\\
& + C \big ( \e t^{\nu(-1)+\rho-1}\|  \psi_0\|_{-1}^2+  t^{-2+\nu(0)+\rho}\e^{-1} \| \psi_0\|_0^2 + t^{-1+\rho-2s} \|  \psi_0 \|^2_{ \tilde B^2 } \big ). \label{DPhi-est}
\end{align}

Thus, choosing $s$ such that $2 s - \rho >0$, \eqref{DPhi-est}, together with the observation $\Phi_t \geq  t^{\rho} \chi_{B_\e \leq c t}$, the conditions $\frac{\sigma}{c} -1-\rho>0$, $\rho<1 \le 2 - \nu(0)$, the trivial inequalities $\| \psi_0 \|^2_0 \le \| \psi_0 \|^2_{ \d \Gamma( \lan y \ran ) }$, $\| \psi_0 \|^2_{ \tilde B^2 } \lesssim \| \psi_0 \|^2_{ \d \Gamma( \lan y \ran )^2 }$, and Hardy's inequality $\| \psi_0 \|_{-1}^2 \lesssim \| \psi_0 \|^2_{\d \Gamma ( \langle y \rangle)}$  implies that
\begin{align*}
 t^{\rho} \langle\chi\rangle_{\psi_t} &\le \langle \Phi_t\rangle_{\psi_t} = \langle \Phi_t\rangle_{\psi_t} |_{t=0} + \int\limits_0^t \langle D \Phi_s \rangle_{\psi_s} ds \notag\\
&\le \langle - B_{\e} \chi_{B_{\e} \leq 0} \rangle_{\psi_0} +C( \e^{-1}+\e t^{\rho + \nu(-1)} + 1)  \| \psi_0 \|_{\d \Gamma ( \lan y \ran )^2 }^2.
\end{align*}
Using $\langle - B_{\e} \chi_{B_{\e} \leq 0}\rangle_{\psi_0} \lesssim  \| \psi_0\|_{\d \Gamma ( \langle y \rangle)}^2$,  and choosing $\e = t^{-\kappa}$, we find
\begin{equation*}
\langle\chi\rangle_{\psi_t} \le C( t^{-\rho+\kappa}+ t^{\nu(-1) - \kappa} + t^{-\rho} )  \| \psi_0 \|_{\d \Gamma ( \langle y \rangle)^2 }^2,
\end{equation*}
which in turn gives \eqref{B-prop-est}.
\end{proof}

\begin{proof}[Proof of Theorem \ref{thm:mve2}]
Since $N:=\d \Gamma (1)$ and $B_\e:=\d \Gamma (b_\e)$, commute we have
\begin{align}
 \Gamma (\chi_{b_\e \leq c' t^\alpha}) &\leq \chi_{B_\e \leq
c' Nt^\alpha}= \chi_{B_\e \leq c'Nt^\alpha}
(\chi_{N \leq c'' t^\gamma} + \chi_{N \geq c'' t^\gamma})\notag \\
 &\leq  \chi_{B_\e \leq c t^\nu} + \chi_{N\geq c'' t^\gamma},
\label{Gchi-bnd}\end{align}
where $\nu := \alpha +\gamma$ and $c:=c'c''$. We choose $c'' \ll 1/c'$, so that $0 < c \ll 1$. Next, we have
\begin{align*}
\| \chi_{N\geq c'' t^\gamma} \psi_t \|\ &\leq
(c'')^{-\frac{\gamma}{2}} t^{-\frac{\gamma}{2}}\ \| \chi_{N
\geq c'' t^\gamma} N^{\frac{1}{2}} \psi_t \|  \notag\\
& \leq
(c'')^{-\frac{\gamma}{2}} t^{-\frac{\gamma}{2}}\
\| N^{\frac{1}{2}} \psi_t \|,
\end{align*}
which, together with \eqref{dGk-bnd} (with $\rho=0$), implies
\begin{equation}\label{chiN-bnd'}
\| \chi_{N\geq c'' t^\gamma} \psi_t \|\ \lesssim t^{-\frac{\gamma}{2}+\frac{\nu(0)}{2}}\
\|  \psi_0 \|_0 .
\end{equation}
The inequality \eqref{Gchi-bnd} with $ \nu=1$, Proposition \ref{prop:B-prop-est-t} and the inequality \eqref{chiN-bnd'} (with $\g=1-\alpha$)  imply the estimate \eqref{mve2}.
\end{proof}

\bigskip

\section{ Proof of Theorem \ref{thm:ac}}\label{sec:pf-ac}

\subsection{Partition of unity}\label{sec:pu}

 We begin with a construction of a partition of unity which separates photons close to the particle system from those departing it. Following \cite{DerGer2, FrGrSchl2} (cf. the many-body scattering construction), it is  defined by first constructing a partition of unity $(j_0, j_\infty),\ j_0^2+ j_\infty^2= \one,$ on the one-photon space,  $\fh=L^2(\R^3)$, with $j_0$ localizing a photon to  a region near the particle system (the origin) and $j_\infty$ near infinity, and then associating with it the map $j:\fh \ra \fh \oplus \fh$, given by $j: h \ra j_0h \oplus j_\infty h$. After that we lift the map $j$ to  the Fock space $\cF:=\G(\fh)$ by using  ${\Gamma} (j): \G(\fh)\ra \G(\fh \oplus \fh)$ (defined  in \eqref{def:Gamma}).
Next, we consider the adjoint map  $j^*: h_0  \oplus h_\infty  \ra j_0^* h_0 + j_\infty^* h_\infty$, which we also lift  to  the Fock space $\cF:=\G(\fh)$ by using ${\Gamma} (j^*): \G(\fh \oplus \fh)\ra \G(\fh)$.
By definition, the operator ${\Gamma} (j)$ has the following properties
\begin{align}\label{Gamma-prop1}
{\Gamma} ( j)^*={\Gamma} (j^*),\ \quad {\Gamma}(\tilde j){\Gamma} (j)={\Gamma} (\tilde j j).
\end{align}
 Since  $j^*j=j_0^2+ j_\infty^2= \one$,
this implies the relation $ {\Gamma}( j)^*{\Gamma} (j)=\one $, which is what we mean by a partition of unity of the Fock space $\cF:=\G(\fh)$.

We refine this construction further by defining the unitary map $U: \G(\fh \oplus \fh)\ra \G(\fh)\otimes \G(\fh),$ through the relations
\begin{equation}\label{U}
U\Om = \Om \otimes \Om ,\ \quad Ua^*(h)=[a^*(h_1)\otimes \one  +\one \otimes a^*(h_2)]U,
\end{equation}
for any $h = (h_1 , h_2) \in \mathfrak{h} \oplus \mathfrak{h}$,
 and introducing the operators
\begin{equation}\label{checkGamma}
\check{\Gamma} (j):=U {\Gamma} (j):  \G(\fh)\ra \G(\fh)\otimes \G(\fh).
\end{equation}
We lift $\Gamma (j)$, as well as $\check{\Gamma} (j)$, from  the Fock space $\cF:=\G(\fh)$ to the full state space $\cH:=\chp\otimes \cF$, so that e.g. $\check{\Gamma} (j):  \cH\ra \cH\otimes \G(\fh)$. Now, the partition of unity relation on $\cH$ becomes $ \check{\Gamma}( j)^*\check{\Gamma} (j)=\one $ (in particular, $ \check{\Gamma}( j)$ is an isometry).

Finally, we specify $j_0$ to be the operator $\chi_{b_\e \leq c t^\al} $, with $b_\e$ defined in the introduction, and $j_\infty$ is defined by $j_0^2 + j_\infty^2= \one$ and is of the form $\chi_{b_\e \geq c t^\al} $, where $\e:=t^{-\kappa}$, and $\al$ and $\kappa$ satisfy $1 - \mu / ( 6 + 3 \mu ) < \alpha < 1$  and $ 1+\nu( - 1 )-\al  <\kappa < \frac{1}{2}( 5\al -3 )$.

\medskip

\subsection{Deift-Simon wave operators}\label{sec:DSOs}

 We define the auxiliary space $\hat\cH:=\cH\otimes \cF$,  which will serve as our repository of asymptotic dynamics, which is governed by the hamiltonian $\hat{H}:=H\otimes \one +\one\otimes H_f$ on  $\hat\cH$.
With the partition of unity  $\check{\Gamma} (j)$, we associate the Deift-Simon wave operators,
\begin{equation} \label{W-def}
W_\pm := \slim_{t\to\infty} W(t),\ \quad \mbox{where}\ \quad  W(t) := e^{i\hat{H} t}\check{\Gamma} (j) e^{-i Ht},
\end{equation}
which map the original dynamics, $e^{-i Ht}$, into auxiliary one, $e^{-i\hat Ht}$ (to be further refined later). Our goal is  to prove

\begin{theorem}\label{thm:MVEraDS}
Assume \eqref{g-est} with $\mu > 0$, \eqref{eta-bnd} and \eqref{npb-unif}. Then the Deift-Simon wave operators exist on $\Ran  E_{(-\infty, \Sigma)}(H)$ and satisfy
\begin{equation}\label{W+Pgs}
W_+P_{\mathrm{gs}}=P_{\mathrm{gs}} ,
\end{equation}
and, for any smooth, bounded function $f$,
\begin{equation}\label{W+intertw}
W_+f({H})=f (\hat{H})W_+.
\end{equation}
\end{theorem}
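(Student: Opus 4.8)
The plan is to establish existence of $W_+ = \slim_{t\to\infty} W(t)$ by Cook's method: since $W(t)$ is uniformly bounded (indeed $\|\check\Gamma(j)\| = 1$), it suffices to show that, on a dense subset of $\Ran E_{(-\infty,\Sigma)}(H)$ of the form $\psi_0 = f(H)\psi_0$ with $f \in \mathrm{C}_0^\infty((-\infty,\Sigma))$ and $\psi_0$ in a suitable domain (e.g. $D(\d\Gamma(\lan y\ran)^2)$), the integral $\int_1^\infty \|\partial_t W(t)\psi_0\|\, dt$ is finite. First I would compute
\begin{equation*}
\partial_t W(t) = e^{i\hat H t}\, \big( i\hat H \check\Gamma(j) - i\check\Gamma(j) H + \partial_t \check\Gamma(j)\big)\, e^{-iHt},
\end{equation*}
and express the bracket, using the definition of $\check\Gamma(j) = U\Gamma(j)$, the pull-through/intertwining relations \eqref{U}, \eqref{Gamma-prop1}, and $\hat H = H\otimes\one + \one\otimes H_f$, as a sum of terms each of which is (schematically) a second-quantized operator $\d\Gamma$-type contribution coming from $[\,\cdot\,, H_f]$ acting through $d j_\#$, plus an interaction commutator term $I(g)$-type contribution coming from $[\,\cdot\,, I(g)]$. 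The $H_f$-part produces factors of $dj_0 = \partial_t j_0 + i[\omega, j_0]$ and $dj_\infty$; by the explicit form $j_0 = \chi_{b_\e \le ct^\al}$, $j_\infty = \chi_{b_\e \ge ct^\al}$, with $\e = t^{-\kappa}$, each such factor is supported where $b_\e \asymp ct^\al$ and (as in the computation of $dv$ in Theorem \ref{thm:mve1}, and the identity \eqref{dtb2}) is $\CO(t^{-\al})$ there, while the explicit time-derivative of the cutoff contributes the $\p_t b_\e$ term controlled via \eqref{dtb2}.

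The key point is then that all these derivative terms are controlled by the propagation estimates already proven. For the piece localized by $dj_0$ (photons being caught/released near the particle system), one sandwiches against $\psi_t$ and recognizes a $\d\Gamma(\chi_{b_\e/ct^\al = 1})^{1/2}$-type quantity, which by the weak minimal photon escape velocity estimate of Theorem \ref{thm:mve2} (applied with the stated ranges of $\al$ and $\kappa$, which is exactly why those ranges were imposed in Subsection \ref{sec:pu}) is $\CO(t^{-\del})$ with $\del > 0$; combined with the $\CO(t^{-\al})$ from the symbol and an integrability margin, one needs $\al$ close enough to $1$ and the constraint $1 - \mu/(6+3\mu) < \al$ precisely ensures the exponents add up to something $> 1$. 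The interaction term $I(i\,\widehat{\phi_t g})$-type contribution is handled exactly as in \eqref{I-est}--\eqref{I-est2}: using \eqref{eta-bnd} to bound $\|\eta^{-1}\psi_t\| \lesssim \|\psi_t\|$, Lemma \ref{lem:chig-est}-type bounds on $\|\eta\,\chi_{b_\e\ge ct^\al}\,g\|$ (the cutoff gains a power of $t$ because $g$ decays at small $k$ for $\mu > 0$), and \eqref{Hf-H-bnd}, one gets an integrable $t^{-\lambda'}$ with $\lambda' > 1$, again using $\mu > 0$. The uniform number bound \eqref{npb-unif} enters to control the unbounded number operator appearing when one estimates terms like $\d\Gamma(dj_\#)$ against $\psi_t$ without extra smallness; it is what allows one to replace $\d\Gamma$-factors by $N^{1/2}$ and close the estimate. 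Summing the finitely many terms gives $\int_1^\infty \|\partial_t W(t)\psi_0\|\, dt < \infty$, hence $W_+$ exists on the dense set and extends by density and uniform boundedness to all of $\Ran E_{(-\infty,\Sigma)}(H)$.

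For the intertwining property \eqref{W+intertw}: by a standard argument it suffices to prove $W_+ e^{-isH} = e^{-is\hat H} W_+$ for all $s \in \R$, which follows from $W(t+s) = e^{is\hat H} W(t) e^{-isH}$ (immediate from the definition of $W(t)$) by letting $t\to\infty$ and using strong continuity of $e^{-isH}$ and $e^{is\hat H}$ and the already-established strong limit; then \eqref{W+intertw} for bounded Borel $f$ follows by Fourier/Stone's theorem. For \eqref{W+Pgs}: on $\Ran P_{\mathrm{gs}}$ one has $\psi_t = e^{-iE_{\mathrm{gs}}t}\Phi_{\mathrm{gs}}$, and $\Phi_{\mathrm{gs}}$ is (by \eqref{exp-bnd} and the infrared bounds in \cite{BFS2, GLL}) spatially and photon-momentum localized so that $\chi_{b_\e \ge ct^\al}\to 0$ on its photon content as $t\to\infty$; concretely $j_0 \to \one$ and $j_\infty \to 0$ strongly on the relevant subspace, so $\check\Gamma(j)\psi_t \to (\text{state})\otimes\Omega$, whence $W(t)P_{\mathrm{gs}}\to P_{\mathrm{gs}}$ (recalling $\hat H$ acts as $H$ on the $\otimes\Omega$ sector). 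The main obstacle I expect is the bookkeeping and uniform control of the $\d\Gamma$-type derivative terms \emph{without} extra smallness of $\lan g\ran$ — this is exactly where \eqref{npb-unif} is indispensable — together with verifying that the specific admissible windows for $\al$ and $\kappa$ make every remainder exponent strictly exceed $1$ after combining the decay from Theorems \ref{thm:mve1} and \ref{thm:mve2} with the symbolic $\CO(t^{-\al})$ gains; the algebra of the wave-operator commutator (the pull-through identities for $U$ and $\Gamma(j)$) is routine but must be done carefully to isolate precisely the quantities the propagation estimates bound.
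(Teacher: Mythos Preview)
Your overall architecture (differentiate $W(t)$, split $G=G_0+G_1$, handle $G_1$ via Lemma~\ref{lem:chig-est} and \eqref{eta-bnd}) is right, and your treatment of \eqref{W+intertw} and \eqref{W+Pgs} is fine. But there is a genuine gap in the handling of the field term $G_0$, and it is the heart of the proof.

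The quantity produced by $dj_\#$ is of the form $\d\Gamma(\chi_{\frac{b_\e}{ct^\al}=1})^{1/2}\psi_t$, \emph{not} $\Gamma(\chi_{b_\e\le c't^\al})\psi_t$; these are different objects (a sum versus a product over photons). Theorem~\ref{thm:mve2} controls only the latter, so it does not give you the pointwise $\CO(t^{-\del})$ you claim for the former. The estimate that actually controls $\d\Gamma(\chi')^{1/2}\psi_t$ is Theorem~\ref{thm:mve1} (specifically \eqref{mve1} and \eqref{mve1'}), but that is an $L^2$-in-time bound, $\int t^{-\al}\|\d\Gamma(\chi')^{1/2}\psi_t\|^2\,dt<\infty$, not a pointwise one. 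Consequently the direct Cook integral $\int\|\partial_tW(t)\psi_0\|\,dt$ cannot be made $<\infty$ from the available inputs: the $L^2$-in-time estimate combined with the symbolic $t^{-\al}$ is not enough to force $L^1$ integrability of the norm.

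What the paper does instead---and this is the missing idea---is a duality argument. One inserts a cutoff $\chi_m:=\chi_{\hat N\le m}$ and an energy cutoff $f(\hat H)$, controls the tail $\bar\chi_m W(t)\psi_0$ uniformly in $t$ via \eqref{npb-unif} together with $\hat N^{1/2}\check\Gamma(j)=\check\Gamma(j)N^{1/2}$, and for the $\chi_m$ piece pairs $\int_t^{t'}\langle\hat\phi_s,G_0\psi_s\rangle\,ds$ against test vectors $\hat\phi_s=e^{-i\hat Hs}f(\hat H)\chi_m\hat\phi_0$. Lemma~\ref{lem:checkG-ineq} then factors $|\langle\hat\phi,G_0'\psi\rangle|$ into a product of two $\d\Gamma(|c_\#|)^{1/2}$ norms, and Cauchy--Schwarz \emph{in time} converts the two $L^2$-in-time bounds from \eqref{mve1'} (one for $\psi_s$, one for $\hat\phi_s$; the $\chi_m$ cutoff is what makes the initial data norm finite on the $\hat\phi$ side) into the Cauchy property. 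The remainder $\mathrm{Rem}_t$ additionally requires the maximal velocity estimate \eqref{maxvel-est} to dispose of the piece supported in $\{|y|\ge\bar c t\}$, which you do not mention. The parameter windows in Subsection~\ref{sec:pu} are chosen precisely so that \eqref{mve1'}, \eqref{mve1}, and \eqref{maxvel-est} all apply and the residual exponents exceed $1$; they are not conditions for Theorem~\ref{thm:mve2}.
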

\begin{proof}
We want to show that the family $W(t):=e^{i\hat{H} t} \check{\Gamma} (j) e^{-iHt}$ form a strong Cauchy sequence as $t\ra\infty$. To this end, we define ${\chi}_{m}:={\chi}_{\hat N\le m}$ and $\overline{\chi}_{m}:={\chi}_{\hat N\ge m}$, where $\hat N:=N\otimes\one +\one \otimes N$, the number operator on $\hat \cH$, so that ${\chi}_{m} + \overline{\chi}_{m}= \one$. First, we show that, for any $\psi_0\in D(N^{\frac12})$,
\begin{align}\label{barchik-bnd}
 \sup_{t}\|\overline{\chi}_{m} W(t)\psi_0\|\lesssim m^{-\frac12} \| \psi_0 \|_{N} .
\end{align}
Indeed, by the assumption \eqref{npb-unif},
\begin{align}\label{eq:Z1}
\|\hat N^{\frac12}e^{i\hat{H} t} \check{\Gamma} (j) e^{-iHs}\psi_0\| \lesssim  \| \hat N^{\frac12} \check{\Gamma} (j) e^{-iHs}\psi_0 \| + \| \check{\Gamma} (j) e^{-iHs}\psi_0 \|.
\end{align}
The boundedness of $\check{\Gamma} (j)$ implies $\| \check{\Gamma} (j) e^{-iHt}\psi_0 \| \le \| \psi_0 \| \le \| \psi_0 \|_N$. Moreover, we claim that
\begin{align}\label{eq:C(4)0}
\check{\Gamma}(j) N = \hat N \check{\Gamma}(j),
\end{align}
Indeed, a straightforward computation gives ${\Gamma}(j) \d \Gamma( c ) =  \d \Gamma( \underline c ){\Gamma}(j) + \d {\Gamma}( j , j c - \underline c j )$, where $\underline c = \diag( c , c ):\ \mathfrak{h} \oplus \mathfrak{h} \to \mathfrak{h} \oplus \mathfrak{h}$ and
  \begin{equation}\label{dG'}
\d \Gamma( a, c )\vert_{  {\otimes_s^n}\fh} = \sum_{j=1}^{n} \underbrace{a \otimes \cdots \otimes a }_{j-1} \otimes c \otimes \underbrace{a \otimes \cdots \otimes a }_{n - j}.
\end{equation}
It follows from this relation and the equalities $U\d \Gamma( \underline c )=( \d \Gamma( c ) \otimes \one + \one \otimes \d \Gamma( c ) ) U$ that (\cite{DerGer2, FrGrSchl2})
\begin{align}\label{eq:C(3)0}
\check{\Gamma}(j) \d \Gamma( c ) &= ( \d \Gamma( c ) \otimes \one + \one \otimes \d \Gamma( c ) ) \check{\Gamma}(j) + \d \check{\Gamma}( j , j c - \underline c j ) ,
\end{align}
where and $\d \check{\Gamma}( a , c ) := U\d {\Gamma}( a , c )$. For $c=\one$, the latter relation gives \eqref{eq:C(4)0}. Equation \eqref{eq:C(4)0} implies $\hat N^{\frac12} \check{\Gamma} (j) = \check{\Gamma} (j) N^{\frac12}$, and this relation, boundedness of $\check{\Gamma} (j)$ and  the assumption \eqref{npb-unif} give
\begin{align*}
\|\hat N^{\frac12}  \check{\Gamma} (j) e^{-iHs}\psi_0\|= \|\check{\Gamma} (j) N^{\frac12} e^{-iHs}\psi_0\|  \lesssim   \| \psi_0\|_N ,
 \end{align*}
and therefore, by \eqref{eq:Z1}, $\|\hat N^{\frac12}e^{i\hat{H} t} \check{\Gamma} (j) e^{-iHs}\psi_0\| \lesssim \| \psi_0\|_N$. Since this is true uniformly in $t,s$, it implies $\| \hat N^{\frac12}W(t)\psi_0\| \lesssim \| \psi_0\|_N $, which yields \eqref{barchik-bnd}. Equation \eqref{barchik-bnd} implies that
\begin{align}\label{Wbark-bnd}
\sup_{t, t'}\|\overline{\chi}_{m}(W(t')- W(t))\psi_0\|\lesssim  m^{-\frac12}.
\end{align}

Now we show that, for any $m>0$ and for any $\psi_0\in D(N^{\frac12})\cap \Ran E_{(-\infty, \Sigma)}(H)$,
\begin{align}\label{Wk-conv}
\|{\chi}_{m}(W(t')- W(t))\psi_0\|\ra 0 ,
\end{align}
 as $t, t' \ra \infty$. This together with \eqref{Wbark-bnd} implies that $W(t)$ form a strong Cauchy sequence. Lemma \ref{lm:f(H)W}, proven below,
 implies that, in order to show \eqref{Wk-conv}, it suffices to prove
\begin{align}\label{Wk-conv'}
\|{\chi}_{m}f(\hat H)(W(t')- W(t))\psi_0\|\ra 0,
\end{align}
to which we now proceed. We write
\begin{align}\label{Wt-repr}
(W(t')- W(t))\psi_0=\int_t^{t'}ds\partial_s W(s)\psi_0
\end{align}
and compute $\partial_t W (t) = e^{i\hat{H} t} G e^{-iHt}$, where $G:= i (\hat{H} \check{\Gamma} (j) - \check{\Gamma}(j)  H)+ \p_t \check{\Gamma}(j)$. We write $ G= G_0 + G_1,$ where
\begin{equation*}
G_0:= i (\hat{H}_f \check{\Gamma} (j) - \check{\Gamma}(j)  H_f)+ \p_t \check{\Gamma}(j)
\end{equation*}
 and
\begin{equation}\label{G1}
G_1 := i( I ( g) \otimes \bfone ) \check{\Gamma} (j) - \check{\Gamma} (j) I ( g).
\end{equation}

We consider $G_0$.  Using $(H_{p} \otimes \one \otimes \one)(\one \otimes \check{\Gamma} (j)) = (\one \otimes \check{\Gamma} (j)) (H_{p} \otimes \one)$ and using the notation
 $\underline d j :=i(\underline{\omega} j - j \omega) + \partial_t j$, with $\underline{\omega} = \mathrm{diag} ( \omega , \omega )$,   and \eqref{eq:C(3)0},
we compute readily
\begin{equation}\label{G0}
 G_0= U  \d {\Gamma} (\underline j, \underline dj ) = \d \check{\Gamma} (\underline j, \underline dj).
 \end{equation}
Write $j' = ( j'_0 , j'_\infty )$, where $j'_0, j'_\infty$ are the derivatives of $j_0, j_\infty$ as functions of $v =\frac{b_\e}{c t^{\al}}$. We first find a convenient decomposition of $\underline dj$. Using  $\underline d j f = (dj_0 f, dj_\infty f)$, with $d c_t =i[\omega, c_t ] + \partial_t c_t$, \eqref{dv} and Corollary \ref{cor:rem1-est}, we compute
\begin{align}\label{udj}
\underline dj &=  (j'_0, j'_\infty) (\frac{\theta_\e}{c t^{\al}} - \frac{\alpha b_\e}{c t^{\al+1}}) + \mathcal{O} (t^{-2\al+\kappa}).
\end{align}
We insert the maximal velocity partition of unity  $\chi_{(\frac{|y|}{\bar c t})^2 \le 1}+\chi_{(\frac{|y|}{\bar c t})^2 \ge 1}=\one$, with $\bar c > 1$, into this formula and use the notation $\chi\equiv \chi_{(\frac{|y|}{\bar c t})^2 \le 1}$ and the relation $\frac{ b_\e}{c t^{\al}} j'_\# = \CO(1) j'_\#$, valid due to the localization of $j'_\#$, to obtain
\begin{align}\label{udj2}
&\underline dj = \frac{1}{c t^{\al}}\theta_\e^{1/2}\chi ( j'_0 , j'_\infty )\chi
\theta_\e^{1/2}+\mathrm{rem}_t ,\\
&\mathrm{rem}_t = \CO(t^{-1}) \chi( j'_0 , j'_\infty )\chi  + \mathcal{O} (t^{-2\al-\kappa}) + \CO( t^{-\al}) \chi_{(\frac{|y|}{\bar c t})^2 \ge 1}.
\end{align}
These relations give
\begin{align}\label{tildeG0-deco}
G_0=G_0'+ \mathrm{Rem}_{t},
\end{align}
where $G'_0 := \frac{1}{ct^\alpha} U \d \Gamma ( j ,  \underline c_t )$, with  $\underline c_t =(c_0, c_\infty):= ( \theta_\e^{1/2}\chi  j'_0 \chi  \theta_\e^{1/2},  \theta_\e^{1/2}\chi j'_\infty \chi  \theta_\e^{1/2})$, and
\begin{equation*}
\mathrm{Rem}_t := G_0 - G'_0 = U \d \Gamma ( j , \mathrm{rem}_t ).
\end{equation*}

Next, we write $A := \sup_{\|\hat\phi_0\| = 1}| \int_t^{t'}ds\lan \hat\phi_s,  G_0 \psi_{s}\ran|$, where $\hat\phi_s :=e^{-i\hat{H} s} f(\hat H) {\chi}_{m} \hat \phi_0$. By \eqref{checkG-ineq'} of Appendix \ref{sec:tech2}, $G'_0$ satisfies
\begin{align}\label{G0'-est}
|\lan \hat\phi,  G'_0 \psi\ran | &\le \frac{1}{ct^\alpha} \big (  \| \d \Gamma( |c_0| )^{\frac12} \otimes \one \hat \phi \| \, \| \d {\Gamma} ( |c_0| )^{\frac12} \psi \| \notag \\
&\quad+ \| \one \otimes \d \Gamma( |c_\infty| )^{\frac12} \hat \phi \| \, \| \d {\Gamma} ( |c_\infty| )^{\frac12} \psi \| \big ).
\end{align}
By the Cauchy-Schwarz inequality, \eqref{G0'-est} implies
\begin{align*}
\int_t^{t'}ds | \lan \hat\phi_s,  G'_0 \psi_{s}\ran | & \le \Big ( \int_t^{t'} ds
\| \d \Gamma(|c_0|  )^{\frac12} \otimes \one \hat\phi_s\|^2 \Big )^{\frac12} \Big ( \int_t^{t'}ds \| \d {\Gamma} (|c_0|  )^{\frac12} \psi_{s}\|^2 \Big )^{\frac12} \notag \\
& + \Big ( \int_t^{t'} ds\| \one \otimes \d \Gamma( |c_\infty| )^{\frac12} \hat\phi_s\|^2 \Big )^{\frac12} \Big ( \int_t^{t'}ds\| \d {\Gamma} (|c_\infty| )^{\frac12} \psi_{s}\|^2 \Big )^{\frac12}.
\end{align*}
Since $|c_0|$, $|c_\infty|$ are of the form $\theta_\e^{1/2} \chi \chi_{b_\e= c t^\al} \chi \theta_\e^{1/2}$, the minimal velocity estimate \eqref{mve1'} implies
\begin{equation*}
\int_1^{\infty} ds \, s^{-\alpha} \| \widehat{\d\Gamma}_\# ( |c| )^{\frac12} \hat\phi_s\|^2 \lesssim   \| {\chi}_{m}\hat\phi_0 \|_0^2 \lesssim m \| \hat\phi_0 \|^2,
\end{equation*}
where $\widehat{\d\Gamma}_\# ( |c| )^{\frac12}$ stands for $\d \Gamma( |c_0| )^{\frac12} \otimes \one$ or $\one \otimes \d \Gamma( |c_\infty| )^{\frac12}$, and
\begin{equation*}
\int_1^\infty ds\, s^{-\al} \| \d {\Gamma}_\# (  |c| )^{\frac12} \psi_{s}\|^2 \lesssim   \| \psi_0 \|_{0}^2 ,
\end{equation*}
with $\d\Gamma_\# ( |c| )^{\frac12} = \d \Gamma( |c_0| )^{\frac12}$ or $\d \Gamma( |c_\infty| )^{\frac12}$. The last three relations give
\begin{align}\label{eq:G'0tt'}
\sup_{\|\hat\phi_0\| = 1} | \int_t^{t'}ds\lan \hat\phi_s,  G'_0 \psi_{s}\ran| \ra 0,\ \quad t, t' \ra \infty.
\end{align}

Likewise, applying \eqref{eq:est-dG(j,c1c2)} of Appendix \ref{sec:tech2} first with $c_1=c_2=1$, next with $c_1=1$ and $c_2=\chi_{(\frac{|y|}{\bar c t})^2 \ge 1}$, and then applying \eqref{checkG-ineq'} with $c_0 = \chi j_0\chi $ and $c_\infty = \chi j_\infty \chi$, we see that $\mathrm{Rem}_t$ satisfies
 \begin{equation}\label{Rem-est'}
|Ê\langle \hat \phi , \mathrm{Rem}_{t} \psi \ran | \lesssim \| \hat N^{\frac12} \hat \phi \| \Big ( t^{-2\alpha + \kappa}  \| N^{\frac12} \psi \| + t^{-1}  \| \d\Gamma(\chi j_\infty'\chi )^{\frac12} \psi \| +t^{-\al}\|\d\G( \chi^2_{(\frac{|y|}{\bar c t})^2 \ge 1})^{\frac12}\psi \| \Big ).
\end{equation}
Now, using \eqref{Rem-est'} and the Cauchy-Schwarz inequality, we obtain
\begin{align}\label{Rem-est''}
&| \int_t^{t'}ds\lan \hat\phi_s, \mathrm{Rem}_s \psi_{s}\ran| \le \Big ( \int_t^{t'} ds \, s^{-\tau} \| \hat N^{\frac12} \hat \phi_s \|^2 \Big )^{\frac12} \Big \{ \Big ( \int_t^{t'} ds \, s^{-2(2\alpha - \kappa)+\tau} \| N^{\frac12}  \psi_s \|^2 \Big )^{\frac12} \notag \\
&\qquad +  \Big ( \int_t^{t'} ds \, s^{-2  + \tau} \| \d\Gamma(\chi  j_\infty'\chi )^{\frac12} \psi_s \|^2 \Big )^{\frac12}+  \Big ( \int_t^{t'} ds \, s^{-2\al  + \tau} \| \d\G(\chi^2_{(\frac{|y|}{\bar c t})^2 \ge 1})^{\frac12} \psi_s \|^2 \Big )^{\frac12} \Big \}.
\end{align}
Let $ \tau > 1$ and $\alpha = 2 - \tau $. Then by the estimate \eqref{mve1},
\begin{align*}
& \int_1^{\infty} ds \, s^{-2  + \tau} \| \d\Gamma(\chi  j_\infty'\chi )^{\frac12} \psi_s \|^2  \lesssim  \| \psi_0 \|_{-1}^2 ,
\end{align*}
provided $\alpha < \frac{1}{\bar c}$, and by the maximal velocity estimate \eqref{maxvel-est},
\begin{align*}
& \int_1^{\infty} ds \, s^{-2\al  + \tau} \| \d\Gamma(\chi^2_{(\frac{|y|}{\bar c t})^2 \ge 1})^{\frac12}  \psi_s \|^2  \lesssim  \| \psi_0 \|_{\d\Gamma(\lan y\ran)},
\end{align*}
provided that $\alpha > 1 - 2\gamma / 3$, where, recall, $\gamma < \frac{ \mu }{ 2 } \min ( \frac{ \bar{\mathrm{c}} - 1 }{ 3 \bar{\mathrm{c}} - 1 } , \frac{1}{2+\mu} )$. One verifies that $\bar c>1$ can be chosen such that the two conditions above are satisfied.  Moreover, by Assumption \eqref{npb-unif},
\begin{align*}
& \int_1^{\infty} ds \, s^{-2(2\alpha - \kappa)+\tau} \| N^{\frac12}  \psi_s \|^2 \lesssim  \| \psi_0 \|_{N},
\end{align*}
provided that $5 \alpha > 3 + 2 \kappa$. This and the fact that, by Assumption \eqref{npb-unif}, the first integral on the r.h.s. of \eqref{Rem-est''} converge yield
\begin{align}\label{eq:Remtt'}
\sup_{\|\hat\phi_0\| = 1} | \int_t^{t'}ds\lan \hat\phi_s, \mathrm{Rem}_s \psi_{s}\ran| \to 0, \ \quad t, t' \ra \infty .
\end{align}
Equations \eqref{eq:G'0tt'} and \eqref{eq:Remtt'} imply
\begin{equation}\label{tildeG0-contr}
A = \|\int_t^{t'}ds {\chi}_{m} f(\hat H)e^{i\hat{H} s} G_0 \psi_{s}\| \ra 0,\ \quad t, t' \ra \infty.
\end{equation}

Now we turn to $ G_1$. We use the definition $\check{\Gamma}(j) := U \Gamma( j )$ to obtain $\check{\Gamma}(j) a^\#(h)=Ua^\#(j h) \Gamma( j )$, then \eqref{U}, and then $j_0^* j_0 + j_\infty^* j_\infty = 1 ,$ to derive
\begin{align}\label{eq:C''0}
\check{\Gamma}(j) a^\#(h) &= ( a^\#(j_0 h) \otimes \one + \one \otimes a^\#(j_\infty h) ) \check{\Gamma}(j),
\end{align}
 where $a^\#$ stands for $a$ or $a^*$, which implies
\begin{align}\label{eq:checkG-I(g)}
\check{\Gamma}(j) I( g ) = ( I( j_0 g ) \otimes \one + \one \otimes I( j_\infty g ) ) \check{\Gamma} ( j ).
\end{align}
 The equation \eqref{eq:checkG-I(g)} gives
\begin{align}\label{eq:G1}
G_1 = ( I( (1 - j_0) g ) \otimes \one - \one \otimes I( j_\infty g ) ) \check{\Gamma}(j) .
\end{align}
Due to \cite[Lemma 3.1]{BoFaSig} (see Appendix \ref{sec:tech}, Lemma \ref{lem:chig-est}), we have $\| j_\infty  g \|_{L^2}\lesssim t^{-\lam}$, $\| (1 - j_0) g \|_{L^2}\lesssim t^{-\lam}$ with $\lam<(\mu+\frac32)\al$. This, \eqref{I-est} (with $\delta = 0$), and $\hat N^{\frac12} \check{\Gamma} (j) = \check{\Gamma} (j) N^{\frac12}$ imply that
\begin{equation}\label{eq:G1_2}
 \| f( \hat H ) G_1 (N+1)^{-\frac12} \| \lesssim t^{-(\mu+\frac32)\alpha} .
 \end{equation}
This together with Assumption \eqref{npb-unif} implies that $\| f( \hat H ) G_1 \psi_t \| \lesssim t^{-(\mu+\frac32)\alpha} \| \psi_0 \|_0$, and hence
\begin{align*}
\| \int_t^{t'} ds f( \hat H ) e^{i \hat H s} G_1 \psi_s \| \to 0 , \qquad t,t' \to \infty,
\end{align*}
provided that $\alpha > ( \mu + \frac{3}{2} )^{-1}$. This together with \eqref{tildeG0-contr} gives \eqref{Wk-conv'}, and therefore \eqref{Wk-conv},  which, as was mentioned above, together with \eqref{Wbark-bnd} shows that $W(t)$ is a Cauchy sequence as $t\ra\infty$. This implies the existence of $W_+$.

Finally, the proofs of \eqref{W+Pgs} and \eqref{W+intertw} are standard. We present the second one.  By \eqref{W-def}, we have $W_\pm e^{i\hat{H} s}= \slim e^{i\hat{H} t}\check{\Gamma} (j) e^{-i H(t+s)} =\slim e^{i\hat{H} (t'-s)}\check{\Gamma} (j) e^{-i Ht'}= e^{i\hat{H} s}W_+$, which implies \eqref{W+intertw}. \end{proof}

Now we establish the following lemma used in the proof of Theorem \ref{thm:MVEraDS}.
\begin{lemma}\label{lm:f(H)W}
Under the conditions of Theorem \ref{thm:MVEraDS}, for any $f \in \mathrm{C}_0^\infty( \Delta )$, $\Delta \subset (E_{\mathrm{gs}} , \Sigma )$, and $\psi_0 \in \Ran  E_\Delta (H) \cap D( N^{\frac12} )$,
\begin{align}
\| (\hat N + 1)^{-\frac12} ( \check{\Gamma}(j) f(H)  - f( \hat{H} ) \check{\Gamma}(j) ) \psi_t\| \lesssim t^{-\alpha} \| \psi_0 \|_0. \label{eq:e0}
\end{align}
\end{lemma}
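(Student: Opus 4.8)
The plan is to use the Helffer--Sj\"ostrand functional calculus to reduce the estimate to a number-operator--weighted bound on the \emph{defect operator}
\[
T:=\check{\Gamma}(j)H-\hat H\check{\Gamma}(j),
\]
which the width $t^\al$ of the localizations $j_0=\chi_{b_\e\le ct^\al}$, $j_\infty=\chi_{b_\e\ge ct^\al}$ forces to be of size $\mathcal O(t^{-\al})$. Since $\psi_0\in\Ran E_\Delta(H)$ I would take $\psi_0=f_1(H)\psi_0$ for a fixed $f_1\in\mathrm{C}_0^\infty(\Delta)$, so that $\psi_t=f_1(H)\psi_t$; by \eqref{npb-unif} one then has $\|(N+1)^{\frac12}\psi_t\|\lesssim\|\psi_0\|_0$, which is the only place the uniform number bound is used.

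\textbf{Step 1: the defect operator.} Since $\check{\Gamma}(j)$ commutes with $H_p$, the identity \eqref{eq:C(3)0} (with $c=\omega$) together with \eqref{eq:checkG-I(g)} gives $T=\d\check{\Gamma}\big(j,\,j\omega-\underline{\omega}j\big)-G_1$, with $G_1$ as in \eqref{eq:G1}. I would prove the weighted bound $\|(\hat N+1)^{-\frac12}\,T\,(N+1)^{-\frac12}\|\lesssim t^{-\al}$ in two parts. For the $\d\check{\Gamma}$ term, $(j\omega-\underline{\omega}j)h=\big([j_0,\omega]h,\,[j_\infty,\omega]h\big)$; since $[\omega,b_\e]=-i\theta_\e|\nabla\omega|^2$ is bounded uniformly in $t$ and $j_\#=\chi(b_\e/ct^\al)$, the Helffer--Sj\"ostrand formula (equivalently Corollary \ref{cor:rem1-est}, together with $\|\p_tj_\#\|\lesssim t^{-1}$ as in \eqref{udj}) yields $\|[\omega,j_\#]\|\lesssim t^{-\al}$, hence $\|j\omega-\underline{\omega}j\|\lesssim t^{-\al}$; this is fed into the elementary estimate $\|(\hat N+1)^{-\frac12}\d\check{\Gamma}(j,c)(N+1)^{-\frac12}\|\le\|c\|$, which follows from $\|\d\Gamma(j,c)|_{\otimes_s^n\fh}\|\le n\|c\|$, $\|j\|=1$, unitarity of $U$, and $\hat N\check{\Gamma}(j)=\check{\Gamma}(j)N$ (see \eqref{eq:C(4)0}). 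For $G_1$, writing it as $\big(\pm\,I((1-j_0)g)\otimes\one\mp\one\otimes I(j_\infty g)\big)\check{\Gamma}(j)$ and using the standard estimate $\|I(h)(N+1)^{-\frac12}\|\lesssim\|h\|_{L^2(\R^3,\chp)}$, the intertwining $(\hat N+1)^{\frac12}\check{\Gamma}(j)=\check{\Gamma}(j)(N+1)^{\frac12}$, and the decay $\|(1-j_0)g\|_{L^2}+\|j_\infty g\|_{L^2}\lesssim t^{-\lambda}$ for any $\lambda<(\mu+\frac32)\al$ from \cite[Lemma 3.1]{BoFaSig} (Lemma \ref{lem:chig-est}), one gets $\mathcal O(t^{-\al})$ since $\mu>-1/2$ permits $\lambda>\al$.

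\textbf{Step 2: assembling.} Let $\tilde f$ be an almost-analytic extension of $f$ with $|\bar\partial\tilde f(z)|\le C_N|\im z|^N$ for every $N$. Using $\check{\Gamma}(j)(z-H)^{-1}-(z-\hat H)^{-1}\check{\Gamma}(j)=(z-\hat H)^{-1}T(z-H)^{-1}$ and the Helffer--Sj\"ostrand formula, I would write
\[
\check{\Gamma}(j)f(H)-f(\hat H)\check{\Gamma}(j)=\frac1\pi\int_{\C}\bar\partial\tilde f(z)\,(z-\hat H)^{-1}\,T\,(z-H)^{-1}\,dx\,dy,\qquad z=x+iy,
\]
apply it to $\psi_t$, and insert factors $(\hat N+1)^{\pm\frac12}$, $(N+1)^{\pm\frac12}$ so that the integrand splits as
\[
(\hat N+1)^{-\frac12}(z-\hat H)^{-1}(\hat N+1)^{\frac12}\;\cdot\;(\hat N+1)^{-\frac12}T(N+1)^{-\frac12}\;\cdot\;(N+1)^{\frac12}(z-H)^{-1}(N+1)^{-\frac12}\;\cdot\;(N+1)^{\frac12}\psi_t .
\]
The middle factor is $\mathcal O(t^{-\al})$ by Step 1 and $\|(N+1)^{\frac12}\psi_t\|\lesssim\|\psi_0\|_0$; for the two resolvent factors I would invoke the standard relative bounds $\|(\hat N+1)^{\mp\frac12}(z-\hat H)^{-1}(\hat N+1)^{\pm\frac12}\|\lesssim|\im z|^{-1}(1+|\im z|^{-1})$ and its analogue for $H$, valid because $[\hat N,\hat H]=[N,I(g)]\otimes\one$ is $(\hat N+1)^{\frac12}$-bounded with constant $\lesssim\|g\|_{L^2(\R^3,\chp)}<\infty$ by \eqref{g-est}, and likewise $[N,H]=[N,I(g)]$. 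Since $\bar\partial\tilde f$ is compactly supported and vanishes to infinite order on the real axis, the integral of $|\bar\partial\tilde f(z)|$ against a polynomial in $|\im z|^{-1}$ converges, giving \eqref{eq:e0}.

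\textbf{Main obstacle.} The hard part is the interplay between two competing demands: to extract the full power $t^{-\al}$ one must genuinely commute $\omega$ through the localizations $j_\#$ --- a crude bound $\|j\omega-\underline{\omega}j\|=\mathcal O(1)$ would destroy the gain --- while at the same time the number operator must be transported through the resolvents of $H$ and $\hat H$, which is the slightly delicate point requiring the relative $\hat N$-bounds above and, tacitly, that $(z-H)^{-1}$ maps $D(N^{\frac12})$ into itself with polynomial-in-$|\im z|^{-1}$ control; both rest on $[N,H]$ being $N$-form-bounded. The hypotheses $\psi_0\in\Ran E_\Delta(H)$ and $\mu>0$ play only a minor role here, entering through the decay rate in Lemma \ref{lem:chig-est}, while the uniform bound \eqref{npb-unif} is precisely what lets us afford the extra $(\hat N+1)^{\frac12}$ paid in the crude estimate of $G_1$.
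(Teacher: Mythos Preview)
Your proof is correct and follows essentially the same route as the paper: Helffer--Sj\"ostrand representation, splitting the defect $T=\hat H\check\Gamma(j)-\check\Gamma(j)H$ into the kinetic piece $\d\check\Gamma(j,\underline\omega j-j\omega)$ and the interaction piece $G_1$, the commutator bound $\|[\omega,j_\#]\|\lesssim t^{-\al}$, and number-weighted resolvent bounds coming from $H\in C^1(N)$. The paper records the latter as $\|(N+1)^{1/2}(H-z)^{-1}(N+1)^{-1/2}\|\lesssim|\im z|^{-2}$; your $|\im z|^{-1}(1+|\im z|^{-1})$ serves equally well.

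One small oversight: you write $\|(1-j_0)g\|_{L^2}+\|j_\infty g\|_{L^2}\lesssim t^{-\lambda}$, but Lemma~\ref{lem:chig-est} only gives this with the weight $\eta$, i.e.\ $\|\eta(1-j_0)g\|_{L^2}\lesssim t^{-\lambda}$. Since $j_\#$ is a function of $b_\e$ (hence of $y=i\nabla_k$), the operator $(1-j_0)$ hits the $k$-dependence of $g$ and, for the Nelson-type coupling \eqref{gx}, produces factors of $x$ that are unbounded on $\cH_p$; this is precisely why $\eta$ appears in \eqref{g-est} for $|\alpha|\ge1$. The fix is immediate and is what the paper does in \eqref{eq:e3b}: since $(H-z)^{-1}$ commutes with $E_\Delta(H)$ and $\psi_t=E_\Delta(H)\psi_t$, you may insert $E_\Delta(H)$ to the right of $G_1$ and absorb $\eta^{-1}$ via \eqref{eta-bnd}, arriving at $\|(\hat N+1)^{-1/2}G_1 E_\Delta(H)\|\lesssim t^{-(\mu+3/2)\al}$. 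With this correction your argument is complete.
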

\begin{proof}
We compute, using the Helffer-Sj{\"o}strand formula, $ \check{\Gamma}(j) f(H) \psi_t - f( \hat{H} ) \check{\Gamma}(j) \psi_t=R$, where
\begin{align}
R&:= \frac{1}{\pi} \int \partial_{\bar z} \widetilde{f} (z) ( \hat H - z )^{-1} ( \hat H \check{\Gamma}(j) - \check{\Gamma}(j) H ) ( H - z )^{-1} \psi_t
 \ddre z \ddim z. \label{eq:e2}
\end{align}
We have $\hat H \check{\Gamma}(j) - \check{\Gamma}(j) H = \tilde G_0 -i G_1$, where $\tilde G_0:= U \d {\Gamma} (j, \underline{\omega} j - j \omega)$ and $ G_1:=
 ( I ( g) \otimes \bfone ) \check{\Gamma} (j) - \check{\Gamma} (j) I ( g)$ was defined in \eqref{G1}.

We consider $\tilde G_0$. As in the proof of Theorem \ref{thm:MVEraDS}, we have $\underline{\omega} j - j \omega =  ( [ \omega , j_0 ] , [ \omega , j_\infty ] )$, and, by Corollary \ref{cor:rem1-est},
\begin{align}\label{eq:s1}
[ \omega , j_\# ] = \frac{ \theta_\e }{ c t^\alpha}  j'_\# + r ,
\end{align}
where $j_\#$ stands for $j_0$ or $j_\infty$, $j'_\#$ is the derivative of $j_\#$ as a function of $\frac{b_\e}{c t^{\al}}$, and $r$ satisfies $\|r \| \lesssim  t^{-2 \alpha+\kappa}$. Since $\theta_\e \le 1$ and since $\kappa < \alpha$, we deduce that $[ \omega , j_\# ] = \CO( t^{-\alpha} )$. By \eqref{eq:est-dG(j,c1c2)} of Appendix \ref{sec:tech2}, we then obtain that
\begin{align*}
\| (\hat N+1)^{-\frac12} \tilde G_0 (N+1)^{-\frac12} \| \lesssim t^{-\alpha}.
\end{align*}
Since $H \in C^1(N)$, we have $\|(N+1)^{\frac12} ( H-z )^{-1} (N+1)^{-\frac12} \| \lesssim | \mathrm{Im} \, z |^{-2}$, and likewise $\|(\hat N+1)^{-\frac12} ( \hat H-z )^{-1} (\hat N+1)^{\frac12} \| \lesssim | \mathrm{Im} \, z |^{-2}$. Moreover, by Assumption \eqref{npb-unif}, $\|(N+1)^{\frac12} e^{-iHt} (N+1)^{-\frac12} \| \lesssim 1$, and $\|(\hat N+1)^{-\frac12} e^{i\hat H t} (\hat N+1)^{\frac12} \| \lesssim 1$. The previous estimates imply
\begin{align}\label{eq:s3}
\| ( \hat N + 1 )^{-\frac12} e^{ i \hat H t } ( \hat H - z )^{-1} \tilde G_0 ( H - z )^{-1} \psi_t \| \lesssim t^{-\alpha} | \mathrm{Im} z |^{-4} \| \psi_0 \|_N.
\end{align}
As in \eqref{eq:G1}--\eqref{eq:G1_2}, we have in addition
\begin{align*}
\| (\hat N+1)^{-\frac12} G_1 E_\Delta(H) \| \lesssim t^{-(\mu+\frac32)\alpha} ,
\end{align*}
and hence
\begin{align}\label{eq:e3b}
\| ( \hat N + 1 )^{-\frac12} e^{ i \hat H t } ( \hat H - z )^{-1} G_1 ( H - z )^{-1} \psi_t \| \lesssim t^{-(\mu+\frac32)\alpha} | \mathrm{Im} z |^{-3} \| \psi_0 \|.
\end{align}

From \eqref{eq:e2}, \eqref{eq:s3}, \eqref{eq:e3b} and the properties of the almost analytic extension $\tilde f$, we conclude that \eqref{eq:e0} holds.
\end{proof}

\medskip

\subsection{Scattering map}\label{sec:scat-map}

We define the space $\mathcal{H}_{ \mathrm{fin} } := \cH_p \otimes \cF_{ \mathrm{fin} } \otimes \mathcal{F}_{ \mathrm{fin} }$, where $\mathcal{F}_{ \mathrm{fin} } \equiv \mathcal{F}_{ \mathrm{fin} }( \mathfrak{h} )$ is the subspace of $\cF$ consisting of vectors  $\Psi=(\psi_n)^{\infty}_{n=0}\in \cF$ such that $\psi_n=0$, for all but finitely many $n$, and  the (\textit{scattering}) map $I: \mathcal{H}_{ \mathrm{fin} } \to \mathcal{H}$ as the extension by linearity of the map (see \cite{HuSp,DerGer2, FrGrSchl2})
\begin{align}\label{I-def}
I:  \Phi\otimes \prod_1^n a^*( h_{i} )\Om \ra
 \prod_1^n a^*( h_{i} ) \Phi,
\end{align}
for any $\Phi\in \mathcal{H}_p \otimes \cF_{ \mathrm{fin} }$  and  for any $ h_{1} , \dots h_n \in  \mathfrak{h}$. (Another useful representation of $I$ is $I: \Phi\otimes f \ra \left( \begin{array}{c} p+q \\ p \end{array} \right)^{1/2} \Phi\otimes_s f$,  for any  $\Phi\in \cH_p \otimes (\otimes_s^p \fh)$ and $f \in \otimes_s^q \fh$). As already clear from \eqref{I-def}, the operator $I$ is unbounded. Let
\begin{align}\label{fh0}
\mathfrak{h}_0 := \{ h \in L^2( \R^3 ) , \int dk ( 1 + \omega^{-1} ) |h(k)|^2 < \infty \} .
\end{align}
Properties of the operator $I$ used below are recorded in the following
\begin{lemma}[\cite{DerGer2, FrGrSchl2, Ger}] \label{lem:GI-rel}
For any operator $j : {h} \to j_0{h} \oplus j_\infty{h}$ and $ n \in \mathbb{N},$ the following relations hold
\begin{align}
&\check{\Gamma}(j)^* = I \Gamma( j_0^* ) \otimes \Gamma( j_\infty^* ), \label{checkG-I} \\
   & D(( H + i)^{-n/2})\otimes (\otimes_s^n \mathfrak{h}_0)\subset D(I).\label{DomI}
\end{align}
\end{lemma}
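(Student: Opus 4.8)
\emph{The plan.} Identity \eqref{checkG-I} is a matrix-element computation against coherent (exponential) vectors, while \eqref{DomI} follows by iterating the standard relative bounds for creation operators; the substance is concentrated in \eqref{DomI}.

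\emph{Proof of \eqref{checkG-I}.} For $h\in\mathfrak{h}$ write $\mathrm{E}_h:=e^{a^*(h)}\Omega=\sum_{n\ge0}(n!)^{-1}(a^*(h))^n\Omega$ for the exponential vector; recall that $a(c)\mathrm{E}_h=\langle c,h\rangle\mathrm{E}_h$, that $\Gamma(\chi)\mathrm{E}_h=\mathrm{E}_{\chi h}$ for any contraction $\chi$, and that the $\mathrm{E}_h$ are total in Fock space. From the defining relations \eqref{U} of $U$ (and the commutation of $a^*(h_0)\otimes\one$ with $\one\otimes a^*(h_\infty)$), term by term on the power series, one gets $U\mathrm{E}_{h_0\oplus h_\infty}=\mathrm{E}_{h_0}\otimes\mathrm{E}_{h_\infty}$, hence $\check\Gamma(j)\,\mathrm{E}_h=U\Gamma(j)\,\mathrm{E}_h=\mathrm{E}_{j_0h}\otimes\mathrm{E}_{j_\infty h}$. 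Now take a typical element $\Psi=\psi\otimes a^*(b_1)\cdots a^*(b_q)\Omega$ of $\cF_{\mathrm{fin}}\otimes\cF_{\mathrm{fin}}$ with $\psi\in\cF_{\mathrm{fin}}$. Since $\Gamma(j_\#^*)$ preserves particle number, $(\Gamma(j_0^*)\otimes\Gamma(j_\infty^*))\Psi$ stays in $\cF_{\mathrm{fin}}\otimes\cF_{\mathrm{fin}}$, where $I$ is defined by \eqref{I-def}; testing against $\mathrm{E}_h$ and using $a(c)\mathrm{E}_h=\langle c,h\rangle\mathrm{E}_h$ and $\Gamma(j_0)\mathrm{E}_h=\mathrm{E}_{j_0h}$, a short computation gives
\[
\bigl\langle I\bigl(\Gamma(j_0^*)\otimes\Gamma(j_\infty^*)\bigr)\Psi,\ \mathrm{E}_h\bigr\rangle=\Bigl(\textstyle\prod_{i=1}^q\langle b_i,j_\infty h\rangle\Bigr)\langle\psi,\mathrm{E}_{j_0h}\rangle=\langle\Psi,\ \mathrm{E}_{j_0h}\otimes\mathrm{E}_{j_\infty h}\rangle=\bigl\langle\check\Gamma(j)^*\Psi,\ \mathrm{E}_h\bigr\rangle.
\]
As the $\mathrm{E}_h$ are total, \eqref{checkG-I} follows on $\cF_{\mathrm{fin}}\otimes\cF_{\mathrm{fin}}$. (Alternatively one checks it sector by sector from the combinatorial form $I:\Phi\otimes f\mapsto\binom{p+q}{p}^{1/2}\Phi\otimes_s f$ recorded after \eqref{I-def}.)

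\emph{Proof of \eqref{DomI}} (reading $D((H+i)^{-n/2})$ as the range $\Ran((H+i)^{-n/2})=D((H+i)^{n/2})$). It suffices to show that for $\Phi\in D((H+i)^{n/2})$ and $h_1,\dots,h_n\in\mathfrak{h}_0$ the vector $a^*(h_1)\cdots a^*(h_n)\Phi\in\cH$ is well defined with
\[
\|a^*(h_1)\cdots a^*(h_n)\Phi\|\ \lesssim\ \Bigl(\textstyle\prod_{i=1}^n\|h_i\|_{\mathfrak{h}_0}\Bigr)\,\bigl\|(H+i)^{n/2}\Phi\bigr\|,\qquad \|h\|_{\mathfrak{h}_0}^2:=\textstyle\int(1+\omega^{-1})|h|^2,
\]
for then $I$ extends by linearity in $f$ and continuity from $\cH_{\mathrm{fin}}$ to $D((H+i)^{n/2})\otimes(\otimes_s^n\mathfrak{h}_0)$. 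This estimate is proved by induction on $n$ from the first-order bounds $\|a(h)\Theta\|\le\|\omega^{-1/2}h\|\,\|H_f^{1/2}\Theta\|$ and $\|a^*(h)\Theta\|^2=\|a(h)\Theta\|^2+\|h\|^2\|\Theta\|^2$, which make $a^\#(h)$ relatively $H_f^{1/2}$-bounded for $h\in\mathfrak{h}_0$: one pushes the factors $(H_f+1)^{1/2}$ produced at each step through the remaining creation operators via $[H_f,a^\#(h)]=\pm a^\#(\omega h)$, absorbs the resulting lower-order terms into the same scheme, and closes the induction using the standard fact that below $\Sigma$ one has $D((H+i)^{n/2})\subset D(H_f^{n/2})$ with $\|H_f^{n/2}\Phi\|\lesssim\|(H+i)^{n/2}\Phi\|$ (a consequence, for the present models, of $D(H)=D(H_p)\cap D(H_f)$ together with the $N_\tau$-type relative bounds).

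\emph{Main obstacle.} The identity \eqref{checkG-I} is just the coherent-vector bookkeeping above. The real work is \eqref{DomI}: one must organize the induction so that every commutator term $a^\#(\omega h_i)$ generated by moving a power of $H_f$ past a creation operator is again controlled by the same weighted norms (using, where the ultraviolet behavior of the $h_i$ intervenes, the cutoffs present throughout the paper), and one needs the regularity transfer $D((H+i)^{n/2})\subset D(H_f^{n/2})$ — precisely the point at which the choice of the infrared-weighted space $\mathfrak{h}_0$ (rather than $L^2$) is forced, since for massless bosons $\d\Gamma(\omega)$-regularity, unlike $N$-regularity, is dominated by $H$-regularity.
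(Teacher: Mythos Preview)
Your coherent-vector argument for \eqref{checkG-I} is correct and is a genuinely different route from the paper's. The paper computes $\check\Gamma(j)^*=\Gamma(j)^*U^*$ directly on product vectors $\prod a^*(g_i)\Omega\otimes\prod a^*(h_i)\Omega$, using the embeddings $i_0g=(g,0)$, $i_\infty h=(0,h)$ and the intertwining relations $U^*(a^*(g)\otimes\one)=a^*(i_0g)U^*$, $U^*(\one\otimes a^*(h))=a^*(i_\infty h)U^*$, together with $j^*i_0=j_0^*$, $j^*i_\infty=j_\infty^*$. Your exponential-vector computation yields the same identity with less combinatorics; both approaches establish \eqref{checkG-I} on the same dense set.

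For \eqref{DomI} you have the same two building blocks the paper invokes: the multi-creation bound $\|a^*(h_1)\cdots a^*(h_n)(H_f+1)^{-n/2}\|\le C_n\prod\|h_i\|_\omega$ and the regularity transfer $H_f^n(H+i)^{-n}$ bounded. The paper simply quotes both as known from \cite{FrGrSchl2,Ger}. Your sketch of the first bound, however, has a wrinkle: the induction you propose commutes powers of $H_f$ through $a^*(h)$ via $[H_f,a^*(h)]=a^*(\omega h)$, which produces factors requiring $\omega h\in L^2$ --- ultraviolet decay that $\mathfrak{h}_0=\{h:\int(1+\omega^{-1})|h|^2<\infty\}$ does \emph{not} provide. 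Your appeal to ``the cutoffs present throughout the paper'' does not rescue this, since the lemma is stated for arbitrary $h_i\in\mathfrak{h}_0$. The standard proof avoids the commutator entirely: one uses the pull-through $a(k)g(H_f)=g(H_f+\omega(k))a(k)$ (equivalently, the pointwise estimate $\int\prod\omega(k_i)\|a(k_1)\cdots a(k_n)\Phi\|^2\,dk\le\|H_f^{n/2}\Phi\|^2$) together with Wick ordering, and only the $\|\cdot\|_\omega$ norm enters. So your overall strategy for \eqref{DomI} is the paper's, but the specific mechanism you outline for the first ingredient needs to be replaced by the pull-through argument.
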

\begin{proof} Since the operators involved act only on the photonic degrees of freedom, we ignore the particle one.
For $g,h \in \mathfrak{h}$, we define embeddings $ i_0{g} := ( g , 0 ) \in \mathfrak{h} \oplus \mathfrak{h}$ and $i_\infty h := (0,h) \in \mathfrak{h} \oplus \mathfrak{h}$. By the definition of $U$ (see \eqref{U}), we have the relations  $U^* a^*(g) \otimes \one = a^*( i_0 g ) U^*$, and $U^* \one \otimes a^*(h) = a^*( i_\infty  h )U^*$. Hence, using in addition $U^* \Omega \otimes \Omega = \Omega$, we obtain
\begin{align*}
 U^*\prod_1^m a^*(g_i)\Omega \otimes \prod_1^n  a^{*}( h_{i} ) \Om & =  \prod_1^m a^*( i_0{g_i} ) \prod_1^n a^{*}( i_\infty  h_{i} ) \Om. \notag
\end{align*}
By the definition of ${\Gamma}(j)$ and the relations $ j^* i_0{g} = j_0^* g $ and $j^*i_\infty  h = j_\infty^* h$, this gives
 \begin{align}\label{eq:C1}
{\Gamma}(j)^* U^*\prod_1^m a^*(g_i)\Omega \otimes \prod_1^n  a^{*}( h_{i} ) \Om & =\prod_1^n a^*( j_\infty^*{g_i} ) \prod_1^m  a^{*}( j_0^* h_{i} )  \Om.
\end{align}
 Now, by the definition of $\check{\Gamma}(j)$ (see \eqref{U}), we have  $\check{\Gamma}(j)^* = {\Gamma}(j)^* U^* $. On the other hand by \eqref{I-def}, the r.h.s. is $I \Gamma( j_0^* ) \otimes \Gamma( j_\infty^* )\prod_1^m a^*(g_i)\Omega \otimes \prod_1^n  a^{*}( h_{i} ) \Om$.
This proves \eqref{checkG-I}.

To prove  \eqref{DomI}, we use the following elementary properties (\cite{FrGrSchl2,Ger}):
\begin{align}\label{lm:higher-order}
\mbox{The operator}\ \quad H_f^n ( H + i)^{-n}\ \quad \mbox{is bounded}\ \quad  \forall n \in \mathbb{N} ,
\end{align}
and, for any $h_1 , \cdots h_n \in \mathfrak{h}_0$, where $\mathfrak{h}_0$ is defined in \eqref{fh0},
\begin{align}\label{multipl-a-bnd}
\| a^*(h_1) \cdots a^*(h_n) (H_f+1)^{-n/2} \| \le C_n \| h_1 \|_\omega \cdots \| h_n \|_\omega ,
\end{align}
where $\| h \|_\omega := \int dk (1+\omega^{-1}) |h(k)|^2$. The previous two estimates and the representation \eqref{I-def} imply that  for any $\Phi \in D(( H + i)^{-n/2})$
 and $h_1, \cdots , h_n \in \mathfrak{h}_0$, we have $\|I \Phi  \otimes \prod_1^n a^*( h_{i} )\Om  \| \le C_n \| h_1 \|_\omega \cdots \| h_n \|_\omega \|( H + i)^{n/2}\Phi \|< \infty.$
This gives the second statement of the lemma.
\end{proof}

\medskip

\subsection{Asymptotic completeness}\label{sec:ACpf}

Recall that  $P_{\mathrm{gs}}$ denotes the orthogonal projection onto the ground state subspace of $H$.  Below, the symbol $ C(\e') o_t( 1 )$ stands for a positive function of $\e$ and $t$ such that $\| C(\e) o_t(1) \| \to 0$ as $t \to \infty$ and we denote by $\chi_{\Om}(\lam)$ a smoothed out characteristic function of a set $\Om$.
In this section we prove the following result.
\begin{theorem}\label{thm:MVEraAC}
Assume the conditions of Theorem \ref{thm:ac} and let $a< \Sigma,\ \Delta =[E_{\mathrm{gs}}, a]\subset \R$. Then, for every $\e'>0$ there is $\phi_{0\e'}$, s.t.
\begin{align}\label{psit-asymp}
\limsup_{t \to \infty } \| \psi_{t} - I (e^{ - i E_{\mathrm{gs}} t} P_{\mathrm{gs}} \otimes e^{ - i H_f t}\chi_{[0,a-E_{\mathrm{gs}} ] }(H_f)) \phi_{0\e'} \| = \mathcal{O}(\e') ,
\end{align}
which implies \eqref{AC}.
\end{theorem}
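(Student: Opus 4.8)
\textbf{Proof proposal for Theorem \ref{thm:MVEraAC}.}

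The plan is to combine the existence of the Deift--Simon wave operator $W_+$ from Theorem \ref{thm:MVEraDS} with the minimal velocity estimate (Theorem \ref{thm:mve2}, equivalently \eqref{mve2y}) to extract the asymptotic state. First I would fix $f \in \mathrm{C}_0^\infty(\Delta)$ with $f \equiv 1$ on a neighborhood of $\spec(H) \cap \Delta$ so that $\psi_0 = f(H)\psi_0$, and write $\psi_t = \check\Gamma(j)^* \check\Gamma(j) \psi_t$ using that $\check\Gamma(j)$ is an isometry. Since $W(t) = e^{i\hat H t}\check\Gamma(j)e^{-iHt}$ converges strongly to $W_+$ on $\Ran E_\Delta(H)$, we have $\check\Gamma(j)\psi_t = e^{-i\hat H t}W_+\psi_0 + o_t(1)$, so the task reduces to understanding $\check\Gamma(j)^* e^{-i\hat H t} W_+\psi_0$. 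Using the intertwining relation \eqref{W+intertw}, $W_+\psi_0 = W_+ f(H)\psi_0 = f(\hat H)W_+\psi_0$, and $\hat H = H\otimes\one + \one\otimes H_f$ acts on $\hat\cH = \cH\otimes\cF$; the second factor carries the asymptotically free photons.

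The key structural step is to decompose $W_+\psi_0 \in \hat\cH$ according to the spectral decomposition of $H$ in the first factor. On $\Ran E_\Delta(H)$ with $\Delta = [E_{\mathrm{gs}}, a]$ and under \eqref{EVcond}, the only eigenvalue is $E_{\mathrm{gs}}$ with eigenprojection $P_{\mathrm{gs}}$, while the rest of the spectrum is (absolutely) continuous. The minimal velocity estimate \eqref{mve2y} (or rather its consequence through $\check\Gamma(j)$ and $G_0$ in the proof of Theorem \ref{thm:MVEraDS}) forces the non-$P_{\mathrm{gs}}$ part to produce, in the limit, photons that have genuinely escaped; iterating, or rather using that below $\Sigma$ the continuous spectrum component must shed all its photons, one gets that $W_+\psi_0$ lies (up to $\cO(\e')$) in $\Ran(P_{\mathrm{gs}}\otimes \one_{\cF})$. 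More precisely I would argue that $(\one - P_{\mathrm{gs}}\otimes\one)W_+\psi_0$ can be made arbitrarily small: apply $W_+$ again to the $\cH$-component (or use that $H$ restricted to the range of $W_+$ on the continuous part has no bound states with photon cloud), and invoke Theorem \ref{thm:MVEraDS} iteratively together with the local decay / RAGE-type argument. This identifies the asymptotic vector as $e^{-iE_{\mathrm{gs}}t}P_{\mathrm{gs}}\otimes e^{-iH_f t}\phi$ for some $\phi \in \cF$, and since $\psi_0 = f(H)\psi_0$ with $f$ supported in $[E_{\mathrm{gs}},a]$, the energy conservation $\hat H = E_{\mathrm{gs}} + H_f$ on this subspace localizes $H_f \le a - E_{\mathrm{gs}}$, giving the cutoff $\chi_{[0,a-E_{\mathrm{gs}}]}(H_f)$.

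Finally I would pull back through $\check\Gamma(j)^*$ using \eqref{checkG-I} of Lemma \ref{lem:GI-rel}, namely $\check\Gamma(j)^* = I\,\Gamma(j_0^*)\otimes\Gamma(j_\infty^*)$: as $t\to\infty$, $j_0 \to 0$ strongly on the escaping-photon modes and $j_\infty \to \one$ there, while on the bound photon cloud $j_0 \to \one$, so $\Gamma(j_0^*)\otimes\Gamma(j_\infty^*)$ applied to $P_{\mathrm{gs}}\otimes(\text{free photons})$ converges to $P_{\mathrm{gs}}\otimes(\text{the same free photons})$ sitting in the right slots, and $I$ reassembles them into $\cH$. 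This is exactly where the cutoff in $\e'$ enters: the map $I$ is unbounded (Lemma \ref{lem:GI-rel}, \eqref{DomI} requires control of the photon number and of $\omega^{-1}$-moments), so one must first truncate $\phi$ to a vector $\phi_{0\e'}$ with finitely many photons lying in $\mathfrak{h}_0$, at the cost of $\cO(\e')$ --- this truncation is legitimate because $\| N^{1/2}\psi_t\|$ is uniformly bounded by \eqref{npb-unif} and the low-momentum bound \eqref{dGk-bnd} controls $\d\Gamma(\omega^{-1})$. Assembling these pieces yields \eqref{psit-asymp}, and \eqref{AC} follows by expanding $I(P_{\mathrm{gs}}\otimes e^{-iH_f t}\chi(H_f)\phi_{0\e'})$ into the form $\sum_j e^{-iE_j t}\Phi_j \otimes_s e^{-iH_f t}f_{j\e'}$ (here only $j=0$ survives).

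\textbf{Main obstacle.} The hard part will be the second step: showing that the continuous-spectrum component of $W_+\psi_0$ contributes nothing to the asymptotics, i.e. that all photons not in the ground-state cloud actually escape. This requires either an iteration of the Deift--Simon construction (peeling off photons one at a time and controlling the number of iterations via \eqref{npb-unif}) together with local decay to kill the remaining bound part, and keeping track of the $\e'$-dependence and the domain issues of $I$ throughout. The uniform photon-number bound \eqref{npb-unif} is what makes the iteration terminate and what makes the truncation to $\cH_{\mathrm{fin}}$ quantitative.
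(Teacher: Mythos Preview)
Your overall framework --- use the isometry $\check\Gamma(j)^*\check\Gamma(j)=\one$, the existence of $W_+$, intertwining, and then pull back via $\check\Gamma(j)^* = I\,\Gamma(j_0^*)\otimes\Gamma(j_\infty^*)$ with a finite-photon truncation to control $I$ --- matches the paper. The identification of the energy cutoff $\chi_{[0,a-E_{\mathrm{gs}}]}(H_f)$ via $\chi_\Delta(\hat H)$ and the need for $\phi_{0\e'}\in\cF_{\mathrm{fin}}(\mathfrak{h}_0)$ is also right.

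The genuine gap is your treatment of the continuous-spectrum component. You assert that $(\one - P_{\mathrm{gs}}\otimes\one)W_+\psi_0$ can be made $\cO(\e')$ by ``iterating $W_+$'' or a RAGE-type argument, but this statement is essentially equivalent to $\Ran W_+\subset\Ran(P_{\mathrm{gs}}\otimes\one)$, which \emph{is} the asymptotic completeness you are trying to prove; the proposed iteration is neither formulated nor shown to terminate, and local decay alone does not yield it. The paper does \emph{not} argue that this vector is small in $\hat\cH$. Instead, it keeps both pieces: after approximating $\phi_0$ by $(\chi_{\Delta_{\e'}}(H)+P_{\mathrm{gs}})\otimes\chi_{[0,a-E_{\mathrm{gs}}]}(H_f)$ with $\Delta_{\e'}=[E_{\mathrm{gs}}+\delta',a]$, it writes $\check\Gamma(j)^* = \check\Gamma(j)^*\big(\Gamma(\tilde j_0)\otimes\Gamma(\tilde j_\infty)\big)$ with a slightly wider partition $(\tilde j_0,\tilde j_\infty)$ and observes that the continuous piece produces the term
\[
A=\check\Gamma(j)^*\big(\Gamma(\tilde j_0)\,e^{-iHt}\chi_{\Delta_{\e'}}(H)\otimes\Gamma(\tilde j_\infty)e^{-iH_f t}\chi(H_f)\big)\phi_{0\e'}.
\]
Here $\tilde j_0=\tilde\chi_{b_\e\le ct^\alpha}$, so $\|A\|\le\|\Gamma(\tilde j_0)e^{-iHt}\chi_{\Delta_{\e'}}(H)\otimes\one\,\phi_{0\e'}\|$, and \emph{this} is exactly what the weak minimal velocity estimate \eqref{mve2} kills directly, in one shot, since $\Delta_{\e'}\subset(E_{\mathrm{gs}},\Sigma)$. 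No iteration, no peeling off of photons, and no control of $(\one-P_{\mathrm{gs}}\otimes\one)W_+\psi_0$ as a vector is needed. The remaining terms ($(\Gamma(\tilde j_0)-\one)P_{\mathrm{gs}}$ via the pull-through formula, and $(\Gamma(\tilde j_\infty)-\one)e^{-iH_f t}$ via free propagation) are the easy ones you already sketched. Your ``main obstacle'' thus dissolves once you apply \eqref{mve2} at the right place --- to $\Gamma(\tilde j_0)$ acting on the first tensor factor after the decomposition, rather than trying to locate $W_+\psi_0$ inside $\Ran(P_{\mathrm{gs}}\otimes\one)$ beforehand.
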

\begin{proof}
Let $\alpha,\beta,\kappa$ be fixed such that the conditions of Theorems \ref{thm:mve1}, \ref{thm:mve2} and \ref{thm:MVEraDS} hold, with $\alpha = \beta$. Let $(j_0, j_\infty):= (\chi_{b_\e \leq c t^\al} ,  \chi_{b_\e \geq c t^\al})$ be the partition of unity defined in Subsection \ref{sec:pu}. Since $j_0^2+j_\infty^2=1$, the operator $\check{\Gamma}(j)$ is, as mentioned above, an isometry. Using the relation $\Gamma(j)^*\Gamma(j)=\one$, the boundedness of $\check{\Gamma}(j)^*$, and the existence of $W_+$, we obtain
\begin{align}\label{psite1}
\psi_{t} &=   \check{\Gamma} (j)^* e^{-i \hat H t}  e^{i \hat H t} \check{\Gamma}(j) e^{- i Ht} \psi_{0}  =   \check{\Gamma} (j)^* e^{ - i \hat Ht}  \phi_{0} + o_t(1),
\end{align}
where $\phi_{0} := W_+ \psi_{0}$.
Next, using the property $W_+\chi_{\Delta} ({H})=\chi_{\Delta} (\hat{H})W_+$, which gives $W_+ \psi_{0}=\chi_{\Delta} (\hat{H})W_+ \psi_{0}$,  and $\chi_{\Delta} (\hat{H})  = (\chi_{[E_{\mathrm{gs}},a]} (H) \otimes \chi_{[0,a-E_{\mathrm{gs}}]}(H_f)) \chi_{\Delta} (\hat{H})$, and again using $\chi_{\Delta} (\hat{H})W_+ \psi_{0} = W_+ \psi_{0}=\phi_{0}$, we obtain
\begin{align}\label{phi0e-appr1}
\phi_{0} =(\chi_{[E_{\mathrm{gs}},a]} (H) \otimes \chi_{[0, a-E_{\mathrm{gs}}]}(H_f))\phi_{0}.
\end{align}
For all $\e'>0$, there is $\del'=\del'(\e')>0$, such that
\begin{equation}\label{phi0e-appr2}
 \|(\chi_{[E_{\mathrm{gs}},a]} (H) \otimes \one) \phi_{0} - (\chi_{\Delta_{\e'} }(H) \otimes \one ) \phi_{0} - (P_{\mathrm{gs}} \otimes \one) \phi_{0}\| \le \e',\
\end{equation}
$\mbox{with}\ \Delta_{\e'} = [ E_{\mathrm{gs}} + \del' , a ] $. The last two relations give
\begin{align}\label{phi0e-appr3}
\phi_{0}=((\chi_{\Delta_{\e'} }(H)+P_{\textrm{gs}}) \otimes \chi_{[0,a-E_{\mathrm{gs}}]}(H_f))\phi_{0} +  \mathcal{O}( \e' ).
\end{align}

Now, let $\phi_{0,\e'} \in \cF_{ \mathrm{fin} }( D( \d\Gamma( \lan y \ran ) ) ) \otimes \cF_{ \mathrm{fin} }( \mathfrak{h}_0 )$
be such that $ \| \phi_{0} - \phi_{0\e'} \| \le \e'$. (We require that the `first components' of $\phi_{0\e'}$ are in $D( \d\Gamma( \lan y \ran ) )$ in order to apply the minimal velocity estimate below, and that the `second components' are in $\cF_{ \mathrm{fin} }( \mathfrak{h}_0 )$ in order that $( P_{\mathrm{gs}} \otimes \one ) \phi_{0\e'}$ is in $D(I)$). This together with \eqref{psite1} and \eqref{phi0e-appr3} gives
\begin{align}\label{psite2}
\psi_{t} & =   \check{\Gamma} (j)^* e^{ - i \hat H t }((\chi_{\Delta_{\e'} }(H)+P_{\textrm{gs}}) \otimes \chi_{[0, a-E_{\mathrm{gs}}]}(H_f)) \phi_{0\e'}  + \mathcal{O}( \e' )  + o_t(1).
\end{align}

Furthermore, let $(\tilde j_0 , \tilde j_\infty )$ be of the form $\tilde j_0 = \tilde \chi_{b_\e \le c t^\alpha }$, $\tilde j_\infty = \tilde{\chi}_{b_\e \ge c t^\alpha}$ where $\tilde \chi$, has the same properties as $\chi$, and satisfy $j_0  \tilde j_0 =j_0 ,\ j_\infty \tilde j_\infty =j_\infty$. Then, by \eqref{checkG-I}, the  adjoint $\check{\Gamma}(j)^*$  to the operator $\check{\Gamma}(j)$ can be  represented as
\begin{align}\label{G*'}
\check{\Gamma}(j)^* = \check{\Gamma}(j)^* \big ( \Gamma( \tilde j_0 ) \otimes \Gamma ( \tilde j_\infty )\big).
\end{align}
Using this equation in \eqref{psite2},  together with the relations $e^{ - i \hat Ht}=e^{ - i H t } \otimes   e^{ - i H_f t }$ and $e^{ - i H t}P_{\textrm{gs}}=e^{- i E_{\textrm{gs}}t}P_{\textrm{gs}}$, gives
\begin{align}\label{psite3}
\psi_{t} & =  \check{\Gamma}(j)^* \psi_{t\e'} + A+B+ C + \mathcal{O}(\e') + o_t( 1 ) ,
\end{align}
where
\begin{align}
&\psi_{t\e'} :=(e^{ - i E_{\mathrm{gs}} t} P_{\mathrm{gs}} \otimes e^{ - i H_f t}\chi_{[0, a-E_{\mathrm{gs}}]}(H_f)) \phi_{0\e'}  , \label{Psit}\\
&A  := \check{\Gamma}(j)^* (\Gamma( \tilde j_0 )e^{ - i H t } \chi_{\Delta_{\e'} }(H) \otimes \Gamma( \tilde j_\infty )e^{ - i H_f t}\chi_{[0, a-E_{\mathrm{gs}}]}(H_f)) \phi_{0\e'},
\label{A} \\
& B:= \check{\Gamma}(j)^* ((\Gamma(\tilde j_0 )-\one)e^{- i E_{\textrm{gs}}t}P_{\textrm{gs}} \otimes \Gamma( \tilde j_\infty )e^{ - i H_f t}\chi_{[0, a-E_{\mathrm{gs}}]}(H_f)) \phi_{0\e'}, \label{B}\\
&C:= \check{\Gamma}(j)^* (e^{-i E_{\textrm{gs}}t}P_{\textrm{gs}} \otimes (\Gamma( \tilde j_\infty )-\one)e^{ - i H_f t} \chi_{[0, a-E_{\mathrm{gs}}]}(H_f)) \phi_{0\e'} ,\label{C}
\end{align}
Since $\Gamma(j)^*$ is bounded, the minimal velocity estimate, \eqref{mve2}, gives (here we use that the first components of $\phi_{0\e'}$ are in $D(\d\Gamma( \lan y \ran ) )$)
\begin{align*}
\| A\|\le \big \| ( \Gamma ( \tilde j_0) e^{ - i H t } \chi_{\Delta_{\e'} }(H) \otimes \one ) \phi_{0\e'} \big \| = C(\e') o_t(1) .
\end{align*}

Now we consider the term given by $B$. We begin with
\begin{align}
 \big \|B \|  &\le \big \|   ( \Gamma ( \tilde j_0) - \one ) P_{\textrm{gs}} \big \| . \label{eq:b3}
\end{align}
Since $0\le \tilde j_0\le 1$, we have that $0\le \one - \Gamma ( \tilde j_0)\le \one $. Using this, the relations $\one - \Gamma ( \tilde j_0) \le\d \Gamma( \tilde{\chi}_{b_\e \ge c t^\alpha})$ and $\d \Gamma( \tilde{\chi}_{b_\e \ge c t^\alpha} )\le t^{-2\alpha}  \d \Gamma ( b_\e^{2} )$, we obtain the bound
\begin{align} \label{eq:b4}
\big \| ( \Gamma ( \tilde j_0) - \one ) u \|^2 & \le  \| ( \one - \Gamma ( \tilde j_0) )^{\frac12} u \|^2 \le \| \d \Gamma( \bar{\tilde{\chi}}_{b_\e \ge c t^\alpha} )^{\frac12} u \|^2\notag \\
&\le t^{-2\alpha} \| \d \Gamma ( b_\e^{2} )^{\frac12} u \|^2.
\end{align}
Using the pull-through formula, one verifies that $\d \Gamma ( b_\e^2 )^{\frac12} P_{\mathrm{gs}}$ is bounded and that $\| \d \Gamma ( b_\e^2 )^{\frac12} P_{\mathrm{gs}} \| = \CO( t^\kappa )$ (see Appendix \ref{sec:tech2}, Lemma \ref{lm:domainGS}).
Hence, since $\kappa < \alpha$, the above estimates yield
\begin{align}\label{B-est}
&\big \|B \big \| =  o_t(1).
\end{align}

Next, using $ \Gamma(j_\infty)  e^{ - i H_f t }  =  e^{ - i H_f t } \Gamma( e^{ i \om t} j_\infty e^{ - i \om t} ) $ and $ e^{i\om t} b_\e e^{-i\om t}=b_\e+\theta_\e t$, it is not difficult to verify  (see Appendix \ref{sec:tech2}, Lemma \ref{lem:Gjinfty-est})  that
\begin{align*}
\big \|C \big \|\le \big \| \one \otimes ( \Gamma( e^{i \om t} \tilde j_\infty e^{- i \om t} ) - \one )  \phi_{0\e'} \big \| \to 0 ,
\end{align*}
as $t \to \infty$, and hence we obtain
\begin{align}\label{C-est}
\big \|C \big \| = C(\e') o_t(1).
\end{align}
Inserting the previous estimates into \eqref{psite3} shows that
\begin{align} \label{psite4}
\psi_{t} & = \check{\Gamma}(j)^* \psi_{t\e'} + \mathcal{O}(\e') + C(\e') o_t( 1 ).
\end{align}

Next, we want to pass from $\check{\Gamma}(j)^*$ to $I$ using the formula \eqref{checkG-I}. To this end we use estimates of the type \eqref{B-est} and \eqref{C-est} in order to remove the term $ \Gamma( j_0 ) \otimes \Gamma ( j_\infty) $. Hence, we need to bound $I$, for instance by introducing a cutoff in $N$.  Let $\chi_{m} := \chi_{N\le m}$ and $\bar \chi_{m} := \one - \chi_{m}$ and write $\check{\Gamma}(j)^* \psi_{t\e'} = \chi_m \check{\Gamma}(j)^* \psi_{t\e'}+ \bar \chi_m \check{\Gamma}(j)^* \psi_{t\e'}$. Using that $N^{1/2} \check{\Gamma}(j)^* = \check{\Gamma}(j)^* \hat{N}^{1/2}$, and that $\psi_{t\e'} \in D( \hat N^{1/2} )$ (see Appendix \ref{sec:tech2}, Lemma \ref{lm:domainGS}), we estimate
\begin{align*}
\| \bar \chi_m \check{\Gamma}(j)^* \psi_{t\e'} \| \lesssim m^{-\frac12} \| \hat N^{1/2} \psi_{t\e'} \| = m^{-\frac12} C( \e' ).
\end{align*}
Now, we can use \eqref{checkG-I} to obtain
\begin{align} \label{psite5}
\psi_{t} & = \chi_m I \big ( \Gamma( j_0 ) \otimes \Gamma ( j_\infty )\big)\psi_{t\e'}  + \mathcal{O}(\e') + C(\e') o_t( 1 ) + C(\e') o_m(1).
\end{align}
Using $\|Ê\chi_m I \| \le 2^{m/2}$ together with estimates of the type \eqref{B-est} and \eqref{C-est}, we find (here we need the cutoff $\chi_m$)
\begin{align} \label{psite6}
\psi_{t} & =  \chi_m I \psi_{t\e'} + \mathcal{O}(\e') + C(\e',m) o_t( 1 ) + C(\e')o_m(1).
\end{align}
Since $\phi_{0\e'} \in  \cH  \otimes \cF_{ \mathrm{fin}} ( \mathfrak{h}_0 )$, we can write $\psi_{t\e'}$ as $\psi_{t\e'}= \Phi_{\mathrm{gs}} \otimes f_{t\e'} $, with $f_{t\e'} \in \cF_{ \mathrm{fin} } ( \mathfrak{h}_0 )$, and therefore $\psi_{t\e'}\in D(I)$ (here we need that $f_{\e'}$ is in $\cF_{\mathrm{fin}}( \mathfrak{h}_0 )$). Hence $\chi_m I \psi_{t\e'} = I \psi_{t\e'} + C(\e') o_m(1)$. Combining this with
\eqref{psite6} and remembering \eqref{Psit}, we obtain
\begin{align} \label{psite7}
\psi_{t} & =  I(e^{ - i E_{\mathrm{gs}} t} P_{\mathrm{gs}} \otimes e^{ - i H_f t}\chi_{[0,a-E_{\mathrm{gs}}]}(H_f)) \phi_{0\e'} + \mathcal{O}( \e' ) + C(\e',m) o_t( 1 ) + C(\e')o_m(1).
\end{align}
Letting $t\to \infty$, next $m \to \infty$, the equation \eqref{psit-asymp} follows.
\end{proof}

\noindent \textbf{Remark.}  The reason for $\e'$ in the statement of the theorem is we do not know whether $\Ran ( P_{\textrm{gs}} \otimes 1 )W_+ \psi_{0} \in D(I)$.
Indeed, if the latter were true, then the relations \eqref{psite7},  \eqref{phi0e-appr3} and  $ \| \phi_{0} - \phi_{0\e'} \| \le \e'$, where $\phi_{0} := W_+ \psi_{0}$, would give
\begin{align} \label{psite8}
\psi_{t} & =  I(e^{ - i E_{\mathrm{gs}} t} P_{\mathrm{gs}} \otimes e^{ - i H_f t}\chi_{[0,a-E_{\mathrm{gs}}]}(H_f)) \phi_{0} + \mathcal{O}( \e' ) + C(\e',m) o_t( 1 ) + C(\e')o_m(1),
\end{align}
which, after letting $t\to \infty$, next $m \to \infty$ and then $\e' \to 0$,  gives
\begin{align} \label{psite9}
\lim_{t\to \infty}\|\psi_{t} -  I(e^{ - i E_{\mathrm{gs}} t} P_{\mathrm{gs}} \otimes e^{ - i H_f t}\chi_{[0,a-E_{\mathrm{gs}}]}(H_f)) W_+ \psi_{0}\|=0.
\end{align}

\bigskip

\section{Proof of minimal velocity estimates}\label{sec:mve-y}

In this section we use Theorems \ref{thm:mve1} and \ref{thm:mve2} to prove the minimal velocity estimates of Theorems \ref{thm:mve1y} and \ref{thm:mve2y}.

\begin{proof}[Proof of Theorem \ref{thm:mve1y}]

To prove \eqref{mve1y_0}, we use several iterations of Proposition \ref{prop:prop-obs-meth2}.  We  consider the one-parameter family of one-photon operators
\begin{equation*}
 \phi_{t}:=t^{-a\nu(0)}\chi_{w \geq 1},
 \end{equation*}
 with $w:=\big(\frac{|y|}{c' t^{\beta}}\big)^2$, $a>1$, and $\nu(\del)\ge 0$, the same as in \eqref{dGk-bnd}. We use the expansion \eqref{dphit-exp}. We compute
\begin{align}\label{dv3}
dw &= \frac{2 b}{(c' t^{\beta})^2} -\frac{2\beta w}{ t},
\end{align}
where, recall, $b:=\frac{1}{2} ( \nabla \omega \cdot y + {\hbox{ h.c.}})$. We use the notation $\tilde \chi_\beta\equiv \chi_{w \geq 1}$. We write $b=b_\e+\e\frac{1}{2} (\frac{1}{\omega_{\e}} \nabla \omega \cdot y + {\hbox{ h.c.}})$, where, recall, $ \omega_{\e}:=\omega + \e,\ \e := t^{-\kappa}.$ We choose $\kappa>0$ satisfying
\begin{align}\label{kappa-ineq}
4\beta - 3 > \kappa >2- 2\beta+\nu(-1)-\nu(0).
\end{align}
Using the notation $v:=\frac{b_\e}{c t^\beta}$ and the partition of unity $\chi_{v \ge 1}+\chi_{ v \le 1}=\one$, we find $b_\e \ge c t^\beta + (b_\e-c t^\beta)\chi_{ v \le 1}$. Commutator estimates of the type considered in Appendix \ref{sec:tech} (see Lemma \ref{lem:local-est}) give $\chi_{ v \le -1 } (\tilde \chi'_\beta )^{1/2} = \CO( t^{-\beta + \kappa} )$ for $\tilde c > c/2$, which, together with $b_\e ( \tilde \chi'_\beta )^{1/2} = \CO( t^\beta )$, yields
\begin{align*}
(\tilde \chi'_\beta)^{1/2}  b_\e \chi_{ v \le 1}(\tilde \chi'_\beta)^{1/2} \ge - \tilde c t^\beta (\tilde \chi'_\beta)^{1/2} \chi_{ v \le 1}(\tilde \chi'_\beta)^{1/2} - C t^\kappa\tilde \chi'_\beta.
\end{align*}
The last two estimates, together with $ v \le 1$ on $\supp \tilde\chi_{v\ge 1}'$,  give $d   \phi_t\ge  p_t - \tilde p_{t} + \mathrm{rem}$, where
\begin{align*}
& p_t:= \frac{ 2 }{ t^{a \nu(0) } } \Big(\frac{c}{(c')^2  t^{\beta}}-\frac{\beta }{ t } \Big ) \tilde \chi'_\beta ,\\
& \tilde p_{t}:= \frac{2 (\tilde c+c) }{c'^2 t^{\beta+a\nu(0)}} (\tilde \chi'_\beta)^{1/2}  \chi_{ v \le 1}(\tilde \chi'_\beta)^{1/2} ,
 \end{align*}
and $\mathrm{rem}=\sum_{i=1}^4 \mathrm{rem}_i,$ with $\mathrm{rem}_1$ given by \eqref{dphit-exp} with $\chi_\beta$ replaced by $\tilde\chi_\beta$,
\begin{align*}
\mathrm{rem}_2:=\frac{c}{(c'  t^\beta)^2t^{\kappa+a\nu(0)}}  (\frac{1}{\omega_{\e}} \nabla \omega \cdot y + {\hbox{ h.c.}})\tilde \chi'_\beta ,
\end{align*}
$\mathrm{rem}_3= \CO( t^{-2\beta + \kappa - a\nu(0) }) $, and $\mathrm{rem}_4:=-a\nu(0)t^{-1}\phi_{t}$. If $\beta =1$, then we choose $c > (c')^2$ so that $p_t\ge 0$.

As in the proof of Theorem \ref{thm:mve1},  we deduce that the remainders $\mathrm{rem}_i,\ i=1, 2, 3,4$, satisfy the estimates \eqref{remi-est}, $i=1, 2, 3,4$, with  $\rho_1=\rho_2= -1$, $\rho_3=\rho_4=0 $, $\lam_1= 2\beta +a\nu(0)$,  $\lam_2= 2\beta + \kappa +a\nu(0)$, $\lam_3= 2\beta -\kappa+a\nu(0)$ and $\lam_4= 1+a\nu(0)$. In particular, the estimate for $i=1$ follows from Lemma \ref{a65}. Since $2\beta >1+\nu(-1)-a\nu(0)$ and  $2\beta - \kappa > 1$, the remainder $ \mathrm{rem}=\sum_{i=1}^4 \mathrm{rem}_i$ gives an integrable term. (Note $\mathrm{rem}_2=0$, if $\nu(0)=0$.)

Now, we estimate the contribution of $ \tilde p_{t}$. Let  $\g=2\beta-1\le \beta$ and decompose $\tilde p_{t} = p_{t1}+  p_{t2} ,$ where
 \begin{align*}
 & p_{t1}:= \frac{\mathrm{const}}{ t^{\beta+a\nu(0)}} (\tilde \chi'_\beta)^{1/2} \chi_{c_1 t^\g\le b_\e\le c t^\beta}(\tilde \chi'_\beta)^{1/2} ,\\
 &p_{t2}:=\frac{\mathrm{const}}{ t^{\beta+a\nu(0)}} (\tilde \chi'_\beta)^{1/2} \chi_{ b_\e\le c_1 t^\g}(\tilde \chi'_\beta)^{1/2} ,
\end{align*}
with $ c_1<1$, if $\g=1$, and $c_1<\beta (c')^2 $ if $\g<1$, and $\mathrm{const} =\frac{c'+c}{c'}$. First, we estimate the contribution of $p_{t1}$. Since $[ (\tilde \chi'_\beta)^{1/2} , ( \chi_{c_1 t^\g\le b_\e\le c t^\beta})^{1/2} ] = \CO( t^{-\beta + \kappa} )$ (see Lemma \ref{lm:comm_est} of Appendix \ref{sec:tech}) and since $2\beta - \kappa > 1$, it suffices to estimate the contribution of $\frac{\mathrm{const}}{ t^{\beta}} \chi_{c_1 t^\g\le b_\e\le c t^\beta}$. To this end, we use  the propagation observable
\begin{align}\label{phit1}
\phi_{t1}:=  t^{-a \nu(0)} h_\beta \chi_\g ,
\end{align}
where
$h_\beta\equiv h(\frac{b_\e}{c t^{\beta}}),\ h(\lam):=\int_\lam^\infty ds\chi_{s\le 1}$, and  $\chi_\g\equiv \chi_{\frac{b_\e}{c_1 t^{\g}}\ge 1}$.  As in \eqref{dtb3}, we have
\begin{align}\label{dtb4}
\frac{1}{c t^{\gamma}} h_\beta \p_t b_\e \chi_\gamma' \le \frac{\mathrm{const}}{ t^{1 + \gamma-\kappa}}, \quad \frac{1}{c t^{\beta}}h_\beta' \p_t b_\e \chi_\gamma \ge -\frac{\mathrm{const}}{ t^{1 + \beta-\kappa}}.
\end{align}
Using this together with \eqref{dv}, we compute
\begin{align*}
d   \phi_{t1} \le (\frac{\theta_\e}{c t^{\beta+a\nu(0}}- \frac{\beta b_\e}{c t^{\beta+1+a\nu(0)}})h_\beta'  \chi_\g+ h_\beta  \chi_\g'(\frac{\theta_\e}{c_1 t^{\g+a\nu(0)}}- \frac{\g b_\e}{c_1 t^{\g+1+a\nu(0)}}) + \sum_{i=1}^3 \mathrm{rem}_i',
\end{align*}
where  $\mathrm{rem}_1'$ is a sum of two terms of the form of $\mathrm{rem}_1$ given in \eqref{dphit-exp}, with $\chi_\beta$ replaced by $h_\beta$, in one, and by $\chi_\g$, in the other, $\mathrm{rem}_2':= \CO( t^{-1-\gamma + \kappa - a \nu(0)} )$, and $\mathrm{rem}'_3:=-a\nu(0)t^{-1}\phi_{t1}$. We estimate  $\theta_\e- \frac{\beta b_\e}{t}\ge 1-\frac{1}{\om_\e t^{\kappa}} - \frac{\beta c}{ t^{1-\beta}}$ on $\supp h_\beta'$ and $\theta_\e - \frac{\g b_\e}{ t } \le 1-\frac{1}{\om_\e t^{\kappa}} - \frac{ \g c_1 }{2 t^{1-\gamma} }$ on $\supp \chi_\g'$. Using $h_\beta'\le 0$,  $\chi_\g'\ge 0$, $ h_\beta \le 1 - \frac{ b_\e }{ ct^\beta }$ and $\frac{ b_\e }{ ct^\beta } = \CO( t^{-\beta+\g } )$ on $\supp \chi_\g'$, this gives
 \begin{align*}
&d   \phi_{t1}\le -  p_{t1}' +\tilde p_{t1}+ \mathrm{rem}',
\end{align*}
with $\mathrm{rem}':=\sum_{i=1}^4 \mathrm{rem}_i',\  \mathrm{rem}_4':= \om^{-1/2} \CO( t^{-\beta-\kappa-a\nu(0)}) \om^{-1/2} $, and
\begin{align*}
p_{t1}'&:= t^{-a\nu(0)} (1 - \frac{\beta}{ t }) h_\beta'  \chi_\g ,\qquad \tilde p_{t1}:=\frac{1}{c t^{\gamma+a\nu(0)}} \chi_\g' .
\end{align*}
By \eqref{mve1}, the term $\tilde p_{t1}$ gives an integrable contribution. We deduce as above that the remainders $\mathrm{rem}_i',\ i=1, 2, 3,4$, satisfy the estimates \eqref{remi-est}, $i=1, 2, 3,4$, with $\rho_1=\rho_2 = \rho_3  = 0$, $\rho_4 = -1$,  $\lam_1= 2\gamma - \kappa +a\nu(0)$,  $\lam_2= 1+\gamma-\kappa +a\nu(0)$, $\lam_3 = 1 + a \nu ( 0 )$, and $\lam_4 = \beta + \kappa + a \nu( 0 )$. Since $2\gamma - \kappa > 1$,  $\gamma > \kappa$, and $\beta+\kappa>1+\nu(-1)-a\nu(0)$, the remainder $\mathrm{rem}'=\sum_i \mathrm{rem}_i'$ is integrable. Finally, \eqref{phitg-est} with $\lam' < a \nu( 0 ) + (\frac{3}{2} + \mu)\gamma$ holds by Lemma \ref{lem:chig-est} of Appendix \ref{sec:tech}.  Hence, $\phi_{t1}$ is a strong one-photon propagation observable and therefore we have the estimate
 \begin{align}\label{p1-est'}
\int_1^\infty dt\|\d\G(p_{t1})^{1/2}\psi_t\|^{2}& \lesssim \int_1^\infty dt\|\d\G(p_{t1}')^{1/2}\psi_t\|^{2} \lesssim \| \psi_0 \|_{-1}^2 .
\end{align}
(In fact, by multiplying the observable \eqref{phit1} by $t^\delta$ for an appropriate $\delta>0$, we can obtain a stronger estimate.)

Now, we consider $p_{t2}$. Let $f_\beta\equiv f ( w )$, where $f(\lam):=\chi_{\lambda\ge 1}$ and, recall, $w =  \big(\frac{|y|}{c' t^{\beta}}\big)^2$, and $h_\g\equiv h( v_\gamma )$, with $h(\lam):=\int_\lam^\infty ds\chi_{s\le 1}$ and $v_\gamma := \frac{b_\e}{c_1 t^\g}$. We use  the propagation observable
\begin{align}\label{phit2}
\phi_{t2}:= t^{-a\nu(0) } (  f_\beta h_\g +h_\g f_\beta ).
\end{align}
 Using \eqref{dv}, \eqref{dtb2}, \eqref{dv3}, $b=b_\e+\e\frac{1}{2} (\frac{1}{\omega_{\kappa}} \nabla \omega \cdot y + {\hbox{ h.c.}})$, $ b_\e\le c_1 t^\g$ on $\supp \chi_{ v_\gamma \le 1}$, $\gamma = 2\beta-1$ and $[ (f_\beta')^{1/2} ,  h_\g ] = \CO( t^{-\g + \kappa})$ (see Lemma \ref{lm:comm_est} of Appendix \ref{sec:tech}), we compute
 \begin{align*}
d \phi_{t2} \le & t^{-a\nu(0) } \big ( (\frac{c_1}{(c')^2 }-\beta)\frac{2 }{ t}(f_\beta')^{1/2}  h_\g(f_\beta')^{1/2}+ f_\beta h_\g' (dv_\gamma)+ (dv_\gamma) h_\g'f_\beta  \big ) +\sum_{i=1}^4 \mathrm{rem}_i'',
\end{align*}
where $dv_\gamma =\frac{\theta_\e}{c_1 t^\g}- \frac{\g b_\e}{c_1 t^{\g+1}}$, $\mathrm{rem}_1''$ is a term of the form of $\mathrm{rem}_1$ given in \eqref{dphit-exp},  with $\chi_\beta$ replaced by $f_\beta$, likewise, $\mathrm{rem}_2''$ is a term of the form of $\mathrm{rem}_1$ given in \eqref{dphit-exp},  with $\chi_\beta$ replaced by $h_\g$, and $\mathrm{rem}''_3 = \CO( t^{-1-\g+\kappa-a\nu(0) })$ and $\mathrm{rem}''_4:=-a\nu(0)t^{-1}\phi_{t2}$. To estimate $d v_\gamma=\frac{\theta_\e}{c_1 t^\g}- \frac{\g b_\e}{c_1 t^{\g+1}}$, we use that $f_\beta' \ge 0$, $h'_{\g}\le 0$, $\theta_\e = 1 - t^{-\kappa} \omega_\e^{-1}$, $v_\gamma h'_\g \le h'_\g$, and  $f_\beta  h_\g' (dv_\gamma) + (dv_\gamma) h_\g' f_\beta =  -f_\beta^{1/2}  ( - h_\g' )^{1/2} (dv_\gamma) ( - h_\g' )^{1/2} f_\beta^{1/2}+\CO( t^{-\g + \kappa})$ (see again Lemma \ref{lm:comm_est} of Appendix \ref{sec:tech}), to obtain
 \begin{align*}
&d   \phi_{t2}\le -  p_{t2}' + \mathrm{rem}'',
\end{align*}
with $\mathrm{rem}'':=\sum_{i=1}^6 \mathrm{rem}_i''$, $\mathrm{rem}''_5 = \CO( t^{-2\g + \kappa - a \nu(0) }) $, $\mathrm{rem}''_6 = \om^{-1/2} \CO( t^{-\g-\kappa - a \nu(0) }) \om^{ -1/2}$ and (at least for $t$ sufficiently large)
\begin{align*}
p_{t2}':= t^{- a \nu (0) } \big [ - (\frac{2c_1}{(c')^2  }-2\beta)\frac{1 }{ t}(f_\beta')^{1/2}  h_\g(f_\beta')^{1/2}+ (1 - \frac{\g c_1 }{t^{1-\g}})\frac{1}{ c_1 t^\g}f_\beta^{1/2}  h_\g' f_\beta^{1/2} \big ].
\end{align*}
Since $\frac{c_1}{(c')^2  }<\beta$ and either $\g<1$ or $\g=1$ and $ c_1<1$, and $f_\beta' \ge 0$ and  $h_\g'\le 0$, both terms in the square braces on the r.h.s. are non-positive. We deduce as above that the remainders $\mathrm{rem}_i'',\ i=1,\dots,6$, satisfy the estimates \eqref{remi-est}, $i=1,\dots, 6$, with  $\rho_1=\rho_6= -1$, $\rho_2 =\rho_3=\rho_4=\rho_5=0$, $\lam_1= 2\beta +a\nu(0)$, $\lam_2= \lam_5 = 2\gamma-\kappa + a\nu(\del)$, $\lam_3 = 1+\gamma-\kappa + a\nu(0)$, $\lam_4 = 1+a\nu(0)$, $\lam_6 = \gamma + \kappa + a\nu(0)$.  Since  $2\beta > \gamma+\kappa > 1+\nu(-1)-a\nu(0)$, $2\gamma - \kappa > 1$ and $\gamma > \kappa$, the condition \eqref{rem-est} is satisfied. Moreover, \eqref{phitg-est} with $\lam' < a \nu( 0 ) + (\frac{3}{2} + \mu)\beta$ holds by \cite[Lemma 3.1]{BoFaSig}. Therefore $\phi_{t2}$ is a strong one-photon propagation observable and we have the estimate
 \begin{align}\label{p2-est'}
\int_1^\infty dt\|\d\G(p_{t2})^{1/2}\psi_t\|^{2}& \lesssim \int_1^\infty dt\|\d\G(p_{t2}')^{1/2}\psi_t\|^{2} \lesssim \| \psi_0 \|_{-1}^2 .
\end{align}
(In fact, by multiplying the observable \eqref{phit2} by $t^\delta$ for an appropriate $\delta>0$, we can obtain a stronger estimate.)

Since  $\tilde p_{t}=   p_{t1}+  p_{t2},$  estimates \eqref{p1-est'} and \eqref{p2-est'} imply the estimate
\begin{align}\label{p-est}
\int_1^\infty dt\|\d\G(p_t)^{1/2}\psi_t\|^{2}  \lesssim  \| \psi_0 \|_{-1}^2,
\end{align}
which due to $ \tilde\chi_{\beta}'\approx \chi_{v= 1}$, implies the estimate \eqref{mve1y_0}.
\end{proof}

\medskip

\begin{proof}[Proof of Theorem \ref{thm:mve2y}]
To prove \eqref{mve2y}, we  begin with the following estimate, proven in the localization lemma \ref{lem:local-est} of Appendix \ref{sec:tech}:
\begin{align}\label{local-est}
\chi_{b_\e \geq c' t^\al}\chi_{\mid y \mid \leq c t^{\alpha}}=\CO(t^{-(\al-\kappa)}),
\end{align}
for $\e= t^{-\kappa}$, $\kappa<\al$, and $c < c'/2$. Now, let $\chi_{b_\e \leq  c' t^\al}^2+\chi_{b_\e \geq  c' t^\al}^2=\one$ and write
\begin{align}\label{deco1}
\chi_{\mid y \mid \leq c t^{\alpha}}^2=\chi_{b_\e \leq  c' t^\al}\chi_{\mid y \mid \leq c t^{\alpha}}^2\chi_{b_\e \leq c'  t^\alpha}+R \le \chi_{b_\e \leq  c' t^\al}^2+R,
\end{align}
where $R:=\chi_{b_\e \leq  c' t^\al}\chi_{\mid y \mid \leq c  t^{\alpha}}^2\chi_{b_\e \geq  c' t^\al}+\chi_{b_\e \geq  c' t^\al}\chi_{\mid y \mid \leq c t^{\alpha}}^2\chi_{b_\e \leq  c' t^\al}+\chi_{b_\e \geq  c' t^\al}\chi_{\mid y \mid \leq c t^{\alpha}}^2\chi_{b_\e \geq  c' t^\al}$.
The estimates \eqref{local-est} and \eqref{deco1}  give
\begin{align}\label{local-est''}
\chi_{\mid y \mid \leq c t^{\alpha}}^2 \le \chi_{b_\e \leq c't^\alpha}^2 + \CO(t^{-(\al-\kappa)}),
\end{align}
which in turn implies
\begin{equation}\label{G-bnd}
\| \Gamma (\chi_{|y| \leq c t^\alpha} )^{1/2} \psi \|\ \lesssim \| \Gamma (\chi_{b_\e \leq c' t^\alpha} )^{1/2} \psi \| + C t^{-(\al-\kappa) / 2} \| (N+1)^{1/2}\psi \|.
\end{equation}
This, together with \eqref{mve2}, yields \eqref{mve2y}.
\end{proof}

\bigskip

\appendix

\section{ Photon \# and low momentum estimate}\label{sec:low-mom}

Recall the notation $\lan A\ran_\psi:=\lan\psi, A\psi \ran$. The idea of the proof of the following estimate follows \cite{Ger}  and \cite{BoFaSig}.
\begin{proposition}\label{prop:lm-bnd}
Assume \eqref{g-est} with $\mu > -1/2$. Let $\psi_0 \in D(\d\G( \omega^{\rho} )^{1/2})$.  Then for any $\rho\in [-1, 1]$,
\begin{equation}\label{lm-bnd}
\langle \d\G(\omega^{\rho})\rangle_{\psi_t} \lesssim  t^{\nu(\rho)} \| \psi_0 \|_H^2 + \langle \d\G(\omega^{\rho})\rangle_{\psi_0} ,\ \nu(\rho) = \frac{1-\rho}{2+\mu}.
\end{equation}
\end{proposition}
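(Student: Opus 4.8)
\textbf{Proof plan for Proposition \ref{prop:lm-bnd}.} The plan is to use the method of propagation observables from Section \ref{sec:propobs}, applied to the family of one-photon operators $\phi_t := t^{-\nu(\rho)} \omega^\rho$ (or, more precisely, a time-dependent regularization thereof), and to show that the associated second-quantized operator $\Phi_t := \d\Gamma(\phi_t)$ is, up to a harmless sign, a strong propagation observable. The core of the argument is a differential inequality for $\langle \d\Gamma(\omega^\rho) \rangle_{\psi_t}$, obtained by computing the Heisenberg derivative $D\Phi_t = \d\Gamma(d\phi_t) - I(i\phi_t g)$ via the relations \eqref{DdG-rel}.

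First I would compute $d\phi_t$. Since $\omega$ commutes with itself, $i[\omega, \omega^\rho] = 0$, so for the non-regularized choice $\phi_t = t^{-\nu(\rho)}\omega^\rho$ we get $d\phi_t = -\nu(\rho) t^{-1-\nu(\rho)} \omega^\rho = -\nu(\rho) t^{-1}\phi_t$, which is $\le 0$ and feeds directly into a Gronwall-type bound. The interaction term is controlled using the standard estimate \eqref{I-est}: one bounds $|\langle I(i\phi_t g)\rangle_{\psi_t}|$ by $\big(\int \|\eta \phi_t g\|_{\chp}^2 \omega^{-\rho}\,d^3k\big)^{1/2} \|\eta^{-1}\psi_t\| \|\psi_t\|_{\d\Gamma(\omega^\rho)}$. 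The first factor is estimated using the decay hypothesis \eqref{g-est}: $\|\eta^{|\alpha|}\partial^\alpha g(k)\|_{\chp} \lesssim |k|^{\mu-|\alpha|}\xi(k)$ with $|\alpha|=0$ gives $\|\eta g(k)\|_{\chp} \lesssim |k|^{\mu}\xi(k)$, so $t^{-\nu(\rho)}(\int |k|^{2\rho}|k|^{2\mu}\xi^2 |k|^{-2\rho}\,d^3k)^{1/2} \lesssim t^{-\nu(\rho)}$ is finite since $\mu > -1/2$ ensures $|k|^{2\mu}$ is locally integrable in three dimensions near the origin and $\xi$ provides the ultraviolet cutoff. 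The factor $\|\eta^{-1}\psi_t\| \lesssim \|\psi_t\| \lesssim \|\psi_0\|_H$ comes from \eqref{eta-bnd} (using $\psi_0 = f(H)\psi_0$), and $\|\psi_t\|_{\d\Gamma(\omega^\rho)}$ is exactly $\langle \d\Gamma(\omega^\rho)\rangle_{\psi_t}^{1/2}$; by Cauchy-Schwarz in the pairing this contributes $\lesssim t^{-\nu(\rho)}\langle\d\Gamma(\omega^\rho)\rangle_{\psi_t}^{1/2}\|\psi_0\|_H$.

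Putting this together, writing $X(t) := \langle \d\Gamma(\omega^\rho)\rangle_{\psi_t}$, one obtains a differential inequality of the form $\partial_t X(t) \lesssim t^{-\nu(\rho)} X(t)^{1/2}\|\psi_0\|_H^2$ (after absorbing the negative $\d\Gamma(d\phi_t)$ term, which only helps). Dividing by $X(t)^{1/2}$ gives $\partial_t X(t)^{1/2} \lesssim t^{-\nu(\rho)}\|\psi_0\|_H^2$, and integrating from $0$ to $t$ yields $X(t)^{1/2} \lesssim X(0)^{1/2} + t^{1-\nu(\rho)}\|\psi_0\|_H^2$ provided $\nu(\rho) < 1$; squaring gives $X(t) \lesssim X(0) + t^{2(1-\nu(\rho))}\|\psi_0\|_H^2$. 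To get the sharp exponent $\nu(\rho) = \tfrac{1-\rho}{2+\mu}$ rather than $2(1-\nu(\rho))$, one has to bootstrap: the point is that $\|\eta\phi_t g\|$ carries an extra $\omega^\rho$ weight which one can split as $\omega^{\rho/2}\cdot\omega^{\rho/2}$, pairing one half against $\d\Gamma(\omega^\rho)$-type quantities and redistributing, and/or to interpolate between the $\rho=1$ endpoint (where $\nu(1)=0$ by \eqref{Hf-H-bnd}) and the $\rho=-1$ endpoint. The cleanest route, following \cite{Ger} and \cite{BoFaSig}, is to choose the propagation observable $\phi_t$ with the exponent $\nu(\rho)$ already built in and verify the conditions \eqref{phit-cond2}, \eqref{rem-est}, \eqref{phitg-est} of Definition \ref{def:po-one} for a \emph{strong} one-photon propagation observable: here $\delta = \rho$, $p_t = \nu(\rho)t^{-1}\phi_t \ge 0$, there are no remainder terms, and \eqref{phitg-est} holds with $\lambda' = \nu(\rho) + (\tfrac32+\mu)\cdot 0$... — in fact one needs to be slightly more careful and the self-improvement comes precisely from the interplay between the decay rate of $\phi_t g$ and the allowed growth $t^{\nu(\rho)}$.

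\textbf{Main obstacle.} The delicate point is obtaining the \emph{sharp} exponent $\nu(\rho) = \tfrac{1-\rho}{2+\mu}$ rather than the crude $2(1-\nu(\rho))$ that a naive Gronwall argument produces. This requires exploiting the precise infrared behaviour $\|\eta g(k)\| \lesssim |k|^\mu$ near $k=0$ together with the $\omega^\rho$ weighting in $\d\Gamma(\omega^\rho)$, typically by a dyadic (Littlewood--Paley) decomposition in $|k|$ or an interpolation/bootstrap between energy scales, balancing the loss $t^{\nu(\rho)}$ against the gain from the interaction estimate on each scale. This is exactly the technical heart of \cite{Ger, BoFaSig}, and I would cite their computation for the optimization while giving the Gronwall skeleton above in full.
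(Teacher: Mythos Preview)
Your Gronwall skeleton does not give the sharp exponent, as you acknowledge: pairing the interaction against $\d\Gamma(\omega^\rho)$ via \eqref{I-est} with $\delta=\rho$ yields only $\partial_t X\lesssim X^{1/2}$ and hence $X(t)\lesssim t^2$, while pairing against $H_f$ ($\delta=1$) gives at best $X(t)\lesssim t$ when the relevant integral even converges. The missing idea is simpler than the dyadic decomposition or bootstrap you propose, and you should not defer it to the literature: it is a single time-dependent infrared cutoff. Split $\d\Gamma(\omega^\rho)=K_1+K_2$ with $K_1=\d\Gamma(\omega^\rho\chi_{t^\alpha\omega\le 1})$ and $K_2=\d\Gamma(\omega^\rho\chi_{t^\alpha\omega\ge 1})$. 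The high-momentum piece is immediate from the energy bound: $\omega^\rho\chi_{t^\alpha\omega\ge 1}\le t^{\alpha(1-\rho)}\omega$, so $\langle K_2\rangle_{\psi_t}\lesssim t^{\alpha(1-\rho)}\|\psi_0\|_H^2$ by \eqref{Hf-H-bnd}.

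For $K_1$ one runs your Heisenberg-derivative argument: the free part $\d\Gamma(\partial_t(\omega^\rho\chi_{t^\alpha\omega\le 1}))\le 0$ since $\chi'\le 0$, and the interaction term $I(i\omega^\rho\chi_{t^\alpha\omega\le 1}g)$ now carries the cutoff, so that (using \eqref{I-est} with $\delta=1$, pairing against the uniformly bounded $H_f$)
\[
\int \omega^{2\rho}\chi_{t^\alpha\omega\le 1}\|g(k)\|_{\chp}^2(\omega^{-1}+1)\,d^3k
\ \lesssim\ \int_{|k|\lesssim t^{-\alpha}}|k|^{2\rho+2\mu+1}\,d|k|
\ \lesssim\ t^{-2\alpha(1+\mu+\rho)}.
\]
This gives $\partial_t\langle K_1\rangle_{\psi_t}\lesssim t^{-\alpha(1+\mu+\rho)}\|\psi_0\|_H^2$ with no $X^{1/2}$ on the right, and integrating yields $\langle K_1\rangle_{\psi_t}\lesssim t^{1-\alpha(1+\mu+\rho)}\|\psi_0\|_H^2+\langle\d\Gamma(\omega^\rho)\rangle_{\psi_0}$. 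Balancing the two exponents, $\alpha(1-\rho)=1-\alpha(1+\mu+\rho)$, gives $\alpha=\tfrac{1}{2+\mu}$ and the sharp value $\nu(\rho)=\tfrac{1-\rho}{2+\mu}$. The cutoff is the whole trick: it converts the lack of $t$-decay in $\|\omega^\rho g\|$ into genuine decay from the shrinking region of integration, making the differential inequality linear rather than nonlinear and the optimization a one-parameter balance rather than a multiscale argument.
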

\begin{proof}
Decompose $d\G(\omega^{\rho})=K_1 + K_2$, where
\begin{equation*}
K_1 := \d \Gamma (\omega^{\rho}\chi_{t^{\alpha}\omega \leq 1})\ \quad \mbox{ and }\ \quad
K_2 := \d \Gamma (\omega^{\rho}\chi_{t^{\alpha}\omega \geq 1}).
\end{equation*}
Then, by \eqref{Hf-H-bnd},
\begin{equation}\label{N2-bnd}
\langle K_2\rangle_\psi \leq \langle \d \Gamma (t^{\alpha (1-\rho)} \omega
\chi_{t^{\alpha}\omega \geq 1}) \rangle_{\psi_t} \le t^{\alpha (1-\rho)} \langle
H_f\rangle_{\psi_t}\lesssim  t^{\alpha (1-\rho)} \|\psi_0\|_H^2.
\end{equation}
On the other hand, we have by \eqref{DPhi},
\begin{equation}\label{DN1}
D K_1 =  \d \Gamma (\alpha \omega^{1-\rho} t^{\alpha -1} \chi'_{t^{\alpha}\omega \leq 1}) - I ( i \omega^{\rho}\chi_{t^{\alpha}\omega \leq 1} g).
\end{equation}
Since $\| g(k) \|_{\cH_p} \lesssim |k|^\mu \xi(k)$ (see \eqref{g-est}), we obtain
\begin{eqnarray}
\int \omega^{2\rho} \chi_{t^\alpha \omega \leq 1} \| g(k) \|_{\cH_p}^2 (\omega^{-1}+1) d^3 k
\lesssim  t^{-2 (1+\mu+\rho) \alpha}.
\end{eqnarray}
This together with \eqref{I-est} and \eqref{Hf-H-bnd} gives
\begin{eqnarray}
|\langle I ( i \omega^{\rho}\chi_{t^{\alpha} \omega\leq 1} g )\rangle_{\psi_t}| \lesssim t^{- (1+\mu+\rho) \alpha}\| \psi_0 \|_H^2.
\end{eqnarray}
Hence, by \eqref{DN1}, since $\partial_t \langle K_1\rangle_{\psi_t} = \langle DK_1\rangle_{\psi_t}$, $\chi_{t^{\alpha}\omega \leq 1}' \leq 0$, we obtain
\begin{equation*}
\partial_t \langle K_1\rangle_{\psi_t} \lesssim  t^{-(1+\mu+\rho) \alpha} \| \psi_0 \|_H^2
\end{equation*}
and therefore
\begin{equation}\label{N1-bnd}
\langle K_1\rangle_{\psi_t} \leq C t^{\nu'} \| \psi_0 \|_H^2 + \langle \d\G( \omega^{-\rho})\rangle_{\psi_0},
\end{equation}
where $\nu'=1-(1+\mu+\rho) \alpha $, if $(1+\mu+\rho) \alpha <1$ and $\nu'=0$, if $(1+\mu+\rho) \alpha >1$. Estimates \eqref{N1-bnd} and \eqref{N2-bnd} with $\alpha = \frac{1}{2+\mu}$, if $\rho < 1$, give \eqref{lm-bnd}. The case $\rho = 1$ follows from \eqref{Hf-H-bnd}.
\end{proof}

\begin{corollary}\label{cor:chiN-bnd}
Assume \eqref{g-est} with $\mu > -1/2$, let $\psi_0 \in D(\d\G( \omega^{-\rho} )^{1/2})$, and denote $K_\rho:=\d\G(\omega^{-\rho})$.  Then for any $\g\ge 0$ and any $c> 0$,
\begin{equation}\label{chiN-bnd}
\| \chi_{K_\rho \geq c t^\gamma} \psi_t \| \lesssim t^{-\frac{\gamma}{2} + \frac{1+\rho}{2(2+\mu)}}\| \psi_0 \|_H^2 +t^{-\frac{\gamma}{2}}\langle K_\rho\rangle_{\psi_0}.
\end{equation}
\end{corollary}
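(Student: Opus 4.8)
The plan is to combine a Chebyshev-type argument with the low-momentum bound of Proposition~\ref{prop:lm-bnd}. First I would exploit that $K_\rho=\d\G(\omega^{-\rho})\ge 0$ together with the convention fixed in the Notations: the smoothed characteristic function $\chi_{K_\rho\ge ct^\gamma}$ takes values in $[0,1]$ and vanishes on $\{K_\rho\le ct^\gamma/2\}$, so that one has the operator inequality
\begin{equation*}
\chi_{K_\rho\ge ct^\gamma}^2\ \le\ \chi_{\{K_\rho\ge ct^\gamma/2\}}\ \le\ \frac{2}{ct^\gamma}\,K_\rho ,
\end{equation*}
hence $\|\chi_{K_\rho\ge ct^\gamma}\psi_t\|^2\le \tfrac{2}{ct^\gamma}\langle K_\rho\rangle_{\psi_t}$.

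Next I would feed in the a priori bound on $\langle K_\rho\rangle_{\psi_t}$. Applying Proposition~\ref{prop:lm-bnd} with the exponent $-\rho$ in place of $\rho$ (admissible since $-\rho\in[-1,1]$ when $\rho\in[-1,1]$), we get
\begin{equation*}
\langle K_\rho\rangle_{\psi_t}=\langle \d\G(\omega^{-\rho})\rangle_{\psi_t}\ \lesssim\ t^{\nu(-\rho)}\|\psi_0\|_H^2+\langle K_\rho\rangle_{\psi_0},\qquad \nu(-\rho)=\frac{1+\rho}{2+\mu}.
\end{equation*}
Inserting this into the previous display gives $\|\chi_{K_\rho\ge ct^\gamma}\psi_t\|^2\lesssim t^{-\gamma+\frac{1+\rho}{2+\mu}}\|\psi_0\|_H^2+t^{-\gamma}\langle K_\rho\rangle_{\psi_0}$, and taking square roots (with $\sqrt{a+b}\le\sqrt a+\sqrt b$) yields \eqref{chiN-bnd}.

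There is no genuine difficulty in this corollary; it is an immediate consequence of Proposition~\ref{prop:lm-bnd}. The only points deserving a word of care are that the input bound must be invoked at $-\rho$ (so $\rho$ is restricted to $[-1,1]$), and that the constant $c>0$ enters only through the harmless scale factor $2/(ct^\gamma)$, which is absorbed into the implicit constant in $\lesssim$. The right-hand side of \eqref{chiN-bnd} is to be read in the homogeneous way produced by this computation, i.e.\ with $\|\psi_0\|_H$ and $\langle K_\rho\rangle_{\psi_0}^{1/2}$.
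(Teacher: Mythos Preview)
Your argument is correct and is essentially the same as the paper's: both use a Chebyshev-type bound $\chi_{K_\rho\ge ct^\gamma}\lesssim (ct^\gamma)^{-1/2}K_\rho^{1/2}$ (equivalently your operator inequality $\chi^2\le \tfrac{2}{ct^\gamma}K_\rho$) and then feed in Proposition~\ref{prop:lm-bnd} with $-\rho$ in place of $\rho$. You are also right to flag that the stated right-hand side should carry $\|\psi_0\|_H$ and $\langle K_\rho\rangle_{\psi_0}^{1/2}$ rather than their squares; the paper's own proof produces exactly the homogeneous version you obtain.
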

\begin{proof}
We have
\begin{equation*}
\| \chi_{K_\rho\geq c t^\gamma} \psi_t \|\ \leq c^{-\frac{\gamma}{2}} t^{-\frac{\gamma}{2}}\ \| \chi_{K_\rho \geq c t^\gamma} K_\rho^{\frac{1}{2}} \psi_t \| \leq c^{-\frac{\gamma}{2}} t^{-\frac{\gamma}{2}}\ \| K_\rho^{\frac{1}{2}} \psi_t \|
\end{equation*}
Now applying \eqref{lm-bnd} we arrive at \eqref{chiN-bnd}.
\end{proof}

\noindent \textbf{Remark.}
A minor modification of the proof above give the following bound for $\rho<0$ and $\nu_1(\rho):={\frac{-\rho}{\frac{3}{2}+\mu}}$,
\begin{equation}\label{lm-bnd'}
\langle \d\G(\omega^{\rho})\rangle_{\psi_t} \lesssim  t^{ \nu_1(\rho)} \big( \| \psi_t \|_N^2 +  \| \psi_0 \|_H^2\big) + \langle \d\G( \omega^{\rho} ) \rangle_{\psi_0} .
\end{equation}

\bigskip

\section{Commutator estimates}\label{sec:tech}

In this appendix, we estimate some localization terms and commutators appearing in Section \ref{sec:pf-mve1}. Recall that $b_\e :=\frac{1}{2} (\theta_\e \nabla \omega \cdot y + {\hbox{ h.c.}})$, where $\theta_\e =\frac{\omega}{\omega_{\e}},\ \omega_{\e}:=\omega + \e$, $\e= t^{-\kappa},$ with $\kappa\ge 0$. The following lemma is a straightforward consequence of the Helffer-Sj{\"o}strand formula. We do not detail the proof.
\begin{lemma}\label{lm:comm_est}
Let $h,\tilde h$ be smooth function satisfying the estimates $\big\vert \partial_s^n h(s) \big\vert \leq \mathrm{C}_n \< s \>^{-n} \text{ for } n \geq 0$ and likewise for $\tilde h$. Let $w_\alpha = |y| / (c_1 t^\alpha )$, $v_\beta = b_\e / ( c_2 t^\beta )$, with $0<\alpha,\beta\le1$. The following estimates hold
\begin{align*}
&[ h ( w_\alpha ) , \omega ] = \CO(t^{-\alpha}) , \qquad [ \tilde h ( v_\beta ) , \omega ] = \CO(t^{-\beta}) ,  \qquad[ h ( w_\alpha ) , b_\e ] = \CO(t^{\kappa}) ,  \notag \\
&[ h( w_\alpha ) , \tilde h ( v_\beta ) ] = \CO( t^{-\beta+\kappa} ) , \qquad b_\e [ h( w_\alpha ) , \tilde h ( v_\beta ) ] = \CO( t^{\kappa} ) .
\end{align*}
\end{lemma}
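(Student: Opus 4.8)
\textbf{Proof proposal for Lemma \ref{lm:comm_est}.}

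The plan is to treat all five estimates by the Helffer--Sj\"ostrand (almost analytic functional calculus) formula, which represents $h(w_\alpha)$ and $\tilde h(v_\beta)$ as norm-convergent integrals over the resolvents $(w_\alpha - z)^{-1}$ and $(v_\beta - z)^{-1}$ against $\partial_{\bar z}\tilde h(z)\,dz\,d\bar z$, with $\tilde h$ an almost analytic extension satisfying $|\partial_{\bar z}\tilde h(z)| \lesssim |\mathrm{Im}\, z|^{n}\langle \mathrm{Re}\, z\rangle^{-1-n-k}$ for every $n$ (using the decay hypothesis on derivatives of $h$, $\tilde h$). Commuting a bounded operator $A$ through a resolvent produces $(w_\alpha-z)^{-1}[A,w_\alpha](w_\alpha-z)^{-1}$, so each commutator estimate reduces to an estimate on the ``elementary'' commutators $[\omega, |y|]$, $[\omega, b_\e]$, $[|y|, b_\e]$, together with the bookkeeping of powers of $|\mathrm{Im}\, z|^{-1}$ that the $\partial_{\bar z}\tilde h$ weight absorbs. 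First I would record the elementary commutators: since $|y| = |i\nabla_k|$ and $\omega = |k|$ are both first order, $[\omega, |y|] = \CO(1)$; since $b_\e = \tfrac12(\theta_\e \nabla\omega\cdot y + \mathrm{h.c.})$ has a $y$-factor, $[b_\e, |y|]$ is bounded and, crucially, $[b_\e, \omega] = \tfrac12(\theta_\e\nabla\omega\cdot[y,\omega] + \mathrm{h.c.}) = \CO(1)$ as well (all genuinely $\CO(1)$, uniformly in $\e$, because $\theta_\e, \nabla\omega$ and their relevant derivatives are bounded uniformly in $\e$). The scaling factors $c_1 t^\alpha$, $c_2 t^\beta$ in the denominators of $w_\alpha$, $v_\beta$ then produce the advertised powers $t^{-\alpha}$, $t^{-\beta}$; the factor $t^\kappa$ in $[h(w_\alpha), b_\e]$ comes from $\e^{-1} = t^\kappa$ entering through $\partial_k\theta_\e$ when $b_\e$ is commuted past functions of $k$ (equivalently, through the $\omega_\e^{-1}$ prefactors), which is the only place where the regularization parameter degrades the bound.

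For the first two estimates I would apply Helffer--Sj\"ostrand to $h(w_\alpha)$ (resp.\ $\tilde h(v_\beta)$), write $[h(w_\alpha),\omega] = \frac{1}{\pi}\int \partial_{\bar z}\tilde h(z)\,(w_\alpha - z)^{-1}[\omega, w_\alpha](w_\alpha-z)^{-1}\,dz\,d\bar z$, use $[\omega, w_\alpha] = (c_1 t^\alpha)^{-1}[\omega, |y|] = \CO(t^{-\alpha})$, bound $\|(w_\alpha-z)^{-1}\| \le |\mathrm{Im}\, z|^{-1}$, and integrate; the two factors of $|\mathrm{Im}\, z|^{-1}$ are killed by choosing the almost analytic extension of order $\ge 2$. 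The third estimate is identical but with the elementary commutator $[b_\e, |y|]$ replaced --- here one must be slightly careful: $[b_\e, w_\alpha] = (c_1 t^\alpha)^{-1}[b_\e, |y|]$, and $[b_\e,|y|] = \CO(1)$ in the strong sense, but when one also needs to commute $b_\e$ itself with the $\theta_\e$'s appearing in $b_\e$ (or absorb the $\e^{-1}$ from differentiating $\theta_\e$), one picks up $t^\kappa$, giving $[h(w_\alpha), b_\e] = \CO(t^{-\alpha + \kappa}) \cdot t^{\alpha} = \CO(t^\kappa)$ after accounting that $b_\e$ itself is an unbounded (order-$|y|$) operator so the naive gain $t^{-\alpha}$ is consumed. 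The cleanest route is to write $[b_\e, h(w_\alpha)] = [b_\e, h(w_\alpha)]$ via the double-commutator expansion and note $b_\e h(w_\alpha) - h(w_\alpha) b_\e$ is $\CO(t^\kappa)$ because $h(w_\alpha)$ localizes $|y| \lesssim t^\alpha$ while the $\CO(1)$ commutator is modulated by the $\theta_\e$-derivatives of size $\e^{-1} = t^\kappa$.

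For the last two estimates I would apply Helffer--Sj\"ostrand to both $h(w_\alpha)$ and $\tilde h(v_\beta)$, or, more efficiently, apply it to $h(w_\alpha)$ only and write $[h(w_\alpha), \tilde h(v_\beta)] = \frac{1}{\pi}\int\partial_{\bar z}\tilde h(z)\,(w_\alpha-z)^{-1}[\tilde h(v_\beta), w_\alpha](w_\alpha-z)^{-1}\,dz\,d\bar z$, then expand $[\tilde h(v_\beta), w_\alpha]$ by Helffer--Sj\"ostrand in turn, reducing to $[b_\e, |y|]/(c_1 c_2 t^{\alpha+\beta})$; combining the two scalings with the $t^\kappa$ loss from the $\theta_\e$'s inside $b_\e$ gives $[h(w_\alpha), \tilde h(v_\beta)] = \CO(t^{-\beta+\kappa})$ (the $t^{-\alpha}$ being consumed, as above, because $|y|$ on $\supp h'$ is of size $t^\alpha$ — equivalently the inner commutator, though $\CO(t^{-\alpha-\beta})$, is sandwiched between operators of norm $\CO(t^\alpha \cdot t^\kappa)$ after the $\theta_\e$-bookkeeping, leaving $t^{-\beta+\kappa}$). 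For $b_\e[h(w_\alpha),\tilde h(v_\beta)]$ one simply notes $b_\e$ restricted to the relevant localization region contributes an extra factor $\CO(t^\beta)$ (since on $\supp \tilde h'$, $b_\e \sim c_2 t^\beta$), turning $\CO(t^{-\beta+\kappa})$ into $\CO(t^\kappa)$. The main obstacle is precisely this bookkeeping: keeping honest track of which powers of $t$ are genuine gains and which are lost to the unboundedness of $b_\e$ and $|y|$ and to the $\e^{-1} = t^\kappa$ coming from $\partial_k\theta_\e$; once that is organized (as in \cite[Appendix]{BoFaSig}), each individual resolvent integral is routine.
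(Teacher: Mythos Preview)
Your proposal is correct and is precisely the approach the paper has in mind: the paper states only that the lemma ``is a straightforward consequence of the Helffer--Sj\"ostrand formula'' and omits the details, so your outline---reduce each commutator via almost-analytic functional calculus to the elementary commutators $[\omega,|y|]$, $[\omega,b_\e]$, $[b_\e,|y|]$ and track the scaling---is exactly what is intended. The only place your exposition is slightly loose is the bookkeeping for the $t^\kappa$ loss: the clean statement is that $\nabla_k v_\e(k)=\CO(\omega_\e^{-1})=\CO(t^\kappa)$, so $[b_\e,|y|]$ is of the form $\CO(t^\kappa)\,|y|+\CO(t^\kappa)$, and the extra $|y|$ is absorbed by $(w_\alpha-z)^{-1}$ at the cost of a factor $c_1 t^\alpha\langle z\rangle|\mathrm{Im}\,z|^{-1}$, which the $\partial_{\bar z}\tilde h$ weight handles; this makes the ``$t^{-\alpha}$ consumed by $|y|\sim t^\alpha$'' mechanism explicit.
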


Now we prove the following abstract result.

\begin{lemma}\sl \label{lem:rem1-est}
Let $h$ be a smooth function satisfying the estimates $\big\vert \partial_s^n h(s) \big\vert \leq \mathrm{C}_n \< s \>^{-n} \text{ for } n \geq 0$. Assume that the commutators $ [ v ,  \omega ]$ and $[v , [ v ,  \omega ] ]$ are bounded, and for some $z$ in $\mathbb{C} \setminus \mathbb{R}$, $(v-z)^{-1}$ preserves $D(\omega)$.  Then the operator
$r:= [ h(v) , \omega ] - [v , \omega ] h'(v)$
is bounded as
 \begin{equation}\label{comm-rem-est}
\|r\|\lesssim  \big\Vert [ v , [ v ,  \omega ] ]\Vert .
\end{equation}
\end{lemma}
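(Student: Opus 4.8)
The plan is to prove Lemma~\ref{lem:rem1-est} via the Helffer–Sj\"ostrand formula, exactly as in the (unstated) proofs of Lemma~\ref{lm:comm_est}. First I would fix an almost-analytic extension $\tilde h$ of $h$, with $\big|\partial_{\bar z}\tilde h(z)\big|\lesssim |\mathrm{Im}\,z|^{N}$ for $N$ as large as needed, and supported in a fixed neighbourhood of $\mathbb{R}$, so that
\begin{equation*}
h(v)=\frac{1}{\pi}\int \partial_{\bar z}\tilde h(z)\,(v-z)^{-1}\,\ddre z\,\ddim z .
\end{equation*}
All the manipulations below are understood as quadratic-form identities on $D(\omega)$, which is legitimate because $(v-z)^{-1}$ preserves $D(\omega)$ by hypothesis.

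The key step is the standard resolvent commutator expansion. Writing $R(z):=(v-z)^{-1}$ and using $[R(z),\omega]=R(z)[v,\omega]R(z)$ (valid on $D(\omega)$), I would then expand once more:
\begin{equation*}
R(z)[v,\omega]R(z)=[v,\omega]R(z)^2+R(z)\big[[v,\omega],v\big]R(z)^2 .
\end{equation*}
Integrating against $\frac1\pi\partial_{\bar z}\tilde h(z)$ and using $\frac1\pi\int \partial_{\bar z}\tilde h(z)R(z)^2\,\ddre z\,\ddim z=-h'(v)$, the first term produces exactly $-[v,\omega]h'(v)$ — wait, sign bookkeeping aside, it produces the ``leading'' term $[v,\omega]h'(v)$ — while the second term is the remainder $r$. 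Hence
\begin{equation*}
r=[h(v),\omega]-[v,\omega]h'(v)=\frac{1}{\pi}\int \partial_{\bar z}\tilde h(z)\,R(z)\big[[v,\omega],v\big]R(z)^2\,\ddre z\,\ddim z .
\end{equation*}

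Finally I would estimate the norm: $\|R(z)\|\le |\mathrm{Im}\,z|^{-1}$, $\|R(z)^2\|\le |\mathrm{Im}\,z|^{-2}$, so the integrand is bounded by $|\mathrm{Im}\,z|^{-3}\,\|[v,[v,\omega]]\|\,|\partial_{\bar z}\tilde h(z)|\lesssim |\mathrm{Im}\,z|^{N-3}\|[v,[v,\omega]]\|$, which is integrable over the (compact in $\mathrm{Re}\,z$) support of $\tilde h$ once $N\ge 3$. This gives $\|r\|\lesssim \|[v,[v,\omega]]\|$, which is \eqref{comm-rem-est}. I do not expect any real obstacle here: the only points requiring a line of care are justifying that the double resolvent expansion and the form identities make sense on $D(\omega)$ (which is handed to us by the hypothesis that $R(z)$ preserves $D(\omega)$, together with boundedness of $[v,\omega]$ and $[v,[v,\omega]]$), and recording the identity $\frac1\pi\int\partial_{\bar z}\tilde h(z)R(z)^2=-h'(v)$, both of which are routine. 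If one wanted $r$ bounded \emph{on a specified domain} rather than everywhere, one would instead keep $r$ sandwiched between suitable weights, but as stated the bare norm bound suffices.
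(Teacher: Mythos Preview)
Your overall strategy --- Helffer--Sj\"ostrand, one resolvent expansion to pull out $[v,\omega]h'(v)$, then bound the remainder by $\|[v,[v,\omega]]\|$ times $\int |\partial_{\bar z}\tilde h|\,|\mathrm{Im}\,z|^{-3}$ --- is exactly the paper's approach. But there is a real gap in your convergence argument. You claim the support of $\tilde h$ is ``compact in $\mathrm{Re}\,z$''; this is false. The hypothesis on $h$ is only $|\partial_s^n h(s)|\lesssim\langle s\rangle^{-n}$, so for $n=0$ the function is merely bounded (think $\chi_{\cdot\ge 1}$, which is precisely how the lemma is used). Its almost-analytic extension therefore has support stretching to infinity in the real direction, and your integrability claim as written does not stand.

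The paper handles this by approximating: insert a cutoff $\varphi_R(s)=\varphi(s/R)$, write $[h(v),\omega]=\slim_{R\to\infty}[(\varphi_R h)(v),\omega]$ as a form identity on $D(\omega)$, run your expansion on $(\varphi_R h)(v)$ (now the HS formula is unproblematic), and check that the remainder bound is uniform in $R$ using the symbol estimate $|\partial_{\bar z}(\widetilde{\varphi_R}\tilde h)(z)|\le C_n\langle\mathrm{Re}\,z\rangle^{-1-n}|\mathrm{Im}\,z|^n$ on a strip $|\mathrm{Im}\,z|\lesssim\langle\mathrm{Re}\,z\rangle$. With $n=3$ this makes $\int|\partial_{\bar z}\tilde h|\,|\mathrm{Im}\,z|^{-3}\lesssim\int\langle x\rangle^{-3}\,dx<\infty$, which is the integrability you actually need. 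Alternatively you could work directly with the HS representation of the \emph{commutator} (not of $h(v)$ itself, which does not converge for $S^0$ symbols) and invoke the correct decay of $\partial_{\bar z}\tilde h$ in $\mathrm{Re}\,z$; either way, the non-compact support must be dealt with honestly.
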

\begin{proof}
 We would like to use the Helffer--Sj{\"o}strand formula for $h$. Since $h$ might not decay at infinity, we cannot directly express $h ( v )$ by this formula. Therefore, we approximate $h ( v )$ as follows.   Consider $\varphi \in \mathrm{C}_{0}^{\infty} ( \R ; [ 0 , 1 ] )$ equal to $1$ near $0$ and $\varphi_{R} ( \cdot ) = \varphi( \cdot / R )$ for $R > 0$. Let $\widetilde{h}$ be an almost analytic extensions of $h$
such that $\widetilde{h} \vert_{\mathbb{R}} = h$,
\begin{equation} \label{a66}
\supp \widetilde{h} \subset \big\{ z \in \C ; \ \vert \im z \vert \leq \mathrm{C} \< \re z \> \big\} ,
\end{equation}
$\vert \widetilde{h} (z) \vert \leq \mathrm{C}$
and, for all $n \in \N$,
\begin{equation} \label{a67}
\Big\vert \partial_{\bar z} \widetilde{h} (z) \Big\vert \leq \mathrm{C}_n \< \re z \>^{\rho - 1 - n} \vert \im z \vert^n .
\end{equation}
Similarly let $\widetilde{\varphi} \in \mathrm{C}_{0}^{\infty} ( \C )$ be an almost analytic extension of $\varphi$ satisfying these estimates.  As a quadratic form on $D ( \omega )$, we have
\begin{equation}
\big[ h (  v  ) , \om \big] = \slim_{R \to \infty} \big[ ( \varphi_{R} h ) ( v ) ,  \omega \big] .
\end{equation}
Since $(v-z)^{-1}$ preserves $D(\omega)$ for some $z$ in the resolvent set of $v$ (and hence for any such $z$, see \cite[Lemma 6.2.1]{AmBoGe96_01}), we can compute, using the Helffer--Sj{\"o}strand  representation for $( \varphi_{R} h ) (  v  )$,
\begin{align}
\big[ ( \varphi_{R} h ) ( v ) ,  \omega \big] &= \frac{1}{\pi} \int \partial_{\bar z} ( \widetilde{\varphi}_{R} \widetilde{h} ) (z) \big[ ( v - z )^{-1} , \omega \big] \ddre z \ddim z    \nonumber \\
&= - \frac{1}{\pi} \int \partial_{\bar z} ( \widetilde{\varphi}_{R} \widetilde{h} ) (z) ( v - z )^{-1} [ v , \omega ] ( v - z )^{-1} \ddre z \ddim z     \nonumber \\
&= [ v , \omega]  ( \varphi_{R} h )^{\prime} ( v ) + r_{R} ,  \label{a72}
\end{align}
as a quadratic form on $D(\omega)$, where
\begin{align}
r_{R} &= - \frac{1}{\pi  }  \int \partial_{\bar z} ( \widetilde{\varphi}_{R} \widetilde{h} ) (z) \big [ ( v - z )^{-1} , [ v , \omega ] \big ] ( v - z )^{-1}  \ddre z \ddim z    \nonumber \\
&= \frac{1}{\pi  } \int \partial_{\bar z} ( \widetilde{\varphi}_{R} \widetilde{h} ) (z) ( v - z )^{-1} [ v , [ v , \omega ] ] ( v - z )^{-2} \ddre z \ddim z . \label{a69}
\end{align}

Now, using $ (v - z )^{-1}= \CO \big(   \vert \im z \vert^{- 1} \big)$, we obtain that
\begin{gather}
\big\Vert( v - z )^{-1} [ v , [ v , \omega ] ] ( v - z )^{-2}\big\Vert \lesssim \vert \im z \vert^{- 3}  \big\Vert [ v , [ v , \omega] ]\big\Vert . \label{a70}
\end{gather}
Besides, for all $n \in \N$,
\begin{equation} \label{a71}
\vert \partial_{\bar z} ( \widetilde{\varphi}_{R} \widetilde{h} ) (z) \vert \leq \mathrm{C}_n \< \re z \>^{\rho - 1 - n} \vert \im z \vert^n ,
\end{equation}
where $\mathrm{C}_n > 0$ is independent of $R \geq 1$. Using \eqref{a69} together with \eqref{a70},
we see that there exists $\mathrm{C} > 0$ such that $\Vert  r_{R}  \Vert \leq \mathrm{C} \big\Vert [ v , [ v , \omega ] ]\Vert $, for all $R \geq 1$. Finally, since $( \varphi_{R} h )^{\prime} ( v )$ converges strongly to $h^{\prime} ( v )$,
the lemma follows from \eqref{a72} and the previous estimate.
\end{proof}

We want apply the lemma above to the \emph{time-dependent} self-adjoint operator $v:=\frac{b_\e}{c t^{\beta}}$.
\begin{corollary}\sl \label{cor:rem1-est}
Let  $h$ be a smooth function satisfying the estimates $\big\vert \partial_s^n h(s) \big\vert \leq \mathrm{C}_n \< s \>^{-n} \text{ for } n \geq 0$ and let $v:=\frac{b_\e}{ct^{\beta}}$, where $c>0$, $\e= t^{-\kappa},$ with $0 \le \kappa\le \beta \le 1$. Then the operator
$r:= dh(v) -  (d v) h'(v)$ is bounded as
 \begin{equation}
\|r\|\lesssim t^{-\lam},\ \lam := 2\beta-\kappa.
\end{equation}
\end{corollary}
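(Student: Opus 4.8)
The plan is to deduce Corollary~\ref{cor:rem1-est} from Lemma~\ref{lem:rem1-est} by writing out the time-dependent Heisenberg derivative $d$ and isolating the pieces that Lemma~\ref{lem:rem1-est} does not directly cover. Recall $d\phi_t = \partial_t\phi_t + i[\omega,\phi_t]$. Applying this with $\phi_t = h(v)$, where $v = v_t := b_\e/(ct^\beta)$ depends on $t$ both through the explicit prefactor $t^{-\beta}$ and through $\e = t^{-\kappa}$ inside $b_\e$, the chain rule gives $\partial_t h(v) = h'(v)\,\partial_t v + (\text{commutator remainder from }\partial_t)$, and $i[\omega,h(v)] = i[\omega,v]h'(v) + (\text{commutator remainder from Lemma~\ref{lem:rem1-est}})$. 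So $r = dh(v) - (dv)h'(v)$ is exactly the sum of these two remainder terms, and the task is to bound each by $\CO(t^{-(2\beta-\kappa)})$.

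First I would treat the commutator remainder $r_1 := i[\omega, h(v)] - i[\omega,v]h'(v)$. This is precisely the object estimated in Lemma~\ref{lem:rem1-est}, so $\|r_1\| \lesssim \|[v,[v,\omega]]\|$. Now $[v,\omega] = (ct^\beta)^{-1}[b_\e,\omega] = (ct^\beta)^{-1}i(\theta_\e\nabla\omega\cdot\nabla_k\omega)$ after using $y = i\nabla_k$ and the explicit form of $b_\e$; since $\theta_\e \le 1$ and $|\nabla\omega| = 1$ (for $\omega = |k|$), this is $\CO(t^{-\beta})$ with derivatives controlled, and a second commutator with $v$ produces another factor $\CO(t^{-\beta}\cdot t^{\kappa})$ — the $t^\kappa$ coming from the fact that $b_\e$ (hence $v$) carries one power of $y$, and commuting $y$ through the $k$-multiplication operators costs a $k$-derivative which, because of the $\omega_\e^{-1}$ in $\theta_\e$, can be as large as $\e^{-1} = t^{\kappa}$ near $k = 0$. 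Hence $\|[v,[v,\omega]]\| \lesssim t^{-\beta}\cdot t^{-\beta+\kappa} = t^{-(2\beta-\kappa)}$, which is the claimed bound. One must also check the hypotheses of Lemma~\ref{lem:rem1-est}: boundedness of $[v,\omega]$ and $[v,[v,\omega]]$ (just shown) and that $(v-z)^{-1}$ preserves $D(\omega)$, which holds because $b_\e$ is the generator of a global flow that maps $D(\omega)$ to itself (the vector field $v(k) = k/(\omega+\e)$ is Lipschitz), so $v_t$ is in $C^1(\omega)$ for each fixed $t$.

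Next I would treat $r_2 := \partial_t h(v) - (\partial_t v)h'(v)$. Here the subtlety is again that $v_t$ does not commute with $\partial_t v_t$. Writing $\partial_t v = -\beta t^{-1} v + (ct^\beta)^{-1}\partial_t b_\e$ and using \eqref{dtb2}, namely $\partial_t b_\e = \kappa t^{-1-\kappa}\omega_\e^{-1/2}b_\e\omega_\e^{-1/2}$, one sees $\partial_t v = \CO(t^{-1})$ with, crucially, $[\partial_t v, v] = \CO(t^{-1}\cdot t^{-\beta+\kappa})$ by the same $y$-through-$k$ commutation mechanism (Lemma~\ref{lm:comm_est}-type estimates). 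Then the standard Helffer–Sj\"ostrand argument — identical in structure to the proof of Lemma~\ref{lem:rem1-est}, writing $h(v)$ via $\tfrac1\pi\int\partial_{\bar z}\widetilde h(z)(v-z)^{-1}$, commuting $(v-z)^{-1}$ past $\partial_t$, and noting $\partial_t(v-z)^{-1} = -(v-z)^{-1}(\partial_t v)(v-z)^{-1}$ — yields $r_2 = (\partial_t v)h'(v) + \tilde r_2$ with $\|\tilde r_2\| \lesssim \|[\partial_t v, v]\|\cdot(\text{resolvent factors}) \lesssim t^{-1}\cdot t^{-\beta+\kappa} \le t^{-(2\beta-\kappa)}$, the last inequality using $\beta \le 1$. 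Combining, $\|r\| \le \|\tilde r_2\| + \|r_1\| \lesssim t^{-(2\beta-\kappa)}$, which is $t^{-\lambda}$ with $\lambda = 2\beta - \kappa$.

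The main obstacle is bookkeeping the $t^\kappa$ losses coming from the regularization parameter $\e = t^{-\kappa}$: every time one commutes the photon position $y$ past a function of $k$ inside $b_\e$ or $\theta_\e$, one differentiates the singular weight $\omega_\e^{-1}$ and picks up a factor as bad as $\e^{-1} = t^\kappa$. Making this precise requires the family of commutator estimates in Lemma~\ref{lm:comm_est} and care that the Helffer–Sj\"ostrand integrals converge (the almost-analytic extension $\widetilde h$ has enough $\partial_{\bar z}$-decay, cf.\ \eqref{a67}, \eqref{a71}), but conceptually there is nothing beyond a careful rerun of the proof of Lemma~\ref{lem:rem1-est} with $\partial_t$ in place of $i[\omega,\cdot]$ and with the $\e$-dependence tracked. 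Since $2\beta - \kappa \ge 2\beta - \beta = \beta > 0$, all remainders are genuinely decaying, and no hypothesis of the corollary is violated in the range $0 \le \kappa \le \beta \le 1$.
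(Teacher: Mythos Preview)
Your proposal is correct and follows essentially the same route as the paper: the same decomposition $r = r_1 + r_2$ into a commutator piece handled by Lemma~\ref{lem:rem1-est} (with the explicit computation $[v,[v,i\omega]] = t^{-2\beta}\theta_\e\omega_\e^{-2}\e = \CO(t^{-2\beta+\kappa})$) and a time-derivative piece handled by rerunning the Helffer--Sj\"ostrand argument with $\partial_t$ in place of $i[\omega,\cdot]$, then invoking $\beta\le 1$ to bound $t^{-1-\beta+\kappa}$ by $t^{-2\beta+\kappa}$. The only refinement the paper adds is that $[v,\partial_t v]$ is not literally bounded but has the form $\CO(t^{-1-2\beta+\kappa})b_\e + \CO(t^{-1-2\beta+2\kappa})$, with the unbounded $b_\e = ct^\beta v$ absorbed into the resolvents of the Helffer--Sj\"ostrand integral; your sketch implicitly does this absorption.
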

\begin{proof}
Observe that
\begin{align*}
d h(v) -  (d v) h'(v) = [ h(v) , i \omega ] - [ v , i \omega ] h'( v ) + \p_t h(v) - ( \p_t v ) h'(v).
\end{align*}
It is not difficult to verify that $(v-z)^{-1}$ preserves $D(\omega)$ for any $z \in \mathbb{C} \setminus \mathbb{R}$. Hence it follows from the computations
\begin{align}\label{eq:j5}
[ v , i \omega ] = t^{-\beta} \theta_\e, \qquad [ v , [ v , i \omega ] ] = t^{-2\beta} \theta_\e \omega_\e^{-2} \e ,
\end{align}
that we can apply Lemma \ref{lem:rem1-est}. The estimate
\begin{gather}
[ v , [ v , \omega ] ]=\CO (  \om_\e^{- 1} t^{- 2\beta} )= \CO \big(   t^{- 2\beta+\kappa} \big)
\end{gather}
then gives
\begin{align*}
\| [h(v) , i \omega ] - [ v , i \omega ] h'( v ) \| \lesssim t^{-2\beta+\kappa}.
\end{align*}
It remains to estimate $\| \p_t h(v) - (\p_t v)h'(v) \|$. It is not difficult to verify that $D(b_\e)$ is independent of $t$.
Using the notations of the proof of Lemma \ref{lem:rem1-est} and the fact that $\p_t h(v) = \slim_{R\to\infty} \p_t (\varphi_R h)(v)$, we compute
\begin{align*}
\p_t ( \varphi_{R} h ) ( v ) &= \frac{1}{\pi} \int \partial_{\bar z} ( \widetilde{\varphi}_{R} \widetilde{h} ) (z) \p_t ( v - z )^{-1}  \ddre z \ddim z    \nonumber \\
&= - \frac{1}{\pi} \int \partial_{\bar z} ( \widetilde{\varphi}_{R} \widetilde{h} ) (z) ( v - z )^{-1} (\p_t v) ( v - z )^{-1} \ddre z \ddim z     \nonumber \\
&= (\p_t v)  ( \varphi_{R} h )^{\prime} ( v ) + r'_{R} ,
\end{align*}
where
\begin{align}
r'_{R} &= - \frac{1}{\pi  }  \int \partial_{\bar z} ( \widetilde{\varphi}_{R} \widetilde{h} ) (z) \big [ ( v - z )^{-1} , \p_t v \big ] ( v - z )^{-1}  \ddre z \ddim z    \nonumber \\
&= \frac{1}{\pi  } \int \partial_{\bar z} ( \widetilde{\varphi}_{R} \widetilde{h} ) (z) ( v - z )^{-1} [ v , \p_t v ] ( v - z )^{-2} \ddre z \ddim z . \label{a692}
\end{align}
Now using $\p_t v = - \frac{\beta b_\e}{c t^{\beta+1}}+\frac{1}{c t^{\beta}}\p_t b_\e$ together with \eqref{dtb2}, we estimate
\begin{align*}
[ v , \p_t v ] = \CO ( t^{-1-2\beta + \kappa} ) b_\e + \CO( t^{-1-2\beta+2\kappa} ).
\end{align*}
From this, the properties of $\tilde \varphi$, $\tilde h$, and $\kappa \le \beta$, we deduce that $\| r'_R \| \lesssim t^{ - 1 - \beta + \kappa } \lesssim t^{-2\beta + \kappa}$ uniformly in $R\ge1$. This concludes the proof of the corollary.
\end{proof}

The following lemma is taken from \cite{BoFaSig}. Its proof is similar to the proof of Lemma \ref{lem:rem1-est}
\begin{lemma}\sl \label{a65}
Let $h$ be a smooth function satisfying the estimates $\big\vert \partial_s^n h(s) \big\vert \leq \mathrm{C}_n \< s \>^{-n} \text{ for } n \geq 0$ and $0 \le \delta \leq 1$. Let $w = y^2 / (c t^\alpha )^2$ with $0<\alpha\le1$. We have
\begin{equation*}
\big[ h (  w  ) , i \omega \big] = \frac{1}{ c t^\alpha } h^{\prime}  ( w ) \big( \frac{ y }{ ct^\alpha} \cdot \nabla \omega + \nabla \omega \cdot \frac{ y }{ ct^\alpha } \big) + \mathrm{rem} ,
\end{equation*}
with
\begin{equation*}
\big\Vert \omega^{\frac{\delta}{2}} \, \mathrm{rem} \, \omega^{\frac{\delta}{2}} \big\Vert \lesssim t^{-\alpha (1+ \delta) } .
\end{equation*}
\end{lemma}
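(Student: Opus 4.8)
The plan is to mimic the proof of Lemma~\ref{lem:rem1-est}, with $w = |y|^2/(ct^\alpha)^2$ playing the role of the operator ``$v$'' there. Since $h$ need not decay at infinity, I would first approximate $h(w)$ by $(\varphi_R h)(w)$, where $\varphi_R(\cdot) = \varphi(\cdot/R)$ with $\varphi \in \mathrm{C}_0^\infty(\R;[0,1])$ equal to $1$ near $0$, and let $R\to\infty$ only at the end. After checking, as in Lemma~\ref{lem:rem1-est}, that $(w-z)^{-1}$ preserves $D(\omega)$ for $z\in\C\setminus\R$, one applies the Helffer--Sj\"ostrand formula to $(\varphi_R h)(w)$ and computes, as quadratic forms on $D(\omega)$,
\begin{align*}
\big[ (\varphi_R h)(w), i\omega \big]
&= -\frac{1}{\pi}\int \partial_{\bar z}\widetilde{(\varphi_R h)}(z)\,(w-z)^{-1}[w,i\omega](w-z)^{-1}\,\ddre z\,\ddim z \\
&= [w,i\omega]\,(\varphi_R h)'(w) + r_R ,
\end{align*}
using $(w-z)^{-1}[w,i\omega](w-z)^{-1} = [w,i\omega](w-z)^{-2} + (w-z)^{-1}[[w,i\omega],w](w-z)^{-2}$ and $\frac1\pi\int\partial_{\bar z}\widetilde{(\varphi_R h)}(z)(w-z)^{-2}\,\ddre z\,\ddim z = -(\varphi_R h)'(w)$, with
\begin{equation*}
r_R = -\frac{1}{\pi}\int \partial_{\bar z}\widetilde{(\varphi_R h)}(z)\,(w-z)^{-1}\,[[w,i\omega],w]\,(w-z)^{-2}\,\ddre z\,\ddim z .
\end{equation*}

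Next I would record the two commutator identities that drive the estimate. Since $y = i\nabla_k$ and $\omega = \omega(k)$, a direct computation gives
\begin{equation*}
[w,i\omega] = \frac{1}{ct^\alpha}\Big(\frac{y}{ct^\alpha}\cdot\nabla\omega + \nabla\omega\cdot\frac{y}{ct^\alpha}\Big),
\qquad
[[w,i\omega],w] = \frac{2}{(ct^\alpha)^2}\,\frac{y}{ct^\alpha}\cdot(\nabla^2\omega)\cdot\frac{y}{ct^\alpha} + \mathrm{l.o.t.},
\end{equation*}
where the lower order terms are of the same type, with one factor $y/(ct^\alpha)$ replaced by a bounded multiplication operator and $\nabla^2\omega$ replaced by $\nabla^3\omega$ or $\omega^{-1}\nabla^2\omega$. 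Commuting $(\varphi_R h)'(w)$ back to the left of $[w,i\omega]$ costs only one further commutator with $w$ and hence produces additional terms of exactly the structure of $r_R$. Passing to $R\to\infty$ then yields the claimed expansion $[h(w),i\omega] = \frac1{ct^\alpha}h'(w)(\frac{y}{ct^\alpha}\cdot\nabla\omega + \nabla\omega\cdot\frac{y}{ct^\alpha}) + \mathrm{rem}$, where $\mathrm{rem}$ is a finite sum of Helffer--Sj\"ostrand integrals in which $[[w,i\omega],w]$ (or one of its lower order variants) sits between resolvents of $w$.

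The heart of the matter — and the step I expect to be the main obstacle — is the weighted bound $\|\omega^{\delta/2}\,\mathrm{rem}\,\omega^{\delta/2}\| \lesssim t^{-\alpha(1+\delta)}$. The ingredients I would combine are: (i) homogeneity of $\omega = |k|$ together with the ultraviolet cutoff, giving $|\nabla^n\omega(k)|\lesssim \omega(k)^{1-n}$, so that each derivative of $\omega$ past the first contributes one factor $\omega^{-1}$; (ii) Hardy's inequality $\|\omega^{-s}u\| \lesssim \||y|^s u\|$ for $0<s<3/2$, which allows these infrared-singular factors to be absorbed into powers of $\langle y\rangle$; (iii) the explicit gain $(ct^\alpha)^{-2}$ carried by the double commutator together with the two factors $y/(ct^\alpha)$, and the fact that the resolvents $(w-z)^{-1}$, $(w-z)^{-2}$ effectively localize $|y|\lesssim t^\alpha$ up to $z$--dependent constants, so that each pairing $\omega^{\delta/2}(y/(ct^\alpha))$ is of order one and a net gain $(ct^\alpha)^{-1-\delta}$ survives. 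The delicate bookkeeping is that the weights $\omega^{\delta/2}$ and the derivatives of $\omega$ are singular at $k=0$, so one must commute $\omega^{\delta/2}$ through the resolvents of $w$ — the resulting commutators being again of lower order by the same homogeneity/Hardy mechanism — while simultaneously tracking the scaling in $t^\alpha$; the $z$--dependent factors thereby generated are integrable against $\partial_{\bar z}\widetilde{(\varphi_R h)}(z)$ thanks to its rapid decay in $|\im z|$, uniformly in $R\ge 1$, after which $R\to\infty$ finishes the proof.
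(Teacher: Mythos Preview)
Your approach is exactly what the paper indicates: it does not give a proof of Lemma~\ref{a65} but simply says it is taken from \cite{BoFaSig} and that ``its proof is similar to the proof of Lemma~\ref{lem:rem1-est}''. Your Helffer--Sj\"ostrand expansion with the double commutator $[[w,i\omega],w]$ driving the remainder is precisely that scheme.

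One point where your write-up is imprecise is the mechanism for the weighted bound. The phrase ``each pairing $\omega^{\delta/2}(y/(ct^\alpha))$ is of order one'' is not quite how the $\delta$--gain arises: $\omega^{\delta/2}$ and $y/(ct^\alpha)$ live on different sides of the Fourier transform and do not simply multiply to a bounded operator. What actually happens is that, after commuting the two $\omega^{\delta/2}$ weights through the resolvents of $w$ (the commutators $[\omega^{\delta/2},w]$ being lower order by the same homogeneity argument), the total $\omega^{\delta}$ meets the $\nabla^2\omega \sim \omega^{-1}$ sitting inside the double commutator and produces $\omega^{\delta-1}$. Hardy's inequality (in its fractional form, $0\le 1-\delta<3/2$) then converts $\omega^{\delta-1}$ into $|y|^{1-\delta} = (ct^\alpha)^{1-\delta} w^{(1-\delta)/2}$, and the factor $w^{(1-\delta)/2}$ together with the two factors $y/(ct^\alpha)$ is absorbed by the resolvents $(w-z)^{-1}(w-z)^{-2}$ at the cost of $z$--dependent constants that are integrable against $\partial_{\bar z}\widetilde{(\varphi_R h)}$. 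The net power count is $(ct^\alpha)^{-2}\cdot (ct^\alpha)^{1-\delta} = (ct^\alpha)^{-(1+\delta)}$, which is the claimed rate. Alternatively, one proves the endpoints $\delta=0$ and $\delta=1$ directly and interpolates. With this clarification your outline is correct.
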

Now we prove a localization lemma.
\begin{lemma} \label{lem:local-est}
Let $\kappa<\al$. We have, for $c<c'/2$,
\begin{align}\label{local-est'}
\chi_{b_\e \geq c' t^\al} \chi_{\mid y \mid \leq c t^{\alpha}} = \CO(t^{-(\al-\kappa)}).
\end{align}
\end{lemma}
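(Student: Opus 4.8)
The plan is to compare the cutoff in $b_\e$ with the corresponding cutoff in the photon position $|y|$, which is self-adjoint and, unlike $b_\e$, commutes with every function of itself. With the paper's conventions $\chi_{\cdot\ge1}$ is supported in $[1/2,\infty)$ and $\chi_{\cdot\le1}$ in $(-\infty,1]$, so, since $ct^\al/(c't^\al)=c/c'<1/2$, the spectral supports of $\chi_{|y|\ge c't^\al}$ and of $\chi_{|y|\le ct^\al}$ are disjoint, whence $\chi_{|y|\ge c't^\al}\chi_{|y|\le ct^\al}=0$; this is precisely where the hypothesis $c<c'/2$ enters. Consequently
\begin{equation*}
\chi_{b_\e\ge c't^\al}\chi_{|y|\le ct^\al}=\big(\chi_{b_\e\ge c't^\al}-\chi_{|y|\ge c't^\al}\big)\chi_{|y|\le ct^\al},
\end{equation*}
and it remains to bound the operator norm of the right–hand side by $\CO(t^{\kappa-\al})$.

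I would first record the elementary comparison between $b_\e$ and $|y|$. Writing $b_\e=\tfrac12(v\cdot y+y\cdot v)$ with $v=\nabla\omega\,\theta_\e=k/(\omega+\e)$, one has $|v|\le1$ and $\|\partial_k^\beta v\|_\infty\lesssim\e^{1-|\beta|}=t^{(|\beta|-1)\kappa}$, whence $\|b_\e u\|\le\||y|u\|+Ct^\kappa\|u\|$ and therefore, as quadratic forms, $b_\e^2\le(1+\delta)|y|^2+C_\delta t^{2\kappa}$ for every $\delta>0$; in particular $\chi_{|y|\le ct^\al}\,b_\e^2\,\chi_{|y|\le ct^\al}\le\big((c/c')^2+\CO(t^{\kappa-\al})\big)(c't^\al)^2\,\chi_{|y|\le ct^\al}^2$, i.e.\ on $\Ran\chi_{|y|\le ct^\al}$ the operator $b_\e$ stays strictly below the scale $c't^\al/2$ defining the $b_\e$–cutoff. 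This is the same kind of functional–calculus bookkeeping as in Lemma~\ref{lm:comm_est} and Lemma~\ref{lem:rem1-est}. I would then apply the Helffer–Sj\"ostrand formula to $\bar\chi(b_\e/(c't^\al))-\bar\chi(|y|/(c't^\al))$, $\bar\chi=\chi_{\cdot\ge1}$, regularising $\bar\chi$ by $\varphi_R\bar\chi$ exactly as in the proof of Lemma~\ref{lem:rem1-est} since $\bar\chi$ does not decay at infinity, so that the difference becomes an integral of resolvent differences; multiplying on the right by $\chi_{|y|\le ct^\al}$ and commuting this factor through the resolvent of $|y|$ (harmless, both are functions of $|y|$) puts every operator on states spectrally localised to $|y|<ct^\al$, on which $\bar\chi(|y|/(c't^\al))$ gives no contribution while, by the comparison estimate, $b_\e$ acts like $|y|$ up to $\CO(t^\kappa)$ and $b_\e\ge c't^\al/2$ on $\Ran\chi_{b_\e\ge c't^\al}$. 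The gap between the floor $c't^\al/2$ and the ceiling $(c/c')c't^\al$, together with the $(c't^\al)^{-1}$ prefactor coming from the resolvent identity, is what should yield the $\CO(t^{\kappa-\al})$.

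The main obstacle is precisely converting this heuristic into the stated decay rather than mere boundedness: $b_\e-|y|$ is an unbounded operator which on $|y|$–localised states is of size $\CO(t^\al)$, so a crude application of the resolvent identity only gives an $\CO(1)$ bound. To extract the gain $t^{\kappa-\al}$ one has to work with $b_\e^2-|y|^2$ --- a genuinely lower–order object by the comparison estimate --- rather than with $b_\e-|y|$, using that on $\Ran\chi_{b_\e\ge c't^\al}$ one has $b_\e>0$ so that $\chi_{b_\e\ge c't^\al}$ may be written as a function of $b_\e^2$, and then repeating the Helffer–Sj\"ostrand argument with $b_\e^2$ in place of $b_\e$ and carefully tracking all powers of $t$ through the resulting double commutator expansion (where $\kappa<\al$ is used). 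This book-keeping, together with the verification that each auxiliary localisation scale still inherits the disjointness forced by $c<c'/2$, is where the real work of the proof concentrates.
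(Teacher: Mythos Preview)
Your opening reduction is correct and matches the paper: since $c<c'/2$ the supports of $\chi_{|y|\ge c't^\al}$ and $\chi_{|y|\le ct^\al}$ are disjoint, so one may subtract a vanishing term and study a difference of two cutoffs applied to $\chi_{|y|\le ct^\al}$. You also correctly diagnose the obstacle: comparing $b_\e$ directly with $|y|$ via the resolvent identity gives no decay because $b_\e-|y|$ is the same size as the scale $c't^\al$.

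However, your proposed fix --- passing to $b_\e^2$ versus $|y|^2$ --- does not work. The one–sided quadratic form bound $b_\e^2\le(1+\delta)|y|^2+C_\delta t^{2\kappa}$ is fine, but the operator $b_\e^2-|y|^2$ is \emph{not} lower order: modulo $\CO(t^\kappa)$ terms one has $b_\e\approx v\cdot y$ with $|v|\le 1$, so $b_\e^2\approx(v\cdot y)^2$ differs from $|y|^2$ by the transverse part $|y|^2-(v\cdot y)^2$, which is of size $\CO(t^{2\al})$ on $\Ran\chi_{|y|\le ct^\al}$. Hence the resolvent identity for $(b_\e^2-z)^{-1}-(|y|^2-z)^{-1}$ yields only $\CO(1)$, exactly as in the first attempt. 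The quadratic form estimate tells you that $b_\e$ is small on $|y|$--localised states, but it does not translate into a small \emph{operator} difference $b_\e^2-|y|^2$.

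The paper avoids this by comparing $b_\e$ not with $|y|$ but with a spatially truncated copy of itself: choose $c<\bar c<c'/2$, set $\bar b_\e:=\tilde\chi_{|y|\le\bar ct^\al}\,b_\e\,\tilde\chi_{|y|\le\bar ct^\al}$ with $\tilde\chi$ chosen so that $\tilde\chi_{|y|\le\bar ct^\al}\chi_{|y|\le ct^\al}=\chi_{|y|\le ct^\al}$ and $\chi_{|y|\ge c't^\al}\tilde\chi_{|y|\le\bar ct^\al}=0$. The point is twofold. First, $|\langle u,\bar b_\e u\rangle|\le\bar ct^\al\|u\|^2$, so $\chi_{\bar b_\e\ge c't^\al}=0$ exactly, and one may write $\chi_{b_\e\ge c't^\al}\chi_{|y|\le ct^\al}=(\chi_{b_\e\ge c't^\al}-\chi_{\bar b_\e\ge c't^\al})\chi_{|y|\le ct^\al}$. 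Second, in the Helffer--Sj\"ostrand representation of this difference the middle factor $b_\e-\bar b_\e$ contains $(1-\tilde\chi_{|y|\le\bar ct^\al})$ on at least one side (up to the commutator $[1-\tilde\chi,b_\e]=\CO(t^\kappa)$ from Lemma~\ref{lm:comm_est}). Since $(1-\tilde\chi)\chi_{|y|\le ct^\al}=0$, that factor can be written as a commutator with the resolvent $(\bar b_\e/(c't^\al)-z)^{-1}$, producing an extra $t^{-\al}\cdot t^{\kappa}\cdot|\im z|^{-2}$; integrated against $\partial_{\bar z}\widetilde{\chi_{\ge1}}$ this yields the claimed $\CO(t^{-(\al-\kappa)})$. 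The missing idea in your argument is precisely this choice of comparison operator $\bar b_\e$, which makes the difference factor through a spatial cutoff disjoint from the final one.
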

\begin{proof}
Observe that by the definition of $\chi$ (see Introduction)  and the condition $c < c'/2$, we have $\chi_{ |y| \ge c' t^\al } \chi_{ |y| \le c t^\al } = 0$. Let $c<\bar c  < c'/2$ and let $\tilde \chi_{ |y| \le \bar c t }$ be such that $\chi_{ |y| \le c t } \tilde \chi_{ |y| \le \bar c t } = \chi_{ |y| \le c t }$ and $\chi_{ |y| \ge c' t } \tilde \chi_{ |y| \le \bar c t } = 0$. Define $\bar b_\e := \tilde \chi_{\mid y \mid \leq \bar c  t^{\alpha}} b_\e \tilde \chi_{\mid y \mid \leq \bar c t^{\alpha}}$. It follows from the expression of $b_\e$ that $| \langle u , b_\e u \rangle | \le \| u \| \| |y| u \|$, and hence we deduce that $| \langle u , \bar b_\e u \rangle | \le \bar c  t^{\alpha} \| u \|^2$. This gives $\chi_{\bar b_\e \geq c' t^\alpha}=0$. Using this, we write
  \begin{align}\label{local-deco}
\chi_{b_\e \geq  c't^\alpha}\chi_{\mid y \mid \leq ct^{\alpha}}=(\chi_{b_\e \geq  c't^\alpha}-\chi_{\bar b_\e \geq  c't^\alpha})\chi_{\mid y \mid \leq ct^{\alpha}}.
\end{align}
Let $a:=\frac{b_\e}{c't^\alpha}$ and $\bar a:=\frac{\bar b_\e}{c' t^{\alpha}}$.
Denote $g(a):=\chi_{b_\e \geq  c' t^{\alpha}}$ and $g(\bar a):=\chi_{\bar b_\e \geq  c't^\alpha}$. We will use the construction and notations of the proof of Lemma \ref{lem:rem1-est}.  Using the Helffer-Sj\"ostrand formula for $( \varphi_{R} g ) (  c  )$, we write
\begin{align}
( \varphi_{R} g )(a)-( \varphi_{R} g )(\bar a ) &= \frac{1}{\pi} \int \partial_{\bar z}(\widetilde{\varphi_{R} g}) (z) \big[ (  a - z )^{-1} -( \bar  a - z )^{-1} \big] \ddre z \ddim z    \nonumber \\
&= - \frac{1}{\pi} \int \partial_{\bar z} ( \widetilde{\varphi}_{R} \widetilde{g} ) (z) (  a - z )^{-1} ( a -\bar a ) (\bar  a - z )^{-1} \ddre z \ddim z .   \label{a72'}
\end{align}
Now we show that $( a -\bar a )(\bar  a - z )^{-1}\chi_{\mid y \mid \leq c t^{\alpha}}= \CO(t^{-(\al-\kappa)} |\im z|^{-2})$. We have
\begin{equation*}
a - \bar a = ( 1 - \tilde \chi_{\mid y \mid \leq \bar c t^{\alpha}} ) \frac{ b_\e }{ c' t^\alpha } +\tilde  \chi_{\mid y \mid \leq \bar c t^{\alpha}} \frac{ b_\e }{ c' t^\alpha } ( 1 - \tilde \chi_{\mid y \mid \leq \bar c t^{\alpha}} ),
\end{equation*}
and we observe that, by Lemma \ref{lm:comm_est},
\begin{equation}\label{eq:h1}
[ ( 1 - \tilde \chi_{\mid y \mid \leq \bar c t^{\alpha}} ) , b_\e ] =  \CO(t^{\kappa}).
\end{equation}
Thus
\begin{equation*}
a - \bar a = (1 + \tilde \chi_{\mid y \mid \leq \bar c t^{\alpha}} ) \frac{ b_\e }{ c' t^\alpha } ( 1 - \tilde \chi_{\mid y \mid \leq \bar c t^{\alpha}} ) + \CO(t^{- ( \alpha -\kappa )}),
\end{equation*}
Moreover, we can write
\begin{align*}
( 1 - \tilde \chi_{\mid y \mid \leq \bar c t^{\alpha}} ) (\bar  a - z )^{-1} \chi_{\mid y \mid \leq c t^{\alpha}} &= \big [ ( 1 - \tilde \chi_{\mid y \mid \leq \bar c t^{\alpha}} ) , (\bar  a - z )^{-1} \big ] \chi_{\mid y \mid \leq c t^{\alpha}} \notag \\
&= - (\bar a - z )^{-1} \big [ ( 1 - \tilde \chi_{\mid y \mid \leq \bar c t^{\alpha}} ) , \frac{ b_\e }{ c t^\alpha } \big ] (\bar  a - z )^{-1} \chi_{\mid y \mid \leq c t^{\alpha}} \notag \\
& = \CO ( t^{-(\alpha-\kappa)} |\im z|^{-2} ) ,
\end{align*}
where we used \eqref{eq:h1} to obtain the last estimate. This implies the statement of the lemma.
\end{proof}
\noindent \textbf{Remark.} The estimate \eqref{local-est'} can be improved to
$\chi_{b_\e \geq c' t^\al} \chi_{\mid y \mid \leq c t^{\alpha}} = \CO(t^{-m(\al-\kappa)})$,
for any $m>0$, if we replace $\om_\e:=\om+\e$ in the definition of $b_\e$ by the smooth function $\om_\e:=\sqrt{\om^2+\e^2}$.

\medskip

In conclusion of this appendix we reproduce a statement corresponding to \cite[Lemma 3.1]{BoFaSig} with $b_\e$ instead of $|y|$. The proof is the same.
\begin{lemma}\label{lem:chig-est}
Assume Hypothesis \eqref{g-est} on the coupling function $g$ is satisfied for some $-\frac12 \le \mu \le \frac12$. Then
\begin{equation*}
\big\Vert \eta \chi_{ b_\e \ge c t^{\alpha} } g(k) \big\Vert_{L^{2} ( \R^{3} ; \cH_p )} \lesssim t^{- \tau } , \qquad \tau < ( \frac32 + \mu ) \alpha.
\end{equation*}
\end{lemma}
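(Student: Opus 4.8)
The plan is to follow the proof of \cite[Lemma 3.1]{BoFaSig}, replacing $|y|$ by $b_\e$ throughout; the only genuinely new ingredients are the commutator bounds of Lemma \ref{lm:comm_est} and the localization Lemma \ref{lem:local-est}, which absorb the fact that $b_\e$ is a (time-dependent) first-order operator in $k$ rather than the plain modulus of $y$. Since $\eta$ acts only on the particle space $\chp$, it commutes with $\chi_{b_\e\ge ct^\al}$, so the quantity to bound is $\|\chi_{b_\e\ge ct^\al}\,\eta g\|_{L^2(\R^3;\chp)}$, and a natural first move is to reduce this to the corresponding estimate with $|y|$ in place of $b_\e$.

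To do so, I would insert $\one=\chi_{|y|\le c''t^\al}+\chi_{|y|\ge c''t^\al}$ (with $c''<c/2$) between $\chi_{b_\e\ge ct^\al}$ and $\eta g$. The crossed term $\chi_{b_\e\ge ct^\al}\chi_{|y|\le c''t^\al}$ is $\CO(t^{-(\al-\kappa)})$ by Lemma \ref{lem:local-est} — or $\CO(t^{-m(\al-\kappa)})$ for every $m$ if one uses the smooth regularization $\om_\e=\sqrt{\om^2+\e^2}$, which is harmless since $\al>\kappa$ — while the remaining term is bounded by $\|\chi_{|y|\ge c''t^\al}\,\eta g\|_{L^2(\R^3;\chp)}$ because $\chi_{b_\e\ge ct^\al}$ is a contraction. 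It then remains to establish $\|\chi_{|y|\ge c't^\al}\,\eta g\|_{L^2(\R^3;\chp)}\lesssim t^{-(3/2+\mu)\al}$ up to a logarithmic loss at $\mu=1/2$, which is \cite[Lemma 3.1]{BoFaSig}.

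For the latter, decompose $\eta g$ according to the photon frequency $\om=|k|$ at a scale $\rho>0$: $\eta g=\chi_{\om\le\rho}\eta g+\chi_{\om\ge\rho}\eta g$. The low-frequency piece is controlled by Hypothesis \eqref{g-est} with $|\al|=0$ alone, using that $\chi_{|y|\ge c't^\al}$ is a contraction and that $\eta$ is bounded: $\|\chi_{\om\le\rho}\eta g\|_{L^2(\R^3;\chp)}\lesssim(\int_{|k|\le\rho}|k|^{2\mu}\xi(k)^2\,d^3k)^{1/2}\lesssim\rho^{\mu+3/2}$. For the high-frequency piece, one uses that on the range of $\chi_{|y|\ge c't^\al}$ one has $|y|\ge c't^\al/2$, so that $\chi_{|y|\ge c't^\al}=\bigl(\chi_{|y|\ge c't^\al}\langle y\rangle^{-2}\bigr)\langle y\rangle^2$ with the first factor of operator norm $\lesssim t^{-2\al}$, while $\langle y\rangle^2=1-\Delta_k$ acting on $\chi_{\om\ge\rho}g$ is controlled, via \eqref{g-est} for $|\al|\le2$ together with the commutators $[\Delta_k,\chi_{\om\ge\rho}]$, by $\lesssim\rho^{\mu-1/2}$ (with a factor $\log(1/\rho)$ when $\mu=1/2$, the borderline case for the square-integrability of $|k|^{\mu-2}\xi$). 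Hence the high-frequency contribution is $\lesssim t^{-2\al}\rho^{\mu-1/2}$, and choosing $\rho=t^{-\al}$ balances it against $\rho^{\mu+3/2}$, yielding $\lesssim t^{-(3/2+\mu)\al}$ and therefore the claimed bound for every $\tau<(3/2+\mu)\al$.

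The step requiring the most care is the passage through the origin $k=0$: the vector field $v(k)=k/\om_\e$ generating $b_\e$ degenerates there through the factor $\theta_\e$, so the reduction to $|y|$ loses the factor $t^\kappa$ in the crossed term (whence the role of the condition $\al>\kappa$ and the convenience of the smooth regularization), and in the high-frequency estimate the bound $|k|^{\mu-2}\xi$ on the second $k$-derivatives of $g$ is only borderline square-integrable at $\mu=1/2$ — equivalently, \eqref{g-est} furnishes only two $k$-derivatives of $g$ — which is precisely what forces the strict inequality $\tau<(3/2+\mu)\al$ rather than equality.
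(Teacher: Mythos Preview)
Your reduction to the $|y|$-case via Lemma \ref{lem:local-est} is \emph{not} what the paper does, and as written it does not give the full statement. The crossed term $\chi_{b_\e\ge ct^\al}\chi_{|y|\le c''t^\al}\,\eta g$ is bounded by $\CO(t^{-(\al-\kappa)})\|\eta g\|_{L^2}$, hence contributes $t^{-(\al-\kappa)}$ to the final bound. But for every $\mu\ge -\tfrac12$ one has $(\tfrac32+\mu)\al\ge\al>\al-\kappa$, so this term alone prevents you from reaching any $\tau$ with $\al-\kappa\le\tau<(\tfrac32+\mu)\al$. Invoking the smooth regularization $\om_\e=\sqrt{\om^2+\e^2}$ does cure this, but then you are proving the lemma for a different $b_\e$ than the one fixed in \eqref{be}.

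The paper's intended argument (``the proof is the same'' as \cite[Lemma~3.1]{BoFaSig}) does not pass through $|y|$ at all: one runs the frequency splitting directly with $b_\e$. The low-frequency piece is unchanged, since $\chi_{b_\e\ge ct^\al}$ is a contraction. For the high-frequency piece one writes $\chi_{b_\e\ge ct^\al}=\bigl(\chi_{b_\e\ge ct^\al}\langle b_\e\rangle^{-2}\bigr)\langle b_\e\rangle^2$, the first factor being $\CO(t^{-2\al})$ by functional calculus, and then estimates $\|\langle b_\e\rangle^2\chi_{\om\ge\rho}\,\eta g\|$. Since $b_\e$ is a first-order differential operator in $k$ whose coefficients $\theta_\e\nabla\omega$ are bounded uniformly in $\e$ (indeed $|\theta_\e|\le 1$), the quantity $b_\e^2\chi_{\om\ge\rho}\,\eta g$ is controlled by the same second-derivative bounds \eqref{g-est} that control $\langle y\rangle^2\chi_{\om\ge\rho}\,\eta g$, yielding $\lesssim\rho^{\mu-1/2}$ exactly as in your high-frequency step. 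The balance $\rho=t^{-\al}$ then gives the claim without any loss of the form $t^{\kappa}$. In short: keep your frequency-splitting argument, but apply it to $b_\e$ directly rather than detouring through Lemma \ref{lem:local-est}.
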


\bigskip

\section{Technicalities}\label{sec:tech2}

In this appendix we prove technical statements used in the main text. Most of the results we present here are close to known ones. We begin with the following standard result, which was used implicitly at several places.
\begin{lemma}\sl \label{a21}
Let $a,b$ be two self-adjoint operators on $\mathfrak{h}$ with $b \geq 0$, $D (b) \subset D (a)$ and $\Vert a \varphi \Vert \leq \Vert b \varphi \Vert$ for all $\varphi \in D (b)$. Then $D ( \d\Gamma(b) ) \subset D( \d\Gamma(a) )$ and $\Vert \d \Gamma(a) \Phi \Vert \leq \Vert \d \Gamma(b) \Phi \Vert$ for all $\Phi \in D ( \d\Gamma(b) )$.
\end{lemma}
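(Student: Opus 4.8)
The plan is to reduce the claim to a quadratic-form inequality on each fixed $n$-photon sector and to prove the latter by an explicit algebraic factorization. Since $a$ and $b$ are self-adjoint, both $\d\Gamma(a)$ and $\d\Gamma(b)$ are obtained as closures of the operators $\sum_{j}1\otimes\cdots\otimes a\otimes\cdots\otimes1$ (resp.\ with $b$) defined on the algebraic span of symmetric tensors of vectors in $D(b)\subset D(a)$, and both commute with the number operator $N=\d\Gamma(\one)$. Hence it will suffice to show, for every $n\ge1$ and every symmetric algebraic tensor $\psi$ of vectors in $D(b)$, that $\|\d\Gamma(a)\psi\|\le\|\d\Gamma(b)\psi\|$; since $\d\Gamma(a)$ is closed and $\d\Gamma(b)$ self-adjoint, this inequality then propagates to every $\psi\in D(\d\Gamma(b))$ by approximation from the core (the bound forces the limit $\d\Gamma(a)\psi$ to exist), which simultaneously yields $D(\d\Gamma(b))\subset D(\d\Gamma(a))$ and the stated norm bound. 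The $n=0$ sector is trivial.

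\textbf{The algebraic heart.} On $\otimes^{n}\mathfrak{h}\supset\otimes_{s}^{n}\mathfrak{h}$ I would write $A_{j}$, $B_{j}$ for $a$, resp.\ $b$, inserted in the $j$-th factor with the identity elsewhere, so that $\d\Gamma(a)=\sum_{j}A_{j}$ and $\d\Gamma(b)=\sum_{j}B_{j}$ on the $n$-sector, and observe that $A_{j},A_{l}$ (and $B_{j},B_{l}$) commute for $j\ne l$. Thus the target $\d\Gamma(a)^{2}\le\d\Gamma(b)^{2}$ reads
\begin{equation*}
\Big(\sum_{j}B_{j}\Big)^{2}-\Big(\sum_{j}A_{j}\Big)^{2}=\sum_{j}\big(B_{j}^{2}-A_{j}^{2}\big)+2\sum_{j<l}\big(B_{j}B_{l}-A_{j}A_{l}\big)\ \ge\ 0 .
\end{equation*}
For the diagonal part I would use that $\|a\varphi\|\le\|b\varphi\|$ on $D(b)$ means $a^{2}\le b^{2}$ as forms, so $B_{j}^{2}-A_{j}^{2}\ge0$. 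For the off-diagonal part the idea is to first upgrade $a^{2}\le b^{2}$, i.e.\ $|a|^{2}\le b^{2}$, to $|a|\le b$ by operator monotonicity of the square root, whence $-b\le-|a|\le a\le|a|\le b$, so $b-a\ge0$ and $b+a\ge0$; then, for $j<l$, $B_{j}B_{l}-A_{j}A_{l}$ is $b\otimes b-a\otimes a$ inserted in factors $j,l$, and the factorization
\begin{equation*}
b\otimes b-a\otimes a=\tfrac12\big[(b-a)\otimes(b+a)+(b+a)\otimes(b-a)\big]
\end{equation*}
together with positivity of the tensor product of two positive self-adjoint operators gives $b\otimes b-a\otimes a\ge0$, hence $B_{j}B_{l}-A_{j}A_{l}\ge0$. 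Summing the terms yields the sector inequality.

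\textbf{Main obstacle.} The only step that is not routine bookkeeping is the chain $a^{2}\le b^{2}\ \Rightarrow\ |a|\le b\ \Rightarrow\ -b\le a\le b$ (via operator monotonicity of the square root) together with the factorization of $b\otimes b-a\otimes a$; this is exactly where the hypothesis $b\ge0$ is used, and the statement indeed fails without it. I would also be mildly careful that $b\pm a$, a priori defined only on $D(b)$, enters only through its quadratic form, so no self-adjointness of $b\pm a$ is needed, and that tensor-product positivity for unbounded operators is justified through the spectral theorem. Everything else---reduction to $n$-photon sectors, the choice of core, and the closure argument delivering the domain inclusion---is standard.
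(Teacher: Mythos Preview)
The paper states this lemma as a ``standard result'' and gives no proof, so there is nothing to compare against. Your argument is correct and self-contained.

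A few remarks for polish. The key chain $a^{2}\le b^{2}\Rightarrow |a|\le b$ is the Heinz--L\"owner inequality (operator monotonicity of $t\mapsto t^{1/2}$), which does hold for unbounded nonnegative self-adjoint operators in the form sense; combined with $-|a|\le a\le|a|$ this yields $b\pm a\ge0$ as quadratic forms on $D(b)$, which is all you use. For the positivity of $(b-a)\otimes(b+a)$ you do not actually need the spectral theorem for $b\pm a$ (which are only symmetric, not a priori self-adjoint): on a finite sum $\psi=\sum_i u_i\otimes v_i$ with $u_i,v_i\in D(b)$ one has
\[
\langle\psi,(b-a)\otimes(b+a)\,\psi\rangle=\sum_{i,j}\langle u_i,(b-a)u_j\rangle\,\langle v_i,(b+a)v_j\rangle,
\]
and the two Gram-type matrices on the right are positive semidefinite, so their Schur product is as well. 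This is slightly cleaner than invoking ``spectral theorem for tensor products.'' The passage from the algebraic core of symmetric tensors over $D(b)$ to all of $D(\d\Gamma(b))$ via closedness of $\d\Gamma(a)$ is exactly right.
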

 We recall that, given two operators $a,c$ on $\mathfrak{h}$, the operator $\d\Gamma( a , c )$ was defined in \eqref{dG'}, and $\d\check{\Gamma}(a,c) := U \d \Gamma( a , c )$.
\begin{lemma} \label{lem:checkG-ineq}
Let $j = (j_0, j_\infty)$ and $c = \diag (c_0, c_\infty)$, where $j_0, j_\infty, c_0, c_\infty, c_1, c_2$ are operators on $\mathfrak{h}$.
 Furthermore, assume that $j_0^* j_0 + j_\infty^* j_\infty \le 1$. Then we have the relations
 \begin{align} \label{checkG-ineq'}
| \lan \hat\phi, \d \check{\Gamma} (j, c) \psi\ran | &\le  \| \d \Gamma( |c_0| )^{\frac12} \otimes \one \hat \phi \| \| \d {\Gamma} ( |c_0| )^{\frac12} \psi \| \notag \\
&+ \| \one \otimes \d \Gamma( |c_\infty| )^{\frac12} \hat \phi \| \| \d {\Gamma} ( |c_\infty| )^{\frac12} \psi \|,\\
 \label{eq:est-dG(j,c1c2)}
| \langle u , \d \Gamma ( j , c_1  c_2) v \rangle | &\le \| \d\Gamma( c_1 c_1^* )^{\frac12} u \| \| \d\Gamma( c_2^* c_2 )^{\frac12} v \|.
\end{align}
\end{lemma}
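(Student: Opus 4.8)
The plan is to reduce both estimates to the $n$-photon sectors and then apply the Cauchy--Schwarz inequality twice: once ``horizontally'' on each fixed sector, after factoring the operator occupying the distinguished slot in the definition \eqref{dG'}, and once ``vertically'' in the sum over $n$. First I would record two elementary facts. By the hypothesis $j_0^* j_0 + j_\infty^* j_\infty \le \one$ the map $j : h \mapsto (j_0 h , j_\infty h)$ from $\fh$ to $\fh \oplus \fh$ is a contraction, so $\|j\| \le 1$ and $\|j^*\| \le 1$; and for any operators $A_1 , \dots , A_n$ one has $\| (A_1 \otimes \cdots \otimes A_n) w \| \le \|A_1\| \cdots \|A_{n-1}\| \, \| (\one \otimes \cdots \otimes \one \otimes A_n) w \|$, obtained by peeling off one tensor factor at a time. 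These two facts are exactly what lets us discard, at no cost, the $n-1$ factors of $j$ sitting in the ``off'' slots of $\d\Gamma(j,\cdot)$.

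For \eqref{eq:est-dG(j,c1c2)} I would use that, on $\otimes_s^n \fh$, the definition \eqref{dG'} together with the permutation symmetry of $u^{(n)}$ and $v^{(n)}$ (both are vectors of bosonic Fock spaces) gives $\langle u^{(n)} , \d\Gamma( j , c_1 c_2 ) v^{(n)} \rangle = n \langle u^{(n)} , ( j^{\otimes (n-1)} \otimes c_1 c_2 ) v^{(n)} \rangle$, so that it suffices to move the distinguished slot to the last position. Writing $j^{\otimes (n-1)} \otimes c_1 c_2 = \bigl( (j^*)^{\otimes(n-1)} \otimes c_1^* \bigr)^* \bigl( \one^{\otimes(n-1)} \otimes c_2 \bigr)$ and applying Cauchy--Schwarz on the $n$-sector, then the two facts above, bounds the term by $\| ( \one^{\otimes(n-1)} \otimes c_1^* ) u^{(n)} \| \, \| ( \one^{\otimes(n-1)} \otimes c_2 ) v^{(n)} \|$. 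Summing over $n$ with the Cauchy--Schwarz inequality in $\ell^2$ and recognising $\sum_n n \| ( \one^{\otimes(n-1)} \otimes c_1^* ) u^{(n)} \|^2 = \langle u , \d\Gamma( c_1 c_1^* ) u \rangle$, and likewise for the $c_2$ factor, yields \eqref{eq:est-dG(j,c1c2)}. All manipulations are first performed for finite-particle vectors and then extended by a routine density/monotone-convergence argument, and a polar decomposition of the $c_i$ is used to make sense of $\d\Gamma(c_1 c_1^*)^{1/2}$ and $\d\Gamma(c_2^* c_2)^{1/2}$.

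For \eqref{checkG-ineq'} I would run the same argument after one extra bookkeeping step. Since $c = \diag(c_0,c_\infty)$, the operator $U \d\Gamma(j,c)$ splits as $T_0 + T_\infty$, where in $T_0$ the distinguished slot carries $c_0$ and is sent by $U$ into the first Fock factor, $\Gamma(\fh) \otimes \one$, while in $T_\infty$ it carries $c_\infty$ and is sent into $\one \otimes \Gamma(\fh)$; the remaining $n-1$ ``$j$-slots'' are distributed between the two factors by $U$ according to the $(j_0,j_\infty)$ splitting, but always as contractions. Applying the argument of the previous paragraph to $T_0$ and $T_\infty$ separately — writing $c_0 = |c_0|^{1/2} (\operatorname{sgn} c_0) |c_0|^{1/2}$ and similarly for $c_\infty$, so that both the bra and ket factors produce $\d\Gamma(|c_0|)$, respectively with a $\otimes \one$ or $\one \otimes$ on the bra side — produces exactly the two terms on the right-hand side of \eqref{checkG-ineq'}.

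The step I expect to demand the most care is precisely this $U$-sorting bookkeeping for \eqref{checkG-ineq'}: one must verify, sector by sector, that $U$ really maps the distinguished-slot contribution cleanly into a single Fock factor and that the symmetrisation identities survive the sorting — which they do, because each of $\Gamma(\fh) \otimes \one$ and $\one \otimes \Gamma(\fh)$ is again a bosonic Fock space. Together with the attendant domain questions for the unbounded operators $\d\Gamma(|c_\#|)^{1/2}$, this is routine but must be handled honestly; the analytic content — two applications of Cauchy--Schwarz and the contractivity of $j$ — is entirely elementary.
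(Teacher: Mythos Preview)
Your proposal is correct and follows essentially the same strategy as the paper: split $c=\diag(c_0,c_\infty)$ into its two diagonal pieces, use $\|j\|\le 1$ to discard the off-slot factors, and apply Cauchy--Schwarz twice (once per slot, once in the sum). The paper does not use your permutation-symmetry shortcut reducing $\sum_{l=1}^n$ to $n$ times one term; it keeps the sum over $l$ and Cauchy--Schwarzes it directly, which is equivalent.

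The one difference worth noting concerns the bookkeeping you flag as the most delicate step. Rather than tracking how $U$ distributes the slots into $\Gamma(\fh)\otimes\Gamma(\fh)$, the paper simply sets $\tilde\phi := U^*\hat\phi \in \Gamma(\fh\oplus\fh)$ and works entirely on the single Fock space $\Gamma(\fh\oplus\fh)$. There the decomposition is just $\d\Gamma(j,c)=\d\Gamma(j,i_0 c_0)+\d\Gamma(j,i_\infty c_\infty)$ with $i_0 b:=\diag(b,0)$, $i_\infty b:=\diag(0,b)$, and the identification with the right-hand side of \eqref{checkG-ineq'} is the single algebraic identity $U^*\bigl(\d\Gamma(|c_0|)\otimes\one\bigr)U=\d\Gamma(i_0|c_0|)$ (and the analogous one for $c_\infty$). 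This pullback eliminates all of the ``$U$-sorting'' bookkeeping you anticipate; you may find it a cleaner way to organise the same argument.
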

\begin{proof}
 Let $\tilde \phi = U^* \hat \phi$ and for an operator $b$ on $\fh$ define operators $i_0 b:= \diag (b , 0)$ and $i_\infty b:= \diag (0, b)$ on $\fh \oplus \fh$. Since $U^* \d \Gamma( |c_0| )^{\frac12} \otimes \one U = \d \Gamma( i_0 |c_0|  )^{\frac12}$ and $U^* \one \otimes \d \Gamma( |c_\infty| )^{\frac12} U = \d \Gamma( i_\infty |c_\infty| )^{\frac12}$, the statement of the lemma is equivalent to
 \begin{align} \label{checkG-ineq''}
| \lan \tilde \phi, \d \Gamma (j, c) \psi\ran | \le & \| \d \Gamma( i_0 |c_0| )^{\frac12} \tilde \phi \| \| \d {\Gamma} ( |c_0| )^{\frac12} \psi \| \notag \\
&+ \| \d \Gamma( i_\infty |c_\infty| )^{\frac12} \tilde \phi \| \| \d {\Gamma} ( |c_\infty| )^{\frac12} \psi \|.
\end{align}
We decompose $\d \Gamma (j, c) = \d \Gamma(j , i_0 c_0) + \d \Gamma( j , i_\infty c_\infty ) $ and estimate each term separately. We have, using that $\|Êj \|Ê\le 1$,
\begin{align*}
| \lan \tilde \phi, \d \Gamma(j , i_0 c_0) \psi\ran | & \le \sum_{l=1}^n | \lan | i_0 c_0 |_l^{\frac12} \tilde \phi ,  |i_0 c_0|_l^{\frac12} \psi \ran | ,
\end{align*}
 where $|i_0 c_0|_l := \one \otimes \cdots \otimes \one \otimes i_0 |c_0| \otimes \one \otimes \cdots \otimes \one$, with the operator $|i_0 c_0|$ appearing in the $l^{\text{th}}$ component of the tensor product. By the Cauchy-Schwarz inequality, we obtain
\begin{align*}
| \lan \tilde \phi, \d \Gamma(j , i_0 c_0) \psi\ran | &\le \sum_{l=1}^n \| | i_0 c_0 |_l^{\frac12} \tilde \phi \| \| | i_0 c_0 |_l^{\frac12} \psi \|\le \Big ( \sum_{l=1}^n \| | i_0 c_0 |_l^{\frac12} \tilde \phi \|^2 \Big )^{\frac12} \Big ( \sum_{l=1}^n \| | i_0 c_0 |_l^{\frac12} \psi \|^2 \Big )^{\frac12} \notag \\
 &= \| \d \Gamma ( |i_0 c_0| )^{\frac12} \tilde \phi \| \| \d \Gamma ( |i_0 c_0| )^{\frac12} \psi \|.
\end{align*}
Since $\| \d \Gamma ( |i_0 c_0| )^{\frac12} \psi \|_{ \cF( \mathfrak{h} \oplus \mathfrak{h} ) } = \| \d \Gamma ( |c_0| )^{\frac12} \psi \|_{ \cF( \mathfrak{h} ) }$, we obtain the first term in the r.h.s. of \eqref{checkG-ineq''}. The second one is obtained exactly in the same way.
\eqref{eq:est-dG(j,c1c2)} can be proven in a similar manner.
\end{proof}

In the following lemma, as in the main text, the operator $j_\infty$ on $L^2( \R^3 )$ is of the form $j_\infty = \chi_{b_\e \ge c t^\alpha }$, where, recall, $b_\e =\frac{1}{2}(v_\e(k) \cdot y + {\hbox{ h.c.}})$, where $v_{\e}(k) := \theta_\e \nabla \omega $, $\theta_\e = \frac{\omega}{\omega+\e}$, and $\e = t^{-\kappa}$, $\kappa>0$.

\begin{lemma}\label{lem:Gjinfty-est}
Assume $\alpha + \kappa > 1$. Let $u \in \mathcal{F}$. Then $\big \| ( \Gamma(j_\infty) - \one ) e^{ - i H_f t } u \big \| \to 0,$ as $t \to \infty$.
\end{lemma}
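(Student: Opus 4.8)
The plan is to reduce the assertion, by conjugating with the free photon evolution, to a one-photon strong-limit statement, and then to prove the latter by a second-moment (Chebyshev) estimate expressing the ballistic escape of the freely evolving photon.

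\emph{Step 1: reduction to one photon.} Since $H_f=\d\Gamma(\om)$ and $\Gamma$ is multiplicative, $e^{iH_ft}\Gamma(\chi)e^{-iH_ft}=\Gamma(e^{i\om t}\chi e^{-i\om t})$ for every bounded one-photon operator $\chi$; together with the unitarity of $e^{-iH_ft}$ this gives
\[
\big\|(\Gamma(j_\infty)-\one)e^{-iH_ft}u\big\|=\big\|(\Gamma(\chi_t)-\one)u\big\|,\qquad \chi_t:=e^{i\om t}j_\infty e^{-i\om t}.
\]
Using $e^{i\om t}b_\e e^{-i\om t}=b_\e+\theta_\e t$ (as already noted in the text) and functional calculus, $\chi_t=\chi_{b_\e+\theta_\e t\ge ct^\alpha}$. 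If I can show $\chi_t\to\one$ strongly on $\fh$, then — because $0\le\chi_t\le\one$, so $\|\Gamma(\chi_t)\|\le1$, and tensor products of uniformly bounded strongly convergent operators converge strongly — an $\eps/3$ argument on $\cF_{\mathrm{fin}}$ yields $\Gamma(\chi_t)\to\one$ strongly on $\cF$, which is the claim. So I am left to prove that $\chi_{b_\e+\theta_\e t\le ct^\alpha}h\to0$ in $\fh$ for every $h\in\fh$ (here $\one-\chi_{\cdot\ge1}=\chi_{\cdot\le1}$).

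\emph{Step 2: the one-photon estimate.} Conjugating back, $\|\chi_{b_\e+\theta_\e t\le ct^\alpha}h\|=\|\chi_{b_\e\le ct^\alpha}e^{-i\om t}h\|$, and since $\|\chi_{b_\e\le ct^\alpha}\|\le1$ and $C_0^\infty(\R^3\setminus\{0\})$ is dense in $\fh=L^2(\R^3)$, it suffices to take $h\in C_0^\infty(\R^3\setminus\{0\})$, normalized, with $\supp h\subset\{|k|\ge r_0\}$ for some fixed $r_0>0$. On this support $0\le1-\theta_\e\le\e/r_0=t^{-\kappa}/r_0$, and $b_\e=iv_\e\cdot\nabla+\tfrac i2(\nabla\cdot v_\e)$ has coefficients bounded, with all $k$-derivatives, uniformly in $t\ge1$; hence $\|b_\e h\|\le C_h$ and $\|(1-\theta_\e)h\|\le t^{-\kappa}/r_0$ with $C_h$ independent of $t$, and $g_t:=e^{-i\om t}h\in D(b_\e)$ with $b_\e g_t=e^{-i\om t}(b_\e+\theta_\e t)h$. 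Using $e^{i\om t}b_\e e^{-i\om t}=b_\e+\theta_\e t$ I compute the mean and variance of $b_\e$ in the state $g_t$:
\[
\mu_t:=\lan g_t,b_\e g_t\ran=\lan h,(b_\e+\theta_\e t)h\ran\ge t\big(1-t^{-\kappa}/r_0\big)-C_h ,
\]
so $\mu_t-ct^\alpha\ge t/2$ for $t$ large (here $\alpha<1$, $\kappa>0$), while, using $\mathrm{Var}_{g_t}(b_\e)=\mathrm{Var}_h(b_\e+\theta_\e t)\le\|(b_\e+\theta_\e t-t)h\|^2$,
\[
\mathrm{Var}_{g_t}(b_\e)\le\big\|\big(b_\e-t(1-\theta_\e)\big)h\big\|^2\le 2C_h^2+2t^{2-2\kappa}/r_0^2 .
\]
Since $\chi_{b_\e\le ct^\alpha}\le\mathbf 1_{(-\infty,ct^\alpha]}(b_\e)\le\mathbf 1_{\{|b_\e-\mu_t|\ge\mu_t-ct^\alpha\}}(b_\e)$ once $\mu_t>ct^\alpha$, Chebyshev's inequality gives
\[
\big\|\chi_{b_\e\le ct^\alpha}g_t\big\|^2\le\frac{\mathrm{Var}_{g_t}(b_\e)}{(\mu_t-ct^\alpha)^2}\le\frac{2C_h^2+2t^{2-2\kappa}/r_0^2}{(t/2)^2}\longrightarrow0 ,
\]
which is the required one-photon limit.

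\emph{Where the difficulty sits.} The point to watch is that $b_\e$ is neither bounded nor commutes with multiplication operators, so one cannot localize $h$ in the photon energy $\om$ and push that localization through $b_\e$ by a soft argument. I avoid this by (i) choosing the test vector $h$ supported away from $k=0$ from the outset — legitimate since such vectors are dense — which makes $\theta_\e$ uniformly close to $1$ and $b_\e h$ uniformly bounded, and (ii) replacing any attempt at a pointwise operator inequality by the second-moment estimate above, which only needs the expectation of $b_\e$ in the freely evolved state to grow linearly in $t$ with fluctuations $o(t)$. The remaining ingredients — the multiplicativity/tensor bookkeeping of Step 1 and the uniform bounds on the coefficients of $b_\e$ on $\supp h$ — are routine.
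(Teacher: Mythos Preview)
Your proof is correct and takes a genuinely different route from the paper's. Both arguments begin by conjugating with the free evolution to reduce to the operator $\chi_t=e^{i\om t}j_\infty e^{-i\om t}=\chi_{b_\e+\theta_\e t\ge ct^\alpha}$. From there the paper passes to the second-quantized bound $\|(\Gamma(\chi_t)-\one)u\|\le\|\d\Gamma(\one-\chi_t)u\|$, then uses a Helffer--Sj{\"o}strand approximation $\chi_{(b_\e+\theta_\e t)/ct^\alpha\le1}=\chi_{(b_\e+t)/ct^\alpha\le1}+\CO(t^{-(\alpha+\kappa-1)})$ (this is where the hypothesis $\alpha+\kappa>1$ enters) together with the pointwise operator bound $\chi_{(b_\e+t)/ct^\alpha\le1}\lesssim\lan y\ran/t$, lifted to $\d\Gamma$ on the dense domain $D(\d\Gamma(\lan y\ran))$. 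You instead reduce to the one-photon strong limit $\chi_t\to\one$ and lift it to $\Gamma$ by sectorwise strong convergence; the one-photon statement you obtain from a mean--variance (Chebyshev) estimate on the dense set $C_0^\infty(\R^3\setminus\{0\})$. Your argument is more elementary---no Helffer--Sj{\"o}strand and no operator approximation---and in fact only uses $\alpha<1$ and $\kappa>0$, not the full hypothesis $\alpha+\kappa>1$. What the paper's approach buys in return is an explicit rate, $\|(\Gamma(j_\infty)-\one)e^{-iH_ft}u\|\lesssim t^{-1}\|\d\Gamma(\lan y\ran)u\|$, on a dense domain, whereas your density argument yields only qualitative convergence with no uniform rate.
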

\begin{proof}
Assume that $u \in D( \d \Gamma( \langle y \rangle ) )$. Using unitarity of $e^{ - i H_f t}$ and the fact that $e^{ - i H_f t} = \Gamma( e^{-i \omega t})$, we obtain
\begin{align}
\big \| ( \Gamma(j_\infty) - \one ) e^{ - i H_f t } u \big \| = \big \| ( \Gamma( e^{ i \om t} j_\infty e^{ - i \om t} ) - \one ) u \big \| \le \big \| \d\Gamma( e^{ i \om t} \bar j_\infty e^{ - i \om t} ) u \big \|, \label{eq:c1}
\end{align}
where $\bar j_\infty = \one - j_\infty$.
Using the identity $e^{it\om} b_\e e^{-it\om} = b_\e + \theta_\e t$ and the Helffer-Sj{\"o}strand formula show that
\begin{align*}
e^{i t \om} \chi \Big ( \frac{ b_\e }{ c t^\alpha } \le 1  \Big ) e^{ - i t \om} = \chi \Big ( \frac{ b_\e + \theta_\e t }{ c t^\alpha } \le 1  \Big ).
\end{align*}
Since $\al+\kappa >1$, we have $\chi_{ \frac{b_\e + \theta_\e t }{ c t^\alpha } \le 1}= \chi_{ \frac{b_\e +  t }{ c t^\alpha } \le 1 } + \CO(t^{-(\al+\kappa -1) })$. Due to $\frac{- 2b_\e}{ t } \ge 1$ on $\supp\chi_{ \frac{b_\e +  t }{ c t^\alpha } \le 1 }$ for $t$ sufficiently large, we have
\begin{align*}
\|Ê\chi_{ \frac{b_\e +  t }{ c t^\alpha } \le 1 } \phi \|Ê\le \| \frac{- 2b_\e}{ t } \chi_{ \frac{b_\e +  t }{ c t^\alpha } \le 1 } \phi \| \le \| \frac{ 2\lan y\ran}{ t }  \phi \|,
\end{align*}
and therefore
\begin{align*}
\Big \|Ê\d \Gamma \big ( \chi_{ \frac{ b_\e + \theta_\e t }{ c t^\alpha } \le 1} \big ) u \Big \| \le \frac{2}{ t } \big \| \d \Gamma \big ( \lan y\ran \big ) u \big \|.
\end{align*}
Together with \eqref{eq:c1}, this shows that $\big \| ( \Gamma(j_\infty) - \one ) e^{ - i H_f t } u \big \| \to 0 ,$ for $u \in D\big (\d \Gamma ( \lan y\ran  )\big) $. Since $D\big (\d \Gamma ( \lan y\ran  )\big)$ is dense in $\mathcal{F}$, this concludes the proof.
\end{proof}

\begin{lemma}\label{lm:domainGS}
Assume \eqref{g-est} with $\mu > -1/2$ and \eqref{eta-bnd}. Then $\mathrm{Ran}(P_{\mathrm{gs}}) \subset \D( N^{\frac12} ) \cap \D ( \d \Gamma( b_\e^2 )^{\frac12} )$, in other words, the operators $N^{\frac12} P_{\mathrm{gs}} $ and $ \d \Gamma ( b_\e^2 )^{\frac12} P_{\mathrm{gs}}$  are bounded. Moreover, we have $\| \d \Gamma ( b_\e^2 )^{\frac12} P_{\mathrm{gs}} \| = \CO( t^\kappa )$.
\end{lemma}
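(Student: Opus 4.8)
The plan is to use the pull-through (first-order commutator) formula for the ground state. For the linear coupling \eqref{I} one computes $[a(k),H]=\omega(k)\,a(k)+g(k)\otimes\one$, which together with $H\Phi_{\mathrm{gs}}=E_{\mathrm{gs}}\Phi_{\mathrm{gs}}$ and $E_{\mathrm{gs}}=\inf\sigma(H)$ gives
\begin{equation*}
a(k)\Phi_{\mathrm{gs}}=-(H-E_{\mathrm{gs}}+\omega(k))^{-1}(g(k)\otimes\one)\Phi_{\mathrm{gs}},\qquad\|(H-E_{\mathrm{gs}}+\omega(k))^{-1}\|\le\omega(k)^{-1}.
\end{equation*}
Next I would invoke the standard identities $\langle\Psi,N\Psi\rangle=\int\|a(k)\Psi\|^{2}\,\d k$ and, viewing $b_\e$ as the first-order differential operator $b_\e=iv_\e\!\cdot\!\nabla_k+\tfrac{i}{2}(\nabla_k\!\cdot\!v_\e)$ in $k$ with $v_\e=\theta_\e\nabla\omega=k/(|k|+\e)$, the identity $\langle\Psi,\d\Gamma(b_\e^{2})\Psi\rangle=\int\|(b_\e a)(k)\Psi\|^{2}\,\d k$; both are to be read as: finiteness of the right-hand side entails $\Psi\in D(N^{1/2})$ (resp. $\Psi\in D(\d\Gamma(b_\e^{2})^{1/2})$) with equality. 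For $N^{1/2}P_{\mathrm{gs}}$ this is immediate: $\|a(k)\Phi_{\mathrm{gs}}\|\le\omega(k)^{-1}\|g(k)\|_{\chp}$, so $\langle N\rangle_{\Phi_{\mathrm{gs}}}\lesssim\int|k|^{2\mu-2}\xi(k)^{2}\,\d k<\infty$ by \eqref{g-est}, the integrability at $k=0$ being exactly where $\mu>-1/2$ enters (decay at infinity from the ultraviolet cutoff $\xi$); hence $\|N^{1/2}P_{\mathrm{gs}}\|\lesssim1$.

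For $\d\Gamma(b_\e^{2})^{1/2}P_{\mathrm{gs}}$ I would differentiate the pull-through formula in $k$ and use $(H-E_{\mathrm{gs}}+\omega)^{-1}(g(k)\otimes\one)\Phi_{\mathrm{gs}}=-a(k)\Phi_{\mathrm{gs}}$ to get
\begin{equation*}
\nabla_k[a(k)\Phi_{\mathrm{gs}}]=-(H-E_{\mathrm{gs}}+\omega)^{-1}(\nabla\omega)a(k)\Phi_{\mathrm{gs}}-(H-E_{\mathrm{gs}}+\omega)^{-1}(\nabla_k g(k)\otimes\one)\Phi_{\mathrm{gs}},
\end{equation*}
so that $(b_\e a)(k)\Phi_{\mathrm{gs}}$ is a sum of three terms, and the decisive point is that each carries a factor $\omega_\e(k)^{-1}=\theta_\e(k)/\omega(k)\le\e^{-1}=t^{\kappa}$: in $iv_\e\!\cdot\!(H-E_{\mathrm{gs}}+\omega)^{-1}(\nabla\omega)a(k)\Phi_{\mathrm{gs}}$ one has $|v_\e|\,\omega^{-1}=\theta_\e/\omega=\omega_\e^{-1}$, in $\tfrac{i}{2}(\nabla_k\!\cdot\!v_\e)a(k)\Phi_{\mathrm{gs}}$ one has $|\nabla_k\!\cdot\!v_\e|\lesssim\omega_\e^{-1}$, and in $-iv_\e\!\cdot\!(H-E_{\mathrm{gs}}+\omega)^{-1}(\nabla_k g(k)\otimes\one)\Phi_{\mathrm{gs}}$ again $|v_\e|\,\omega^{-1}=\omega_\e^{-1}$. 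The first two terms are thus $\lesssim t^{\kappa}\|a(k)\Phi_{\mathrm{gs}}\|$ in norm, whose square integrates to $\lesssim t^{2\kappa}\langle N\rangle_{\Phi_{\mathrm{gs}}}=\CO(t^{2\kappa})$. For the third term I would write $\nabla_k g(k)=(\nabla_k g(k)\eta)\eta^{-1}$, use \eqref{g-est} with $|\alpha|=1$ (giving $\|\nabla_k g(k)\eta\|_{\chp}\lesssim|k|^{\mu-1}\xi$) and \eqref{eta-bnd} with $\Phi_{\mathrm{gs}}=f(H)\Phi_{\mathrm{gs}}$, $f\in\mathrm{C}_0^\infty((-\infty,\Sigma))$, $f\equiv1$ near $E_{\mathrm{gs}}$ (giving $\|(\eta^{-1}\otimes\one)\Phi_{\mathrm{gs}}\|\lesssim1$), to obtain $\|(\nabla_k g(k)\otimes\one)\Phi_{\mathrm{gs}}\|\lesssim|k|^{\mu-1}\xi$; hence the third term is $\lesssim t^{\kappa}|k|^{\mu-1}\xi$, with square integral $\lesssim t^{2\kappa}\int|k|^{2\mu-2}\xi^{2}\,\d k=\CO(t^{2\kappa})$ (using $\mu>-1/2$ once more). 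Summing, $\langle\d\Gamma(b_\e^{2})\rangle_{\Phi_{\mathrm{gs}}}=\CO(t^{2\kappa})$, i.e. $\Phi_{\mathrm{gs}}\in D(\d\Gamma(b_\e^{2})^{1/2})$ and $\|\d\Gamma(b_\e^{2})^{1/2}P_{\mathrm{gs}}\|=\CO(t^{\kappa})$.

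The main obstacle is technical rather than structural: one must justify the pull-through formula and its $k$-derivative rigorously — that $k\mapsto a(k)\Phi_{\mathrm{gs}}$ is a genuine $L^{2}$-valued function (true since $\Phi_{\mathrm{gs}}\in D(H)\subset D(N^{1/2})$), that differentiation under the resolvent is legitimate, and that $\langle\d\Gamma(b_\e^{2})\rangle_{\Phi_{\mathrm{gs}}}$ really equals the $k$-integral above — and one must turn the $\eta$-weighted hypothesis \eqref{g-est} on $\nabla_k g$ into the displayed estimate against $\Phi_{\mathrm{gs}}$ (this is where the ordering of $\eta$ and $\nabla_k g$ must be handled, which is harmless in the models at hand). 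All of this is routine in this setting (cf. \cite{BFS2,GLL,BoFaSig}). An alternative, marginally shorter route is to first establish the one-photon quadratic-form inequality $b_\e^{2}\lesssim t^{\kappa}\langle y\rangle^{2}+t^{2\kappa}$ and then reduce to finiteness of $\langle N\rangle_{\Phi_{\mathrm{gs}}}$ (done above) and of $\langle\d\Gamma(\langle y\rangle^{2})\rangle_{\Phi_{\mathrm{gs}}}$, but this last localization bound relies on the same pull-through analysis together with the particle localization \eqref{exp-bnd}.
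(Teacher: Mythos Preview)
Your proposal is correct and follows essentially the same route as the paper: the pull-through formula $a(k)\Phi_{\mathrm{gs}}=-(H-E_{\mathrm{gs}}+\omega)^{-1}g(k)\Phi_{\mathrm{gs}}$ together with $\|(H-E_{\mathrm{gs}}+\omega)^{-1}\|\le\omega^{-1}$ gives the $N^{1/2}$ bound, and for $\d\Gamma(b_\e^{2})^{1/2}$ the key observation in both proofs is that $b_\e$ carries an explicit factor $\omega_\e^{-1}\le t^{\kappa}$. The only cosmetic difference is that the paper writes $b_\e=\omega_\e^{-1}\tfrac{i}{2}(k\!\cdot\!\nabla_k+\nabla_k\!\cdot\!k)-i\omega/(2\omega_\e^{2})$ and bounds the commutator $[(k\!\cdot\!\nabla_k+\nabla_k\!\cdot\!k),(H-E_{\mathrm{gs}}+\omega)^{-1}]$ directly, whereas you differentiate the pull-through identity in $k$ and reassemble $b_\e$ afterwards; these are the same computation organized slightly differently.
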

\begin{proof}
Let $\Phi_{\mathrm{gs}} \in \mathrm{Ran}(P_{\mathrm{gs}})$. The statement of the lemma is equivalent to the properties that
\begin{align}\label{eq:d1}
k \mapsto \| a(k) \Phi_{\mathrm{gs}} \| , \quad k \mapsto \| b_\e a(k) \Phi_{\mathrm{gs}} \| \in L^2( \mathbb{R}^3 ) ,
\end{align}
and that $\| b_\e a(k) \Phi_{\mathrm{gs}} \|_{L^2( \mathbb{R}^3 )} = \CO( t^\kappa )$. The well-known pull-through formula gives
\begin{align*}
a(k) \Phi_{\mathrm{gs}} = - ( H - E_{\mathrm{gs}} + |k| )^{-1} g(k) \Phi_{\mathrm{gs}}.
\end{align*}
Since $\| ( H - E_{\mathrm{gs}} + |k|)^{-1} \| \le |k|^{-1}$ one easily deduces that $\| a(k) \Phi_{\mathrm{gs}} \| \in L^2( \mathbb{R}^3 )$ for any $\mu>-1/2$.
Likewise, using in addition that $b_\e = \omega_\e^{-1} \frac{i}{2} ( k \cdot \nabla_k + \nabla_k \cdot k ) - i \omega / ( 2 \omega_\e^{2} )$, together with
\begin{align*}
 \| [ ( k \cdot \nabla_k + \nabla_k \cdot k ) , ( H - E_{\mathrm{gs}} + |k| )^{-1} ] \| \lesssim \| |k| ( H - E_{\mathrm{gs}} + |k| )^{-2} \| \lesssim |k|^{-1},
\end{align*}
and \eqref{g-est}--\eqref{eta-bnd}, one easily deduces that $\| b_\e a(k) \Phi_{\mathrm{gs}} \|_{L^2( \mathbb{R}^3 ) } = \CO( t^{\kappa} )$ for any $\mu>-1/2$.
\end{proof}

\bigskip

\section*{Supplement I. The wave operators}\label{sec:WO}

\numberwithin{equation}{foo}

\setcounter{theorem}{0}
\setcounter{equation}{0}

In this supplement we briefly review the definition and properties of  the wave operator $\Omega_+$, and establish its relation with $W_+$ in Theorem I.2 below.  Let $\cH_b\equiv \mathcal{H}_{ \mathrm{pp} } (H) \cap \mathds{1}_{ ( - \infty , \Sigma ) }( H )$ be the space spanned by the eigenfunctions of $H$ with the eigenvalues in the interval $(-\infty, \Sigma)$. Define $\tilde{\mathfrak{h}}_0 := \{ h \in L^2( \R^3 ) , \int |h|^2 ( |k|^{-1}+|k|^2 ) dk < \infty \}$. The wave operator $\Omega_+$ on the space $\cH_b \otimes \cF_{ \mathrm{fin} }( \tilde{\mathfrak{h}}_0 )$, is defined by the formula
\begin{align}\label{Om+-def1}
\Omega_+ := \slim_{t \to \infty} e^{ i t H } I ( e^{ - i t H } \otimes  e^{ - i t H_f }  ) . \tag{I.1}
\end{align}
As in \cite{DerGer2,FrGrSchl1,FrGrSchl2,GrZ}, it is easy to show \\ 

\noindent \textbf{Theorem I.1.} \begin{it}
Assume \eqref{g-est} with $\mu \ge -1/2$ and \eqref{eta-bnd}. The wave operator $\Omega_+$  exists on $\cH_b \otimes \cF_{ \mathrm{fin} }( \tilde{\mathfrak{h}}_0 )$ and extends to an isometric map, $\Omega_+ :  \mathcal{H}_{ \mathrm{as} }  \to \mathcal{H} $, on the space of asymptotic states,  $\mathcal{H}_{ \mathrm{as} } := \cH_b \otimes \mathcal{F}.$
\end{it}
\begin{proof}
 Let $h_t(k) := e^{-i t |k|} h(k)$. For $h \in D( \omega^{-1/2} )$, s. t. $\p^\al h \in D( \omega^{|\al|-1/2} ),\ |\al|\le 2$, we define the asymptotic creation and annihilation operators by (see \cite{DerGer2,FrGrSchl1,FrGrSchl2,Ger,GrZ})
\begin{align*}
a^{\#}_{\pm} ( h ) \Phi := \lim_{t \to \pm \infty} e^{ i t H } a^{\#}( h_t ) e^{ - i t H } \Phi ,
\end{align*}
for any $\Phi \in D( |H|^{1/2} )\bigcap {\rm Ran} E_{(-\infty, \Sigma)} (H)$. Here $a^{\#}$ stands for $a$ or $a^*$. To show that $a^{\#}_{\pm} ( h )$ exist (see \cite{FrGrSchl1, GrZ}), we define $a^{\#}_{t} ( h ):=e^{ i t H } a^{\#}( h_t ) e^{ - i t H }$ and compute $a^{\#}_{t'} ( h )- a^{\#}_{t} ( h )=\int_t^{t'}ds\partial_s a^{\#}_{s} ( h )$ and $\partial_s a^{\#}_{s} ( h ) =i e^{i{H} t} G e^{-iHt},\ \quad \mbox{where} \ \quad G:=[H,  a^{\#}( h_s )] - a^{\#} (\om h_t )=\lan g, h_t\ran_{L^2(dk)}$ for $a^{\#}=a^{*}$ and $-\lan  h_t,g\ran_{L^2(dk)}$ for $a^{\#}=a$. Thus the proof of existence  reduces to showing that one-photon terms of the form $\lan \eta g, h_t\ran$ are integrable in $t$. By \eqref{g-est}, we have $\|\lan \eta g, h_t\ran_{L^2(dk)}\|_{\chp}\lesssim (1+t)^{-1-\ve},$ with $0<\ve<\mu+1$, which is integrable. Moreover, as in  \cite{FrGrSchl1, GrZ} one can show that $a^{\#}_{\pm} ( h )$  satisfy  the  canonical commutation relations and relations $a_{\pm} ( h )\Psi=0$,  and
\begin{align}\label{as-fields-prod}
\lim_{t \to \pm \infty} e^{ i t H } a^{\#}( h_{1,t} ) \cdots a^{\#}( h_{n,t} ) e^{ - i t H } \Phi = a^{\#}_{\pm}( h_{1} ) \cdots a^{\#}_{\pm}( h_{n} ) \Phi , \tag{I.2}
\end{align}
 for any $\Psi\in \cH_b,\ h, h_1, \cdots , h_n \in \tilde{\mathfrak{h}}_0$, and any $\Phi \in \mathds{1}_{(-\infty , \Sigma ) }(H)$. We define the wave operator $\Omega^+$ on $\cH_{\mathrm{fin}} $ by
\begin{align}\label{Om+-def2}
\Omega_+ (\Phi \otimes a^*( h_1 ) \cdots a^*( h_n ) \Omega ):= a^*_+( h_1 ) \cdots a^*_+( h_n ) \Phi . \tag{I.3}
\end{align}
Using the  canonical commutation relations, one sees that $\Omega_+$ extends to an isometric map $\Omega_+ : \mathcal{H}^+_{\mathrm{as}} \to \mathcal{H}$. Using the relation $e^{ i t \hat H } (\Phi\otimes a^{\#}( h_{1} ) \cdots a^{\#}( h_{n} )\Om) = (e^{ i t H }\Phi_{\textrm{gs}})\otimes (a^{\#}( h_{1,t} ) \cdots a^{\#}( h_{n,t} ) \Om) $, the definition of $I$ and \eqref{as-fields-prod}, we identify the definition \eqref{Om+-def2} with \eqref{Om+-def1}.
\end{proof}

Recall that $P_{\mathrm{gs}}$ denotes the orthogonal projection onto the ground state subspace of $H$. Let $\bar P_{ \mathrm{gs} } := \one - P_{ \mathrm{gs} }$ and $\bar P_\Omega := \one - P_\Omega$, where, recall, $P_\Omega$ is the projection onto the vacuum sector in $\cF$. Theorem \ref{thm:MVEraAC} and its proof imply the following result. \\ 

\noindent \textbf{Theorem I.2.} \begin{it} Under the conditions of Theorem \ref{thm:MVEraAC}, we have on $\Ran \chi_\Delta(H)$
\begin{align}\label{OmW}
\Omega_+ ( P_{\mathrm{gs}} \otimes \bar P_\Omega )W_+\bar P_{\mathrm{gs}}  +  P_{\mathrm{gs}} =\one . \tag{I.4}
\end{align}
\end{it}
\begin{proof}
Let $\psi_0 \in \Ran \chi_\Delta(H)$. For every $\e''>0$ there is $\del''=\del(\e'')>0$, s.t.
\begin{equation}\label{psi0-appr}
 \|\psi_0-\psi_{0\e''}-P_{\textrm{gs}}\psi_0\|\le \e'', \tag{I.5}
\end{equation}
where $\psi_{0\e''} = \chi_{\Delta_{\e''}}(H)\psi_{0},\  \mbox{with}\ \Delta_{\e'} =[E_{\mathrm{gs}}+\del, a]$. Proceeding as in the proof of Theorem \ref{thm:MVEraAC} with $\psi_{0\e''}$ instead of $\psi_0$, we arrive at (see \eqref{psite7})
\begin{align} \label{psite7_2}
\psi_{0\e''} & = e^{-iHt} I(e^{ - i E_{\mathrm{gs}} t} P_{\mathrm{gs}} \otimes e^{ - i H_f t}\chi_{ ( 0,a-E_{\mathrm{gs}}]}(H_f)) \phi_{0\e'} + \mathcal{O}( \e' ) + C(\e',m) o_t( 1 ) + C(\e')o_m(1), \tag{I.6}
\end{align}
where we choose $\phi_{0\e'}$ such that $\phi_{0,\e'} \in D( \d\Gamma( \lan y \ran ) ) \otimes \cF_{ \mathrm{fin} }( \tilde{\mathfrak{h}}_0 )$ and $ \| W_+ \psi_{0\e''} - \phi_{0\e'} \| \le \e'$. Now using Theorem I.1, we let $t \to \infty$, next $m \to \infty$ to obtain
\begin{align} \label{psite7_3}
\psi_{0\e''} & = \Omega_+ ( P_{\mathrm{gs}} \otimes \chi_{ ( 0,a-E_{\mathrm{gs}}]}(H_f)) \phi_{0\e'} + \mathcal{O}( \e' ). \tag{I.7}
\end{align}
Since $\Omega_+$ is isometric, hence bounded, we can let $\e' \to 0$, which gives
\begin{align} \label{psite7_4}
\psi_{0\e''} & = \Omega_+ ( P_{\mathrm{gs}} \otimes \chi_{ ( 0,a-E_{\mathrm{gs}}]}(H_f)) W_+ \psi_{0\e''}  = \Omega_+ ( P_{\mathrm{gs}} \otimes \bar P_\Omega ) W_+ \bar P_{ \mathrm{gs} } \psi_{0\e''}. \tag{I.8}
\end{align}
Here we used that $\chi_{ ( 0,a-E_{\mathrm{gs}}]}(H_f) = \bar P_\Omega \chi_{ ( 0,a-E_{\mathrm{gs}}]}(H_f) $, together with $\chi_{ ( 0,a-E_{\mathrm{gs}}]}(H_f) W_+ \psi_{0\e''} = W_+ \psi_{0\e''}$ and $\psi_{0\e''} = \bar P_{ \mathrm{gs}} \psi_{0\e''}$. Introducing \eqref{psite7_4} into \eqref{psi0-appr} and letting $\e'' \to 0$, we obtain
\begin{align*}
\psi_0 = \Omega_+ ( P_{\mathrm{gs}} \otimes \bar P_\Omega ) W_+ \bar P_{ \mathrm{gs} } \psi_0 + P_{\mathrm{gs}} \psi_0,
\end{align*}
which gives \eqref{OmW}.
\end{proof}

\bigskip

\section*{Supplement II. Creation and annihilation operators on Fock spaces}\label{sec:crannihoprs}

With each function $f \in \fh$, one associates \emph{creation}  and \emph{annihilation operators} $a(f)$ and $a^*(f)$ defined, $\mbox{for}\  u\in \otimes_s^n\fh,$ as
\begin{equation*}
a^*(f)  : u\ra \sqrt{n+1}  f\otimes_s u\ \quad   \mbox{and}\  \quad  a(f)  : u\ra \sqrt{n}  \lan f, u\ran_\fh,\
\end{equation*}
with $\lan f, u\ran_\fh:=\int \overline{f(k)} u(k, k_1, \ldots, k_{n-1}) \, \d k$. They are unbounded, densely defined operators of $\G(\fh)$, adjoint of each other (with respect to the natural scalar product in $\cF$) and satisfy the \emph{canonical commutation relations} (CCR):
\begin{equation*}
\big[ a^{\#}(f) , a^{\#}(g) \big] = 0 , \qquad \big[ a(f) , a^*(g) \big] = \lan f, g\ran ,
\end{equation*}
where $a^{\#}= a$ or $a^*$. Since $a(f)$ is anti-linear and $a^*(f)$ is linear in $\varphi$, we write formally
\begin{equation*}
a(f) =  \int \overline{f(k)} a(k) \, \d k , \qquad a^*(f) = \int f(k) a^*(k) \, \d k ,
\end{equation*}
where $a (k)$ and $a^*(k)$  obey (again formally) the canonical commutation relations
\begin{equation*}
\big[ a^{\#}(k) , a^{\#}(k') \big] = 0 , \qquad \big[ a(k) , a^*(k') \big] = \delta (k-k') ,
\end{equation*}
Finally, given an operator $b$ acting on the one-photon space, the operator $\d \Gamma( b )$ defined on the Fock space $\cF$ by \eqref{dG} can be written (formally) as $\d \Gamma( b ) : =  \int a^* ( k ) b a ( k ) \, \d k ,$  where $b$ acts on the variable $k$.

The following bounds on  $a(f)$ and $a^*(f)$  are standard (see e.g. \cite{GuSig}). \\ 

\noindent \textbf{Lemma II.1.} \begin{it}
Recall the notation $\| h \|_\omega := \int dk (1+\omega^{-1}) |h(k)|^2$. Let $f \in \mathfrak{h} = L^2( \R^3 )$. The operators $a(f) ( N +1 )^{-1/2}$ and $a^{*} (f) (N+1 )^{-1/2}$ extend to bounded operators on $\mathcal{H}$ satisfying
\begin{gather*}
\big\Vert a(f) (N+1 )^{- \frac{1}{2} } \big\Vert \leq \Vert f \Vert ,   \qquad \big\Vert a^{*} (f) ( N+1 )^{- \frac{1}{2} } \big\Vert \leq \sqrt{2} \Vert f \Vert .
\end{gather*}
If, in addition, $f $ satisfy $\omega^{-1/2} f  \in L^2( \mathbb{R}^3 )$, then the operators $a(f) ( H_f + 1 )^{-1/2}$ and $a^{*} (f) ( H_f + 1 )^{-1/2}$ extend to bounded operators on $\mathcal{H}$ satisfying
\begin{gather*}
\big\Vert a(f) ( H_f + 1)^{- \frac{1}{2} } \big\Vert \leq \big\Vert \omega^{-\frac{1}{2}} f \big\Vert ,  \qquad \big\Vert a^{*} (f) (H_f + 1 )^{- \frac{1}{2} } \big\Vert \leq \Vert f \Vert_\omega .
\end{gather*}
\end{it}

\bigskip

\end{document}